\DeclareMathOperator{\connector}{c}
\DeclareMathOperator{\decomposition}{DT}
\DeclareMathOperator{\size}{size}
\newcommand{\prefix}[0]{\mathit{prefixsum}}
\newcommand{\partitionset}[0]{\mathit{PS}}
\DeclareMathOperator{\dist}{dist}
\DeclareMathOperator{\weight}{w}
\DeclareMathOperator{\vis}{vis}
\DeclareMathOperator{\neighborhood}{N}
\DeclareMathOperator{\portal}{portal}
\DeclareMathOperator{\argmin}{argmin}
\DeclareMathOperator{\degree}{deg}
\DeclareMathOperator{\subtree}{subtree}
\DeclareMathOperator{\proj}{proj}
\newcommand{\projy}[1]{\proj_y(#1)}
\newcommand{\projz}[1]{\proj_z(#1)}
\DeclareMathOperator{\neighbor}{n}
\newcommand{\neighbory}[1]{\neighbor_y(#1)}
\newcommand{\neighborz}[1]{\neighbor_z(#1)}
\newcommand{\region}[0]{Y}
\newcommand{\Geqt}{\ensuremath{G_{\Delta}}}
\title{Polylogarithmic Time Algorithms for Shortest Path Forests in Programmable Matter}
\titlerunning{Polylogarithmic Time Algorithms for Shortest Path Forests in Programmable Matter}
\title{Polylogarithmic Time Algorithms for Shortest Path Forests in Programmable Matter}
\author{Andreas Padalkin}{Paderborn University, Germany}{andreas.padalkin@upb.de}{https://orcid.org/0000-0002-4601-9597}{}
\author{Christian Scheideler}{Paderborn University, Germany}{scheideler@upb.de}{https://orcid.org/0000-0002-5278-528X}{}
\authorrunning{A. Padalkin, and C. Scheideler}
\keywords{programmable matter, amoebot model, reconfigurable circuits, shortest path} 
\begin{document}

\maketitle

\begin{abstract}
    In this paper, we study the computation of shortest paths within the \emph{geometric amoebot model}, a commonly used model for programmable matter.
    Shortest paths are essential for various tasks and therefore have been heavily investigated in many different contexts.
    For example, in the programmable matter context, which is the focus of this paper, Kostitsyna et al.\ have utilized shortest path trees to transform one amoebot structure into another [DISC, 2023].
    We consider the \emph{reconfigurable circuit extension} of the model where this amoebot structure is able to interconnect amoebots by so-called circuits.
    These circuits permit the instantaneous transmission of simple signals between connected amoebots.
    
    We propose two distributed algorithms for the \emph{shortest path forest problem} where, given a set of $k$ sources and a set of $\ell$ destinations, the amoebot structure has to compute a forest that connects each destination to its closest source on a shortest path.
    For hole-free structures, the first algorithm constructs a shortest path tree for a single source within $O(\log \ell)$ rounds, and the second algorithm a shortest path forest for an arbitrary number of sources within $O(\log n \log^2 k)$ rounds.
    The former algorithm also provides an $O(1)$ rounds solution for the \emph{single pair shortest path problem} (SPSP) and an $O(\log n)$ rounds solution for the \emph{single source shortest path problem} (SSSP) since these problems are special cases of the considered problem.
\end{abstract}

\section{Introduction}

Programmable matter is matter that has the ability to change its physical properties in a programmable fashion \cite{DBLP:journals/ijhsc/ToffoliM93}.
Many exciting applications have already been envisioned for programmable matter such as self-healing structures \cite{an2021self} and minimal invasive surgery \cite{montemagno1999constructing}, and shape-changing robots have already been prominent examples of the potentials of programmable matter in many blockbuster movies.

In the \emph{amoebot model}, the matter consists of simple particles called \emph{amoebots}.
In the geometric variant of the model, the amoebots form a connected amoebot structure on the infinite triangular grid, on which they move by \emph{expansions} and \emph{contractions}.
However, since information can only travel amoebot by amoebot, many problems come with a natural lower bound of $\Omega(\mathit{diam})$ where $\mathit{diam}$ is the diameter of the structure.

For that reason, we consider the \emph{reconfigurable circuit extension} to the amoebot model where the amoebot structure is able to interconnect amoebots by so-called \emph{circuits}.
These circuits permit the instantaneous transmission of simple signals between connected amoebots.
The extension allows polylogarithmic solutions for various fundamental problems, e.g., leader election \cite{DBLP:journals/jcb/FeldmannPSD22}, and spanning tree construction \cite{DBLP:conf/dna/PadalkinSW22}.

In this paper, we consider the \emph{shortest path forest problem} where, given a set of sources and a set of destinations, the amoebots have to find a shortest path from each destination to the closest source.
Shortest paths are a fundamental problem in both centralized and distributed systems and have also a number of important applications in the amoebot model.

For example, consider shape formation.
Many algorithms for the amoebot model utilize a canonical shape, e.g., a line, as an intermediate structure \cite{DBLP:conf/spaa/DerakhshandehGR16,DBLP:journals/dc/LunaFSVY20}.
However, this is rather inefficient if the target structure is already close to the initial structure.
For such cases, Kostitsyna et al.\ proposed an algorithm that utilizes shortest paths to move amoebots through the structure to their target positions \cite{DBLP:conf/wdag/KostitsynaPS23}.
Another application for shortest paths is energy distribution \cite{DBLP:conf/icdcn/DaymudeRW21,DBLP:journals/corr/abs-2309-04898}.
The amoebots require energy to perform their movements that can be provided by other amoebots, e.g., amoebots located at external energy sources.
In order to minimize energy loss, it is more efficient to transfer the energy via a shortest path.

\subsection{Geometric Amoebot Model}
\label{sec:model:amoebot}

The \emph{(geometric) amoebot model} was proposed by Derakhshandeh et al.\ \cite{DBLP:conf/spaa/DerakhshandehDGRSS14}.
We will explain the model to a level of detail that is sufficient to understand the results of this paper.
For all other (unused) features of the model, e.g., movements, we refer to \cite{DBLP:journals/dc/DaymudeRS23,DBLP:conf/spaa/DerakhshandehDGRSS14}.

The model places a set of $n$ anonymous finite state machines called \emph{amoebots} on some graph $G = (V, E)$.
Each amoebot occupies one node and every node is occupied by at most one amoebot.
Let the \emph{amoebot structure} $X \subseteq V$ be the set of nodes occupied by the amoebots.
By abuse of notation, we identify amoebots with their nodes.
We assume that $G_X = (X, E_X)$ is connected, where $G_X = G|_X$ is the graph induced by $X$.
We call two amoebots that occupy adjacent nodes in $G$ \emph{neighbors} and denote the neighborhood of an amoebot $u \in X$ by $\neighborhood(u)$.

In the geometric variant of the model, $G$ is the infinite regular triangular grid graph $\Geqt = (V_\Delta, E_\Delta)$ (see \Cref{fig:model:classic}).
We assume that $X$ has no holes, i.e, $G_{V_\Delta \setminus X}$ is connected, where $G_{V_\Delta \setminus X} = \Geqt|_{V_\Delta \setminus X}$ is the graph induced by $V_\Delta \setminus X$.
Furthermore, each amoebot has a compass orientation (it defines one of its incident edges in $\Geqt$ as the northern direction) and a chirality.
We assume that all amoebots have the same compass orientation and chirality.
This is reasonable since Feldmann et al.\ \cite{DBLP:journals/jcb/FeldmannPSD22} showed that all amoebots are able to quickly come to an agreement within the considered extension (see \cref{sec:coordination}).
Our main focus will be the geometric variant but we will also propose some primitives for tree structures.

\begin{figure}[tbp]
    \begin{minipage}[t]{.48\linewidth}
        \centering
        \includegraphics{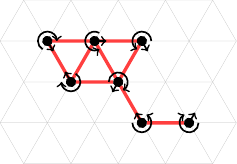}
        \subcaption{Amoebot structure.}
        \label{fig:model:classic}
    \end{minipage}
    \hfill
    \begin{minipage}[t]{.48\linewidth}
        \centering
        \includegraphics{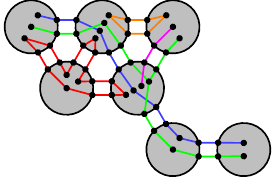}
        \subcaption{Reconfigurable circuit extension.}
        \label{fig:model:circuits}
    \end{minipage}
    \caption{
        Amoebot Model.
        The left figure shows an amoebot structure.
        The nodes indicate $X$.
        The red edges indicate $E_X$.
        The arrows indicate the compass orientation and chirality of each amoebot.
        The right figure shows an amoebot structure with $c = 2$ external links between adjacent amoebots.
        The amoebots are shown in gray.
        The nodes on the boundary are the pins.
        The nodes within the amoebots indicate the partition sets.
        An edge between a partition set $\partitionset$ and a pin $p$ implies $p \in \partitionset$.
        Each color indicates another circuit.
        The figures were taken and modified from \cite{DBLP:journals/jcb/FeldmannPSD22}.
    }
    \label{fig:model}
\end{figure}

\subsection{Reconfigurable Circuit Extension}
\label{sec:model:circuits}

In the \emph{reconfigurable circuit extension} \cite{DBLP:journals/jcb/FeldmannPSD22},
each edge between two neighboring amoebots $u$ and $v$ is replaced by $c$ edges called \emph{external links} with endpoints called \emph{pins}, for some constant $c \ge 1$ that is the same for all amoebots.
For each of these links, one pin is owned by $u$ while the other pin is owned by $v$.
In this paper, we assume that neighboring amoebots have a common labeling of their incident external links.

Each amoebot $u$ \emph{partitions} its \emph{pin set} $\partitionset(u)$ into a collection $\mathcal C(u)$ of pairwise disjoint subsets such that the union equals the pin set, i.e., $\partitionset(u) = \bigcup_{C \in \mathcal C(u)} C$.
We call $\mathcal C(u)$ the \emph{pin configuration} of $u$ and $C \in \mathcal C(u)$ a \emph{partition set} of $u$.
Let $\mathcal C = \bigcup_{u \in S} \mathcal C(u)$ be the collection of all partition sets in the system.
Two partition sets are \emph{connected} iff there is at least one external link between those sets.
Let $L$ be the set of all connections between the partition sets in the system.
Then, we call $H=(\mathcal C,L)$ the \emph{pin configuration} of the system and any connected component $C$ of $H$ a \emph{circuit} (see Figure~\ref{fig:model:circuits}).
Note that if each partition set of $\mathcal C$ is a \emph{singleton}, i.e., a set with exactly one element, then every circuit of $H$ just connects two neighboring amoebots.
An amoebot is part of a circuit iff the circuit contains at least one of its partition sets.
A priori, an amoebot $u$ may not know whether two of its partition sets belong to the same circuit or not since initially it only knows $\mathcal C(u)$.

Each amoebot $u$ can send a primitive signal (a \emph{beep}) via any of its partition sets $C \in \mathcal C(u)$ that is received by all partition sets of the circuit containing $C$ at the beginning of the next round.
The amoebots are able to distinguish between beeps arriving at different partition sets.
More specifically, an amoebot receives a beep at partition set $C$ if at least one amoebot sends a beep on the circuit belonging to $C$, but the amoebots neither know the origin of the signal nor the number of origins.


We assume the fully synchronous activation model, i.e., the time is divided into synchronous rounds, and every amoebot is active in each round.
On activation, each amoebot may update its state, reconfigure its pin configuration, and activate an arbitrary number of its partition sets.
The beeps are propagated on the updated pin configurations.
The time complexity of an algorithm is measured by the number of synchronized rounds required by it.


\subsection{Problem Statement and Our Contribution}

Let $S, D \subseteq X$ be two non-empty subsets.
We call each amoebot in $S$ a \emph{source}, and each amoebot in $D$ a \emph{destination}.
A \emph{$(S,D)$-shortest path forest} is a set of rooted trees that satisfies the following properties.
\begin{enumerate}
    \item For each $s \in S$, the set contains a tree $T_s = (V_s, E_s)$ rooted at $s$ with $V_s \subseteq X$ and $E_s \subseteq E_X$.
    \item For each $s \in S$, each leaf of $T_s$ is in $S \cup D$.
    \item For each $s_1,s_2 \in S$, $V_{s_1}$ and $V_{s_2}$ are disjoint.
    \item For each $u \in D$, there is a tree $T_s$ such that $u \in V_s$, i.e., $D \subseteq \bigcup_{s \in S} V_s$.
    \item For each $s \in S$ and $u \in V_s$, the unique path from $s$ to $u$ in $T_s$ is a shortest path from $s$ to $u$ in $G_X$, and $s$ has the smallest distance to $u$ among all amoebots in $S$.
\end{enumerate}
We call an $(S,X)$-shortest path forest also an \emph{$S$-shortest path forest}.

We consider the \emph{$(k,\ell)$-shortest path forest problem ($(k,\ell)$-SPF)} for $k, \ell \geq 1$.
Let two sets $S, D \subseteq X$ of amoebots be given such that $|S| = k$ and $|D| = \ell$, i.e., each amoebot $u \in X$ knows whether $u \in S$ and whether $u \in D$.
We say that $X$ computes a $(S,D)$-shortest path forest if each amoebot in $\bigcup_{s \in S} V_s \setminus S$ knows its parent within the $(S,D)$-shortest path forest.
The goal of the amoebot structure is to compute a $(S,D)$-shortest path forest.

Note that we obtain the classical \emph{single pair shortest path problem (SPSP)} for $k = \ell = 1$, and the classical \emph{single source shortest path problem (SSSP)} for $k = 1$ and $\ell = n$.

In the remainder of this paper, we make the following assumptions.
First, the amoebot structure has no holes.
Second, the amoebots agree on a common compass orientation and chirality.
Third, the amoebots agree on a leader, i.e., a unique amoebot.
We can establish the latter two assumptions within $O(\log n)$ rounds w.h.p.\ (see \Cref{sec:coordination}).

Under these preconditions, we will present two deterministic algorithms for $(k,\ell)$-SPF.
Our \emph{shortest path tree algorithm} solves the problem within $O(\log\ell)$ rounds for $k = 1$,
and our \emph{shortest path forest algorithm} within $O(\log n \log^2 k)$ for $k \geq 1$.
Note that the former result implies that we can solve SPSP within $O(1)$ rounds, and SSSP within $O(\log n)$ rounds.

The main challenge to achieve these results was to find the right techniques to cope with the model's limitations (e.g., constant memory) and to fully exploit the model's potential (e.g., fast communication over long distances) because in general, the lower bound for solving the SSSP problem in convential networks is $\Omega(\sqrt n + \mathit{diam})$ \cite{DBLP:journals/siamcomp/Elkin06,DBLP:journals/siamcomp/SarmaHKKNPPW12}.
Therefore, it is very surprising that we found polylogarithmic solutions to $(k,\ell)$-SPF, which again shows the power of the model.



\subsection{Related Work}

The reconfigurable circuit extension was introduced by Feldmann et al.\ \cite{DBLP:journals/jcb/FeldmannPSD22}.
They proposed solutions for the leader election, compass alignment, and chirality agreement problems.
Each of these solutions requires $O(\log n)$ rounds w.h.p.\footnote{An event holds \emph{with high probability (w.h.p.)} if it holds with probability at least $1 - 1/n^c$ where the constant $c$ can be made arbitrarily large.}
Afterwards, they considered the recognition of various classes of shapes.
An amoebot structure is able to detect parallelograms with linear or polynomial side ratio within $O(\log n)$ rounds w.h.p.
Further, an amoebot structure is able to detect shapes composed of triangles within $O(1)$ rounds if the amoebots agree on a chirality.

Feldmann et al.\ proposed the PASC algorithm which allows the amoebot structure to compute distances \cite{DBLP:journals/jcb/FeldmannPSD22,DBLP:conf/dna/PadalkinSW22}.
With the help of it, Padalkin et al.\ were able to solve the global maxima, spanning tree, and symmetry detection problems in polylogarithmic time \cite{DBLP:conf/dna/PadalkinSW22}.
It is also a crucial tool for the results of this paper.
We defer to \Cref{sec:pasc} for more details.

\iftoggle{full}{

Shortest path problems are broadly studied both in the sequential and distributed setting.
In the following, we will discuss the state of the art in various relevant distributed models.
We will limit our considerations to the state of the art algorithms for the exact shortest path problems.
We refer to the cited papers for a more detailed overview of results in the respective models.

}{

Shortest path problems are broadly studied both in the sequential and distributed setting.
In the following, we will discuss the state of the art in various relevant distributed models.
We will limit our considerations to the state of the art algorithms for the exact SSSP and SPSP.
We refer to the cited papers for a more detailed overview of results in the respective models.

}

In the amoebot model, Kostitsyna et al.\ were the first to consider SSSP \cite{DBLP:conf/wdag/KostitsynaPS22,DBLP:conf/wdag/KostitsynaPS23}.
By applying a breadth-first search approach, they compute a shortest path tree within $O(n^2)$ rounds.
For simple amoebot structures without holes, they introduced feather trees -- a special type of shortest path trees.
These can be computed within $O(\mathit{diam})$ rounds.
To our knowledge, there is no further work on shortest path problems in the amoebot model or its reconfigurable circuit extension.

\iftoggle{full}{

In communication networks, adjacent nodes are able to communicate via messages.
The CONGEST model limits the size of each message to a logarithmic number of bits (in $n$).
For weighted SSSP, Chechik and Mukhtar proposed a randomized algorithm that takes $\tilde O(\sqrt n \cdot \mathit{diam}^{1/4} + \mathit{diam})$ rounds \cite{DBLP:journals/dc/ChechikM22}.
The best known lower bound is $\Omega(\sqrt n + \mathit{diam})$ \cite{DBLP:journals/siamcomp/Elkin06,DBLP:journals/siamcomp/SarmaHKKNPPW12}.
It also holds for any approximation factor \cite{DBLP:journals/siamcomp/SarmaHKKNPPW12}.
For weighted APSP, Bernstein and Nanongkai proposed a randomized algorithm that takes $\tilde O(n)$ rounds \cite{DBLP:conf/stoc/BernsteinN19}.
This matches the lower bound up to polylogorithmic factors \cite{DBLP:conf/stoc/LenzenP13,DBLP:conf/stoc/Nanongkai14}.

}{

In communication networks, adjacent nodes are able to communicate via messages.
The CONGEST model limits the size of each message to a logarithmic number of bits (in $n$).
For weighted SSSP, Chechik and Mukhtar proposed a randomized algorithm that takes $\tilde O(\sqrt n \cdot \mathit{diam}^{1/4} + \mathit{diam})$ rounds \cite{DBLP:journals/dc/ChechikM22}.
The best known lower bound is $\Omega(\sqrt n + \mathit{diam})$ \cite{DBLP:journals/siamcomp/Elkin06,DBLP:journals/siamcomp/SarmaHKKNPPW12}.

}

\iftoggle{full}{

In hybrid communication networks \cite{DBLP:conf/soda/AugustineHKSS20}, nodes are able to establish new (global) edges.
Censor-Hillel et al.\ proposed the best known algorithm for weighted SSSP that takes $O(n^{1/3})$ rounds \cite{DBLP:conf/stacs/Censor-HillelLP21}.
Faster algorithms are known for certain classes of graphs:
Feldmann et al.\ solved the problem in $O(\log n)$ rounds for cactus graphs \cite{DBLP:conf/opodis/FeldmannHS20}, and Coy et al.\ solved the problem in $O(\log n)$ rounds for simple grid graphs \cite{DBLP:conf/sirocco/CoyCSSW23}.
The latter make use of portals graphs to compute a partial solution for each dimension, which are then combined.
The portal graphs are another crucial tool for the results of this paper.
We defer to \Cref{sec:portal} for more details.
For weighted APSP, Kuhn and Schneider proposed an $\tilde O(\sqrt n)$ algorithm \cite{DBLP:conf/podc/KuhnS20}.
This matches the lower bound up to polylogorithmic factors \cite{DBLP:conf/soda/AugustineHKSS20}.

}{

In hybrid communication networks \cite{DBLP:conf/soda/AugustineHKSS20}, nodes are able to establish new (global) edges.
Censor-Hillel et al.\ proposed the best known algorithm for weighted SSSP that takes $O(n^{1/3})$ rounds \cite{DBLP:conf/stacs/Censor-HillelLP21}.
Faster algorithms are known for certain classes of graphs:
Feldmann et al.\ solved the problem in $O(\log n)$ rounds for cactus graphs \cite{DBLP:conf/opodis/FeldmannHS20}, and Coy et al.\ solved the problem in $O(\log n)$ rounds for simple grid graphs \cite{DBLP:conf/sirocco/CoyCSSW23}.
The latter make use of portals graphs to compute a partial solution for each dimension, which are then combined.
The portal graphs are another crucial tool for the results of this paper.
We defer to \Cref{sec:portal} for more details.

}

\iftoggle{full}{

In the beeping model \cite{DBLP:conf/wdag/CornejoK10}, each node is either in beeping or listening mode.
If a node is in beeeping mode, it sends a beep to all its neighbors.
If a node is in listening mode, it perceives beeps in its neighborhood.
Just as in the reconfigurable circuit extension, it neither knows the origin nor the number of origins.
The model differs from the reconfigurable circuit extension in that nodes cannot establish circuits and can only listen to their neighborhoods.
Dufoulon et al.\ proposed two algorithms for shortest path problems \cite{DBLP:conf/innovations/DufoulonEG23}.
The first solves SPSP within $O(\mathit{diam} \log\log n + \log^3 n)$ rounds w.h.p., and the second solves SSSP within $O(\mathit{diam} \log^2 n + \log^3 n)$ rounds w.h.p.
To our knowledge, there is no further work on shortest path problems in the beeping model.

}{

In the beeping model \cite{DBLP:conf/wdag/CornejoK10}, each node is either in beeping or listening mode.
If a node is in beeeping mode, it sends a beep to all its neighbors.
If a node is in listening mode, it perceives beeps in its neighborhood.
Just as in the reconfigurable circuit extension, it neither knows the origin nor the number of origins.
The model differs from the reconfigurable circuit extension in that nodes cannot establish circuits and can only listen to their neighborhoods.
Dufoulon et al.\ proposed two algorithms for shortest path problems \cite{DBLP:conf/innovations/DufoulonEG23}.
The first solves SPSP within $O(\mathit{diam} \log\log n + \log^3 n)$ rounds w.h.p., and the second solves SSSP within $O(\mathit{diam} \log^2 n + \log^3 n)$ rounds w.h.p.

}



\section{Preliminaries}

In this section, we discuss previous results that we utilize to achieve the results of this paper.
The first subsection deals with the coordination of amoebots.
In the second subsection, we present the PASC algorithm, and in the third subsection, we discuss portal trees.


\subsection{Coordination}
\label{sec:coordination}

One of the difficulties in designing algorithms with reconfigurable circuits is the coordination of the amoebots.
In order to simplify the coordination, we have made the following two assumptions.
First, the amoebots have to agree on a common compass orientation and chirality.
Second, the amoebots agree on a leader, i.e., a unique amoebot.
If these assumptions are not satisfied, we can establish them in a preprocessing phase.
For that, we make use of the following results by Feldmann et al.

\begin{theorem}[Feldmann et al.\ \cite{DBLP:journals/jcb/FeldmannPSD22}]
    There is an algorithm that aligns all compasses and chiralities within $O(\log n)$ rounds w.h.p.
\end{theorem}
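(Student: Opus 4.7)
The plan is to proceed in three phases: elect a leader using randomized elimination on a chirality-independent global circuit, then have the leader impose its chirality on all amoebots, and finally its compass orientation. A spanning tree rooted at the leader is built in parallel with the chirality phase using the construction of Padalkin et al., which also takes $O(\log n)$ rounds w.h.p.; all three phases together therefore take $O(\log n)$ rounds w.h.p.

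For leader election, every amoebot first merges all of its pins into a single partition set, yielding one global circuit over $X$ that does not depend on any compass or chirality agreement. Starting with all amoebots as candidates, we run the standard halving process: in each round each surviving candidate independently beeps with probability $1/2$, and any silent candidate that hears a beep withdraws. The expected number of candidates drops by a constant factor per round, and a standard Chernoff-style argument shows that exactly one candidate remains after $O(\log n)$ rounds w.h.p.

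For chirality agreement, observe that two neighbors can locally determine whether they share a chirality by inspecting the cyclic positions of their common external link in their two $6$-link labelings: because neighbors share a common labeling of the link itself by assumption, a constant-time local computation on the two $6$-tuples reveals the relative chirality as a single bit. Each amoebot thus knows the relative chirality bit to its parent in the spanning tree. Using the standard primitive for computing XORs along root-to-node paths (implementable via a tree circuit in which each amoebot either passes or inverts the parent's signal according to its local bit) the leader broadcasts a reference beep in $O(1)$ rounds, and every amoebot learns whether it must flip its chirality to match. The compass phase is analogous: once chiralities agree, the six incident directions carry a canonical cyclic ordering, so the per-edge compass offset of a parent relative to its child is a value in $\{0,\dots,5\}$; the leader encodes its reference north in $\lceil \log 6 \rceil$ bits and broadcasts them through the tree circuit, and each amoebot combines the received value with the sum of offsets along its path to the leader to compute its rotation.

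The main obstacle is bootstrapping: the partition sets used in the chirality phase must be specifiable before chirality is even agreed upon. This is resolved by only ever referring to external links through the shared neighbor-labeling, which the model guarantees independently of compass and chirality; in particular, the single global circuit used for leader election and the local circuits used for the XOR-path computation can both be described without any orientation convention. Once chirality is agreed, the compass phase inherits a consistent cyclic labeling, so the same bootstrap technique goes through for the constant number of compass bits, completing the argument.
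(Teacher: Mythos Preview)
The paper does not actually prove this theorem: it is quoted verbatim as a result of Feldmann et al.\ and used as a black box, so there is no proof in the paper to compare against. What can be said is that your sketch does not match how the result is used here: in this paper compass and chirality alignment are treated as \emph{preconditions} that are established \emph{before} leader election and before any of the tree machinery of Padalkin et al.\ is invoked, whereas you run leader election first and then rely on a spanning tree to propagate orientation.

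That ordering is where your argument has a genuine gap. You invoke ``the construction of Padalkin et al.'' to obtain a spanning tree rooted at the leader in $O(\log n)$ rounds, but that construction (and in particular the PASC-based primitives it rests on) already assumes a common compass orientation and chirality; this is exactly the assumption you are trying to establish. Your bootstrapping paragraph carefully argues that the \emph{circuits} you need can be specified using only the shared per-edge link labeling, but it says nothing about how to build the spanning tree itself without orientation agreement. Without a tree, your XOR-along-a-path trick for chirality and your mod-$6$ accumulation for compass offsets have nothing to run on; and the obvious orientation-free alternative (a BFS tree grown from the leader) costs $\Theta(\mathit{diam})$ rounds, not $O(\log n)$. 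So as written the argument is circular: chirality/compass alignment needs a spanning tree, and the cited spanning tree construction needs chirality/compass alignment.
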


\begin{theorem}[Feldmann et al.\ \cite{DBLP:journals/jcb/FeldmannPSD22}]
    There is an algorithm that elects a leader within $\Theta(\log n)$ rounds w.h.p.
\end{theorem}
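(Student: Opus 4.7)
The plan is to exploit the fact that the reconfigurable circuit extension allows the entire amoebot structure to share a single global circuit on which beeps compute a boolean OR in one round. Initially, every amoebot $u$ declares itself a \emph{candidate} and sets $\partitionset(u)$ to be a single partition set containing all of its pins; since $X$ is connected, all of these partition sets merge into one circuit $C^\ast$ that reaches every amoebot. In each phase, every surviving candidate independently flips a fair coin; those that flipped heads beep on $C^\ast$, while a candidate that flipped tails withdraws exactly when it hears a beep on $C^\ast$ (i.e., when some other candidate flipped heads). Non-candidates never beep but may listen to $C^\ast$ in order to detect termination. Note that neither a common compass nor a common chirality is required, since only anonymous beeps are exchanged.

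For the upper bound, let $k \ge 2$ be the number of candidates at the start of some phase. With probability $1 - 2^{-k}$ the phase is \emph{productive}, meaning at least one candidate flips heads; conditional on this event, each individual candidate survives with probability exactly $1/2$, so the expected number of survivors is at most $k/2 + 1$. A Chernoff-type argument shows that after $O(\log n)$ productive phases at most one candidate remains with probability at least $1 - n^{-c}$ for any desired constant $c$. Since unproductive phases occur with probability at most $1/2$ whenever $k \ge 2$, the total number of rounds is still $O(\log n)$ w.h.p. Termination is certified by a second circuit on which the unique remaining candidate repeatedly beeps: once no candidate elimination is observed for $\Theta(\log n)$ consecutive phases, the remaining candidate declares itself the leader and informs the structure via $C^\ast$.

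The matching lower bound follows from a symmetry argument: for an amoebot structure of $n$ amoebots in a highly symmetric configuration, such as a regular hexagon or a long cycle-like arrangement, every amoebot has an identical local view and can only be distinguished via its random coin flips; since each round contributes $O(1)$ bits of information to any single amoebot under the beep model, breaking a symmetry of order $n$ with failure probability at most $n^{-c}$ requires $\Omega(\log n)$ rounds. The main obstacle I expect is the end-game of the analysis: once $k$ is small the halving argument degrades, and one must argue that the remaining phases still complete within $O(\log n)$ rounds w.h.p. This can be handled by stochastically dominating the process by a geometric random variable that succeeds with constant probability in each phase whenever $k \ge 2$, yielding the desired tail bound.
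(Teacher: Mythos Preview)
The paper does not prove this theorem at all: it is quoted verbatim from Feldmann et al.\ as a preliminary result and used as a black box in the preprocessing phase. There is therefore no ``paper's own proof'' to compare against; your task reduces to whether your sketch plausibly reconstructs the cited result.

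Your high-level approach---a randomized tournament over a single global circuit, where heads-flippers beep and tails-flippers who hear a beep withdraw---is indeed the standard mechanism and matches what Feldmann et al.\ do. The halving analysis and the information-theoretic lower bound sketch are both along the right lines, even if some probabilities are stated imprecisely (e.g., conditioned on a productive phase the individual survival probability is $(1/2)/(1-2^{-k})$, not exactly $1/2$).

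There is, however, a genuine gap in your termination argument. You write that the remaining candidate declares itself leader ``once no candidate elimination is observed for $\Theta(\log n)$ consecutive phases.'' Amoebots in this model are \emph{finite state machines}: they have constant memory and do not know $n$, so they cannot count to $\Theta(\log n)$. This is precisely why the paper's primitives (PASC, the Euler-tour technique) stream values bit by bit rather than storing them. Your proposed termination test is therefore not implementable as stated. The cited algorithm handles termination differently: roughly, in each phase one uses an additional pair of beep rounds to test, with constant success probability, whether at least two candidates remain; a lone candidate passes this test in every phase and can safely declare itself leader after a constant number of consecutive passes, while two or more candidates fail the test with constant probability per phase. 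This yields a Las Vegas procedure that terminates in $\Theta(\log n)$ rounds w.h.p.\ without any counting. You should replace your counting-based termination with such a constant-memory test.
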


Hence, the preprocessing phase requires $O(\log n)$ rounds w.h.p.
Note that we can omit the leader election for $(1, \ell)$-SPF since we can simply elect the only source as the leader.
Also, note that while this preprocessing phase is randomized, all presented algorithms in this paper are deterministic.

\iftoggle{full}{
Moreover, our algorithms apply some primitives on several subsets of amoebots in parallel.
These might take different amounts of rounds for each subset.
In order to synchronize the amoebot structure, we apply the synchronization technique by Padalkin et al.
We refer to \cite{DBLP:conf/dna/PadalkinSW22} for more details.
}{}



\subsection{PASC Algorithm}
\label{sec:pasc}

One of the essential primitives used in this paper is the \emph{primary and secondary circuit algorithm (PASC algorithm)} that was introduced by Feldmann et al.~\cite{DBLP:journals/jcb/FeldmannPSD22}.

\iftoggle{full}{

\begin{lemma}[Padalkin et al.~\cite{DBLP:conf/dna/PadalkinSW22}]
\label{lem:pasc:chain}
    Let a chain of $m$ amoebots be given, i.e., each amoebot knows its predecessor and successor.
    The PASC algorithm computes the distance of each amoebot to the first amoebot bit by bit.
    More precisely, in the $i$-th iteration of the PASC algorithm, each amoebot computes the $i$-th bit of its distance to the first amoebot.
\end{lemma}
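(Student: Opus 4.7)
The plan is to describe the PASC algorithm on a chain as an inductive ``bit-peeling'' procedure and establish the claim by induction on $i$. Each amoebot maintains two partition sets, call them $P$ (primary) and $S$ (secondary), a flag $a\in\{0,1\}$ (active/passive), and a local phase $\phi\in\{0,1\}$. Consecutive amoebots link $P$-to-$P$ and $S$-to-$S$, yielding two parallel circuits along the chain. If $\phi=0$ the amoebot \emph{passes} (the two circuits remain separate through it); if $\phi=1$ it \emph{crosses} (the two circuits are swapped through it). Passive amoebots always pass. Initially every amoebot is active with $\phi=0$.

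In each iteration, the first amoebot beeps on $P$ only. Every active amoebot looks at which circuit carries the beep: arrival on $P$ means the current bit is $0$, while arrival on $S$ means the current bit is $1$ and the amoebot becomes passive. Each still-active amoebot then toggles $\phi$ in preparation for the next iteration.

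The key invariant, which I would prove by induction on $i$, is: after iteration $i$, the amoebot at distance $d$ from the first has recorded exactly the $i$ least significant bits of $d$; it is active iff $d$ is a multiple of $2^i$; and the phases $\phi$ alternate along the sub-chain of currently active amoebots. The base case $i=0$ is immediate. For the step, by the hypothesis the active amoebots sit at positions $0, 2^i, 2\cdot 2^i, \dots$ from the first. Since passive amoebots are transparent and active ones alternate their crossings, a direct propagation argument shows that the beep reaches the $j$-th active amoebot on $P$ iff $j$ is even. Hence the bit recorded in iteration $i$ is bit $i$ of $j\cdot 2^i$, which is exactly bit $i$ of the distance $d$. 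Each iteration runs in $O(1)$ rounds, so after $\lceil\log m\rceil$ iterations every amoebot has learned every bit of its distance.

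The main obstacle is the bookkeeping on $\phi$. When an amoebot becomes passive, one must verify that the alternation of $\phi$ is preserved on the new, thinner sub-chain of active amoebots. This reduces to showing that every deactivated amoebot carried the $\phi$-value opposite to its two nearest surviving active neighbours in the just-completed iteration, which is precisely what the synchronized toggling rule together with the deactivation condition (beep received on $S$) guarantee. Once this local consistency is checked, the induction goes through and the algorithm yields the claimed bitwise distance computation.
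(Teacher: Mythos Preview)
The paper does not prove this lemma at all; it is quoted from the cited work and explicitly used as a black box (``We will use the PASC algorithm as a black box. We refer to \cite{DBLP:conf/dna/PadalkinSW22} for the details.''). So there is nothing in the paper to compare against beyond the bare statement.

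Your overall architecture---two parallel circuits, amoebots that either pass or cross, an active/passive flag, and an induction showing that after iteration $i$ the active amoebots sit exactly at the multiples of $2^i$---is the right picture of PASC. However, the concrete mechanism you describe does not realise it. With the initialisation $\phi=0$ (everyone passing) and the first amoebot beeping on $P$, the very first iteration delivers the beep on $P$ to \emph{every} amoebot, so every amoebot records bit $0$; this already contradicts your invariant at $i=1$ for every odd distance. More seriously, your update rule ``each still-active amoebot then toggles $\phi$'' is uniform: all active amoebots always share the same value of~$\phi$. Hence the clause ``the phases $\phi$ alternate along the sub-chain of currently active amoebots'' can never hold once there are at least two active amoebots, and in particular the base case $i=0$ is not immediate but false (all $\phi=0$, no alternation). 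Since your inductive step relies precisely on this alternation to argue that the beep reaches the $j$-th active amoebot on $P$ iff $j$ is even, the argument does not start. Concretely, even if you patch the initialisation to $\phi=1$, after iteration~$1$ the surviving active amoebots all toggle to $\phi=0$, so in iteration~$2$ everyone passes and every amoebot again records $0$, which is wrong for distance~$2$.

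The fix is to drop the toggled phase altogether: let active amoebots always cross and passive amoebots always pass, and let each amoebot read its bit from the partition set on its successor side. With that rule a short induction gives exactly the invariant you state (actives at multiples of $2^i$, $j$-th active receives on $P$ iff $j$ is even), and the bookkeeping obstacle you flag disappears because there is no $\phi$ to maintain.
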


\begin{lemma}[Feldmann et al.~\cite{DBLP:journals/jcb/FeldmannPSD22}]
\label{lem:pasc:runtime}
    Each iteration of the PASC algorithm requires two rounds.
    The PASC algorithm (performed on a chain of $m$ amoebots) terminates after $O(\log m)$ iterations.
\end{lemma}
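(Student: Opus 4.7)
The plan is to prove the two parts of the lemma separately, working directly from the description of the PASC algorithm. Recall that PASC maintains, at the start of each iteration, a subchain of \emph{active} amoebots, and that every amoebot has two partition sets that can be wired so that the currently active amoebots alternately connect to a ``primary'' or a ``secondary'' circuit running along the chain, while each currently inactive amoebot merges its partition sets so as to forward signals transparently.

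For the first part, I would walk through one iteration as exactly two rounds. In the first round, every amoebot reconfigures its pin configuration so that the active amoebots alternate between the primary and the secondary circuit; the head of the chain beeps on the primary circuit, and every amoebot reads off which of its two partition sets received the beep to decide whether its bit for this iteration is $0$ (primary) or $1$ (secondary). In the second round, the amoebots reconfigure once more: every second amoebot of the active subchain becomes inactive and merges its partition sets, and a confirmation beep synchronizes the transition so that every amoebot finalizes its stored bit. Since each round consists of a single reconfiguration together with a single beep, two rounds suffice to compute one bit and to shrink the active subchain.

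For the second part, let $a_i$ denote the number of active amoebots at the start of iteration $i$, with $a_0 = m$. Because the second round of every iteration deactivates every other active amoebot, $a_{i+1} \leq \lceil a_i / 2 \rceil$, and by induction $a_i \leq \lceil m/2^i \rceil$. After $i = \lceil \log_2 m \rceil$ iterations only the head amoebot remains active; at that point every amoebot has seen all $\lceil \log_2 m \rceil$ bits of its distance to the head and the algorithm halts. The head detects termination locally, since no secondary amoebot answers on its primary circuit, and broadcasts a halt signal on the same configuration. This yields the stated $O(\log m)$ iteration bound.

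The main obstacle is not the round-counting but rather verifying the invariant that the alternating primary/secondary labels in iteration $i$ really correspond to the $i$-th bit of $\dist(u,\text{head})$ for every $u$; once this invariant is established by induction on $i$, using that inactive amoebots transparently forward signals along the previously compressed chain, both running-time claims follow immediately from the mechanics described above.
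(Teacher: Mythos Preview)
The paper does not prove this lemma at all: it is stated as a citation of Feldmann et al.\ and explicitly treated as a black box (``We will use the PASC algorithm as a black box. We refer to \cite{DBLP:conf/dna/PadalkinSW22} for the details.''). So there is no proof in the paper to compare against; your proposal is a reconstruction of an argument that the present paper deliberately outsources.

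That said, your sketch is a reasonable account of why the bound holds and broadly matches the mechanics of PASC as described in the cited sources: the primary/secondary alternation produces one distance bit per iteration, passive amoebots forward signals, and the active subchain halves, giving $\lceil \log_2 m\rceil$ iterations. Two caveats. First, your two-round breakdown assigns specific tasks to each round (bit readout in round one, deactivation plus a ``confirmation beep'' in round two) that are your own reconstruction; the original presentation packages the two rounds slightly differently, and your added confirmation beep is not part of the algorithm. Second, your termination argument (``no secondary amoebot answers on its primary circuit'') is again plausible but not how termination is actually detected in the cited works; the standard argument is simply that the highest set bit across all distances is at most $\lfloor\log_2(m-1)\rfloor$, and a global circuit lets all amoebots detect when every computed bit in an iteration was zero. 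None of this breaks your $O(\log m)$ bound, but if you intend this as a self-contained proof you should either cite the original for the per-iteration mechanics or state the PASC algorithm precisely enough that your round-by-round claims can be checked.
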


}{

\begin{lemma}[Feldmann et al.~\cite{DBLP:journals/jcb/FeldmannPSD22}, Padalkin et al.~\cite{DBLP:conf/dna/PadalkinSW22}]
\label{lem:pasc:chain}
    Let a chain of $m$ amoebots be given, i.e., each amoebot knows its predecessor and successor.
    The PASC algorithm computes the distance of each amoebot to the first amoebot bit by bit.
    More precisely, in the $i$-th iteration of the PASC algorithm, each amoebot computes the $i$-th bit of its distance to the first amoebot.
    Furthermore, the PASC algorithm terminates after $O(\log m)$ rounds.
\end{lemma}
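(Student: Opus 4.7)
The plan is to describe the PASC algorithm explicitly and then argue by induction on the iteration index that, after $i$ iterations, every amoebot knows bits $1$ through $i$ of the binary expansion of its distance to the first amoebot (bit $1$ being the least significant). The runtime claim then follows because each iteration halves the set of still-active amoebots, so after $\lceil \log_2 m \rceil$ iterations only the first amoebot remains active and all other amoebots have finished determining their $O(\log m)$ bits.

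Concretely, each amoebot maintains an active/passive flag (initially all active) and uses two partition sets per side of the chain, which I will call the \emph{primary} and \emph{secondary} pins. In each iteration the pin configuration is set as follows: a passive amoebot forwards both circuits straight through (primary-to-primary, secondary-to-secondary across its two sides) while an active amoebot swaps them, joining its predecessor-side primary pin to its successor-side secondary pin and vice versa. The first amoebot then beeps on its primary partition set. A direct walk along the chain shows that the amoebot at the $j$-th position of the current active subchain (with $j = 0$ for the first amoebot) receives the beep on primary iff $j$ is even, since each active amoebot flips the parity while each passive one preserves it. Each active amoebot interprets the side on which it received the beep as the current distance bit; the ones whose bit turned out to be $1$ then mark themselves passive, while the rest stay active for the next iteration.

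For correctness, a short induction shows that after iteration $i$ the still-active amoebots are precisely those whose original position in the chain is divisible by $2^i$, and that their position within the new active subchain equals their original position divided by $2^i$. Hence the bit read in iteration $i$ equals bit $i$ of the original position, which is bit $i$ of the amoebot's distance to the first amoebot. Since one iteration consists of an $O(1)$-round pin reconfiguration plus a single beep propagation, and the active subchain shrinks geometrically, the algorithm terminates in $O(\log m)$ rounds. The main obstacle I expect is not the induction itself but setting up the alternating swap/forward invariant carefully enough to verify that arbitrarily many passive amoebots interspersed between consecutive active ones still realize the desired parity-flipping circuit; once that invariant is stated cleanly, correctness reduces to the familiar ``extract the LSB, halve, recurse'' unfolding of binary representations.
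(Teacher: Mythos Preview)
The paper does not prove this lemma at all: it is stated as a cited result from \cite{DBLP:journals/jcb/FeldmannPSD22,DBLP:conf/dna/PadalkinSW22} and explicitly treated as a black box (``We will use the PASC algorithm as a black box. We refer to \cite{DBLP:conf/dna/PadalkinSW22} for the details.''). So there is no in-paper proof to compare your attempt against.

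That said, your proposal is a faithful reconstruction of the PASC algorithm as it appears in the cited sources: the active/passive flag, the primary/secondary partition sets, the ``active amoebots swap, passive amoebots forward'' wiring, the parity read-off yielding the current LSB, and the geometric shrinking of the active subchain are exactly the mechanism behind the acronym (Primary And Secondary Circuits). The induction you outline is the standard one, and your anticipated obstacle---verifying that arbitrarily long runs of passive amoebots still preserve the parity invariant---is indeed the only place where one has to be a bit careful, though it follows directly once you phrase the invariant as ``the circuit an amoebot receives the beep on is primary iff the number of active amoebots strictly preceding it is even.'' Nothing is missing; your write-up would serve as a self-contained proof where the paper simply defers to prior work.
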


}

\iftoggle{full}{
We will use the PASC algorithm as a black box.
We refer to \cite{DBLP:conf/dna/PadalkinSW22} for the details.
We can adapt the PASC algorithm to compute distances in tree structures and prefix sums along the chain as shown in the following corollaries.
}{
We will use the PASC algorithm as a black box.
We refer to \cite{DBLP:conf/dna/PadalkinSW22} for the details.
We can adapt the PASC algorithm to compute distances in tree structures.
}

\begin{corollary}
\label{cor:pasc:tree}
    Let a rooted tree of amoebots with height $h$ be given, i.e., each amoebot knows its parent and its children.
    The PASC algorithm computes the distance of each amoebot to the root bit by bit.
    More precisely, in the $i$-th iteration of the PASC algorithm, each amoebot computes the $i$-th bit of its distance to the root.
    Furthermore, the PASC algorithm terminates after $O(\log h)$ rounds.
\end{corollary}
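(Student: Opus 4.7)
The plan is to run chain PASC simultaneously along every root-to-leaf path by sharing the primary and secondary circuits at branching amoebots. I would configure the pin sets as in Lemma \ref{lem:pasc:chain}, but with each internal amoebot merging its parent-facing primary pin with the primary pins of every child into one partition set, and similarly for the secondary pin. The root plays the role of the first amoebot of every path. Under this setup the sequence of pins along any root-to-leaf path $P$ is topologically identical to the chain circuit used to simulate a chain of length $|P|$ in Lemma \ref{lem:pasc:chain}.

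Next, the amoebots execute the unchanged local per-iteration procedure of chain PASC, keyed only on their local state. The key invariant I would establish by induction on the iteration index $i$ is: every amoebot $u$ is in exactly the internal state that a chain amoebot at position $d(u)$ would be in after $i$ iterations, where $d(u)$ denotes the depth of $u$ in the tree. The base case $i=0$ is immediate, and from the inductive hypothesis the bit that $u$ outputs in iteration $i$ is precisely the $i$-th bit of $d(u)$, which equals its distance to the root.

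The step that needs the most care is checking that several children of the same parent, or several amoebots in disjoint subtrees, beeping on the same shared circuit in the same round does not cause the tree simulation to diverge from the chain semantics. This is fine because the only rootward signals in PASC are ``some active amoebot still exists beyond me'' beeps, which are OR-aggregated by the circuit; the parent's behaviour depends only on the presence of such a beep, not on its multiplicity, so a single simulated path would observe the same signal as the merged one. Downward signals from an ancestor are broadcast by the shared circuit to all descendants at once, which is exactly what the chain simulation on every path would demand. Hence the inductive invariant is preserved along every root-to-leaf path in parallel, and no interference occurs.

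For the runtime, the longest root-to-leaf path has length at most $h$, so by Lemma \ref{lem:pasc:chain} every amoebot's distance bits are produced within $O(\log h)$ iterations; since each iteration uses a constant number of rounds, the total runtime is $O(\log h)$ as claimed.
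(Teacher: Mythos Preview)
Your proposal is correct and follows essentially the same approach as the paper: both run the chain PASC of Lemma~\ref{lem:pasc:chain} simultaneously on every root-to-leaf path, reusing the same partition sets at each branching amoebot so that only two external links per tree edge are needed, and both derive the $O(\log h)$ bound from the longest such path via Lemma~\ref{lem:pasc:runtime}. The paper's proof is terser (it defers the non-interference details to \cite{DBLP:conf/dna/PadalkinSW22}), whereas you spell out explicitly why OR-aggregated rootward beeps and broadcast leafward beeps keep the per-path semantics intact; this is a welcome elaboration but not a different argument.
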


\iftoggle{full}{
\begin{proof}
    We simply apply \Cref{lem:pasc:chain} simultaneously on each path from the root to each leaf.
    Each amoebot can reuse its partition sets for all paths such that we still only need two external links for each edge of the tree.
    We refer to \cite{DBLP:conf/dna/PadalkinSW22} for similar extensions.
    
    The runtime depends on the longest path, i.e., on the height of the tree.
    By \Cref{lem:pasc:runtime}, the PASC algorithm terminates after $O(\log h)$ rounds.
\end{proof}
}{}

\iftoggle{full}{
\begin{corollary}
\label{cor:pasc:prefixsum}
    Let a chain $(v_0, \dots, v_{m-1})$ of $m$ amoebots be given, i.e., each amoebot knows its predecessor and successor.
    Let $V$ denote the set of all amoebots in the chain.
    Further, let a weight function $\weight : V \to \{ 0, 1 \}$ be given, i.e., each amoebot knows its weight.
    The PASC algorithm computes the prefix sum $\prefix_{v_i} = \sum_{j=0}^i \weight(v_i)$ of each amoebot $v_i$ bit by bit.
    More precisely, in the $i$-th iteration of the PASC algorithm, each amoebot computes the $i$-th bit of its prefix sum.
    Furthermore, the PASC algorithm terminates after $O(\log W)$ rounds where $W = \sum_{j=0}^{m-1} \weight(v_i)$.
\end{corollary}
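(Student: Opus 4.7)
The plan is to reduce the prefix sum computation to the chain version of PASC (Lemma~\ref{lem:pasc:chain}) applied to the subchain of weight-$1$ amoebots, with weight-$0$ amoebots acting as passive circuit relays. Writing $v_{i_1}, \dots, v_{i_W}$ for the weight-$1$ amoebots in their chain order, I would first have every weight-$0$ amoebot place the two external links toward its predecessor and successor into a common partition set, so that each maximal run of weight-$0$ amoebots collapses into a single circuit and consecutive weight-$1$ amoebots behave, for the purpose of beeps, as neighbors on a virtual chain $C'$ of length $W$.

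Next I would run the PASC algorithm on $C'$. In iteration $i$ it lets each $v_{i_k}$ learn the $i$-th bit of $k = \prefix_{v_{i_k}}$. To bypass the off-by-one between PASC's $0$-indexed distances and the $1$-indexed prefix sums, I would prepend a virtual start to $C'$: if $\weight(v_0) = 0$ then $v_0$ naturally fills that role, and if $\weight(v_0) = 1$ then $v_0$ simulates both the virtual start and the first entry of $C'$. Every weight-$0$ amoebot $v_j$ with a weight-$1$ predecessor shares a circuit with its nearest such predecessor $v_{i_k}$ in each PASC iteration, so by mimicking the state transitions observed on that circuit it outputs the same bit sequence as $v_{i_k}$, which is precisely the binary expansion of $\prefix_{v_j} = \prefix_{v_{i_k}}$. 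Weight-$0$ amoebots preceding all weight-$1$ amoebots see no beeps and correctly output $0$ throughout.

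For the running time, Lemma~\ref{lem:pasc:runtime} gives $O(\log W)$ iterations on a virtual chain of length $W + 1$, each of constant round complexity, yielding the claimed $O(\log W)$ round bound. The step I expect to be most delicate is making sure the weight-$0$ relays faithfully track the PASC state of the \emph{correct} weight-$1$ amoebot without needing an active role of their own, and handling the edge case $\weight(v_0) = 1$ cleanly so that no off-by-one sneaks back in; the prepended virtual start together with the shared-circuit observation should resolve both issues without any extra carry-propagation step.
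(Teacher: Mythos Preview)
Your proposal is correct and follows essentially the same approach as the paper: weight-$0$ amoebots act as passive relays so that PASC effectively runs on the subchain of weight-$1$ amoebots, a virtual start is prepended to fix the off-by-one, and weight-$0$ amoebots read off the bits of their nearest preceding weight-$1$ amoebot. The only cosmetic difference is that the paper uniformly prepends a virtual weight-$0$ amoebot $s$ simulated by $v_0$, whereas you case-split on $\weight(v_0)$; both resolve the indexing issue in the same way.
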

}{}

\iftoggle{full}{
\begin{proof}
    We first append a virtual amoebot $s$ with $\weight(s) = 0$ to the start of the chain, which is simulated by $v_0$.
    Then, we apply the PASC algorithm.
    However, we only let amoebots with weight $1$ participate while all other amoebots simply forward the signals of their predecessors to their successors.
    By \Cref{lem:pasc:chain}, each amoebot with weight $1$ computes its weighted distance to $s$, which is equal to its prefix sum.
    Each amoebot $v \in V$ with weight $0$ is able to read the forwarded signals.
    This allows it to determine the value computed by the last amoebot with weight $1$ on the subchain from $s$ to $v$, which is equal to its prefix sum.
    
    The runtime does only depend on the number of participating amoebots.
    There are exactly $W$ many amoebots with weight $1$.
    Hence, by \Cref{lem:pasc:runtime}, the algorithm requires $O(\log W)$ rounds.
\end{proof}
}{}



\subsection{Portal Graph}
\label{sec:portal}

Coy et al.~\cite{DBLP:conf/sirocco/CoyCSSW23} have solved the shortest path problem for hybrid communication networks that can be modelled as grid graphs without holes.
For that, they have made use of portal graphs.

\begin{definition}[Coy et al.~\cite{DBLP:conf/sirocco/CoyCSSW23}]
\label{def:portal}
    Let $G = (V, E)$ be a connected subgraph of a square grid.
    Let $E_x \subseteq E$ be the set of edges parallel to the $x$-axis.
    We call the connected component of $(V, E_x)$ \emph{$x$-portals}.
    For each $u \in V$, let $\portal_x(u)$ denote the portal that contains $u$.
    Two portals $P_1$ and $P_2$ are adjacent iff there exists an edge $(v_1,v_2) \in E$ such that $v_1 \in P_1$ and $v_2 \in P_2$.
    We define \emph{$y$-portals} analogously.
\end{definition}

\begin{definition}[Coy et al.~\cite{DBLP:conf/sirocco/CoyCSSW23}]
\label{def:portal_graph}
    The \emph{$x$-portal graph} $\mathcal P_x$ is the graph with vertices corresponding to the $x$-portals.
    Two vertices of $\mathcal P_x$ are adjacent iff the corresponding portals are adjacent.
    We define the portal graph $\mathcal P_y$ analogously.
\end{definition}

We may omit the axis in the notation if it is arbitrary or clear from the context.

\begin{lemma}[Coy et al.~\cite{DBLP:conf/sirocco/CoyCSSW23}]
\label{lem:portal_graph:tree}
    All portal graphs are trees if the grid graph has no holes.
\end{lemma}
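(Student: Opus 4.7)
The plan is to show that $\mathcal{P}_x$ (the argument for $\mathcal{P}_y$ is symmetric) is connected and has exactly $|V(\mathcal{P}_x)|-1$ edges, hence is a tree. Connectedness is immediate, since $\mathcal{P}_x$ is the quotient of $G$ obtained by contracting each $x$-portal to a single vertex and identifying parallel edges, and the contraction of a connected graph is connected.

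For the edge count I would apply Euler's formula to the natural planar embedding of $G$. Split $E$ into $h$ horizontal and $v$ vertical edges, and let $f_b$ denote the number of bounded faces; then $n-(h+v)+(f_b+1)=2$, so $v=n-h+f_b-1$. Each $x$-portal is a maximal contiguous horizontal interval within a single row, so the portal count in a row equals (vertices) minus (horizontal edges) of that row; summing, $|V(\mathcal{P}_x)|=n-h$. The no-holes assumption forces every bounded face of $G$ to be a unit square of the grid, since any larger bounded face would enclose a grid vertex absent from $V$, which is by definition a hole.

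The heart of the argument is to show $|E(\mathcal{P}_x)|=v-f_b$. Fix adjacent portals $P$ in row $r$ and $Q$ in row $r+1$; because each is a horizontal interval and $G$ is taken to be the induced subgraph on $V$, the vertical edges between $P$ and $Q$ form a block of $m_{PQ}$ consecutive edges (those at $x$-coordinates in the overlap of $P$ and $Q$), and together with the corresponding sub-intervals of $P$ and $Q$ they bound exactly $m_{PQ}-1$ unit squares. Summing over all adjacent portal pairs yields $v=\sum m_{PQ}$, $|E(\mathcal{P}_x)|=\sum 1$, and $f_b=\sum(m_{PQ}-1)$, whence $|E(\mathcal{P}_x)|=v-f_b=n-h-1=|V(\mathcal{P}_x)|-1$, completing the proof. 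The main obstacle is the no-holes/unit-square equivalence together with the per-adjacency face count; both hinge on portals being intervals within their row and on $G$ being induced on $V$, which are small but essential structural inputs that I would flag and verify before the accounting step.
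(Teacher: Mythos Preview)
The paper does not give its own proof of this lemma; it is quoted from Coy et al.\ with only the remark that the statement and its proof carry over to triangular grids. So there is no in-paper argument to compare against.

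Your Euler-formula argument is correct and self-contained, provided $G$ is the \emph{induced} subgraph on $V$---an assumption you rightly single out. That assumption is in fact necessary and not merely cosmetic: on the $4\times 3$ block $\{0,1,2,3\}\times\{0,1,2\}$, deleting only the horizontal edge between $(1,1)$ and $(2,1)$ splits the middle row into two $x$-portals, and the resulting portal graph is a $4$-cycle even though the vertex complement is connected. In the amoebot setting $G_X$ is always induced, so your hypothesis matches the intended use.

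The one step worth spelling out is why ``no holes'' forces every bounded face to be a unit square. In an induced grid subgraph, take any bounded face $F$ and its topmost edge $e$; if both grid vertices directly below the endpoints of $e$ lie in $V$, inducedness closes off a unit square and that must be all of $F$; otherwise the missing vertex lies strictly in the interior of $F$ and is separated from infinity by $\partial F$, contradicting the connectedness of the complement. With that in hand, your accounting $|V(\mathcal P_x)|=n-h$, $|E(\mathcal P_x)|=v-f_b$, and $v-f_b=n-h-1$ goes through cleanly.
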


Let $\dist(u,v)$ denote the distance between $u$ and $v$ in $G$, and $\dist_d(u,v)$ denote the distance between $\portal(u)$ and $\portal(v)$ in $\mathcal P_d$.
Further, let $\dist(U,v) = \min_{u \in U} \dist(u,v)$.

\begin{lemma}[Coy et al.~\cite{DBLP:conf/sirocco/CoyCSSW23}]
\label{lem:portal_graph:square}
    Let $G$ be a square grid graph without holes.
    Then, $\dist(u,v) = \dist_x(u,v) + \dist_y(u,v)$ holds.
\end{lemma}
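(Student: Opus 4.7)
The plan is to prove the two inequalities $\dist(u,v) \ge \dist_x(u,v) + \dist_y(u,v)$ and $\dist(u,v) \le \dist_x(u,v) + \dist_y(u,v)$ separately.

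For the lower bound, I would take any shortest $u$-$v$ path $u = w_0, w_1, \dots, w_k = v$ in $G$, where $k = \dist(u,v)$, and split its edges into horizontal (count $H$) and vertical (count $V$) ones, so $k = H + V$. Every vertical edge of the path joins two distinct adjacent $x$-portals, whereas a horizontal edge stays inside one $x$-portal. Thus collapsing consecutive duplicates in $\portal_x(w_0), \dots, \portal_x(w_k)$ produces a walk in $\mathcal{P}_x$ of length exactly $V$ from $\portal_x(u)$ to $\portal_x(v)$. By \Cref{lem:portal_graph:tree}, $\mathcal{P}_x$ is a tree, and any walk between two tree vertices has length at least their tree distance; hence $V \ge \dist_x(u,v)$. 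A symmetric argument yields $H \ge \dist_y(u,v)$, and adding the two inequalities delivers the lower bound.

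For the upper bound, I would build an explicit $u$-$v$ walk of length $\dist_x(u,v) + \dist_y(u,v)$. Let $P_0, P_1, \dots, P_m$ be the unique tree path in $\mathcal{P}_x$ from $\portal_x(u)$ to $\portal_x(v)$. Each $P_i$ is a maximal horizontal segment in a single row, and consecutive portals $P_i, P_{i+1}$ share a non-empty contiguous interval $I_i$ of columns along which a vertical edge of $G$ is present. My construction traverses the portals in order: for each $i$ I select a transition column $c_i \in I_i$, use the unique vertical edge at $c_i$ between $P_i$ and $P_{i+1}$, and connect successive transition columns by horizontal walks inside the intervening $P_i$. This uses exactly $m = \dist_x(u,v)$ vertical steps.

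The main obstacle is to choose the $c_i$'s so that the total horizontal cost is exactly $\dist_y(u,v)$; this is where hole-freeness of $G$ becomes essential, since it guarantees that the mirror structure $\mathcal{P}_y$ is also a tree. The cleanest way I would carry out the argument is by induction on $\dist_x(u,v) + \dist_y(u,v)$: if $u = v$ there is nothing to show, and otherwise it suffices to exhibit a grid-neighbour $u'$ of $u$ in $G$ with $\dist_x(u',v) + \dist_y(u',v) = \dist_x(u,v) + \dist_y(u,v) - 1$, because then $\dist(u,v) \le 1 + \dist(u',v)$ and the induction hypothesis closes the loop. Establishing the existence of such a progress-making neighbour is the crux: one argues that, since both $\mathcal{P}_x$ and $\mathcal{P}_y$ are trees and $G$ is simply connected, at least one of the (up to four) neighbours of $u$ in $G$ lies either on the unique $\mathcal{P}_x$-tree path from $\portal_x(u)$ to $\portal_x(v)$ or on the unique $\mathcal{P}_y$-tree path from $\portal_y(u)$ to $\portal_y(v)$; hole-freeness precisely rules out the topological situation in which every local move would force a detour in both portal trees.
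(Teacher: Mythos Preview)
Your two-inequality split works for the lower bound, but the upper-bound half is where your argument stalls, and the paper's route (via Coy et al., spelled out for the triangular analogue in \Cref{lem:portal_graph:triangular}) sidesteps that difficulty entirely. The key observation you are missing is that a \emph{shortest} path in $G$ can never leave a portal and later return to it: if it did, the two visits could be joined by the straight segment inside that portal, strictly shortening the path. Consequently the walk in $\mathcal P_x$ that you extract from the shortest path is not merely a walk of length $V$ but a \emph{simple} path from $\portal_x(u)$ to $\portal_x(v)$; in a tree a simple path between two vertices \emph{is} the unique geodesic, so $V = \dist_x(u,v)$ (not just $V \ge \dist_x(u,v)$), and symmetrically $H = \dist_y(u,v)$. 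Adding gives $\dist(u,v) = H + V = \dist_x(u,v) + \dist_y(u,v)$ in one stroke, and no separate upper-bound construction is needed.

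By contrast, your upper-bound sketch has a genuine gap. The inductive step requires a $G$-neighbour $u'$ of $u$ with $\dist_x(u',v)+\dist_y(u',v)$ strictly smaller, but a grid step changes exactly one of the two portal distances by $\pm 1$, and you give no argument that at least one neighbour realises the $-1$ case. Concretely, the $\mathcal P_x$-neighbour of $\portal_x(u)$ on the tree path toward $\portal_x(v)$ need not be reachable from $u$ by a single vertical edge (the contact interval may lie elsewhere along $\portal_x(u)$), and moving horizontally inside $\portal_x(u)$ toward that contact may \emph{increase} $\dist_y$. Your final sentence asserts that hole-freeness rules this out, but that is precisely the substantive claim to be proved; as written it is a restatement of the goal, not an argument. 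The ``no portal revisit'' observation above is exactly the missing idea that turns your lower bound into an equality and makes the whole upper-bound detour unnecessary.
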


We adapt that definitions and results to triangular grids.
For that, we extend the definitions by \emph{$z$-portals} and \emph{$z$-portal graph} analogously (see \Cref{fig:portal:axes}).
\Cref{lem:portal_graph:tree} and its proof still hold for triangular grids.
But we have to adapt \Cref{lem:portal_graph:square} as follows.

\begin{lemma}
\label{lem:portal_graph:triangular}
    Let $G$ be a triangular grid graph without holes.
    Then, $2 \cdot \dist(u,v) = \dist_x(u,v) + \dist_y(u,v) + \dist_z(u,v)$ holds.
\end{lemma}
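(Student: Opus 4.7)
The plan is to prove the two inequalities $d_x(u,v) + d_y(u,v) + d_z(u,v) \leq 2\dist(u,v)$ and $d_x(u,v) + d_y(u,v) + d_z(u,v) \geq 2\dist(u,v)$ separately. For the upper bound I fix any shortest $u$-$v$ path $\pi$ in $G$ and let $a$, $b$, $c$ be the number of its $x$-, $y$-, and $z$-edges, so $a + b + c = \dist(u, v)$. Since $x$-edges stay inside a single $x$-portal by definition of $\mathcal{P}_x$, the sequence of $x$-portals traversed by $\pi$ is a walk of length $b + c$ in $\mathcal{P}_x$ from $\portal_x(u)$ to $\portal_x(v)$; because $\mathcal{P}_x$ is a tree by \Cref{lem:portal_graph:tree}, the tree distance $d_x(u,v)$ lower-bounds the walk length, giving $b + c \geq d_x(u,v)$. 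The symmetric bounds $a + c \geq d_y(u,v)$ and $a + b \geq d_z(u,v)$ follow in the same way, and summing yields $2\dist(u,v) \geq d_x(u,v) + d_y(u,v) + d_z(u,v)$.

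For the opposite direction I would construct an explicit path of length $\tfrac{1}{2}(d_x + d_y + d_z)$ by induction on $d_x + d_y + d_z$. First a short parity check: every edge of $\mathcal{P}_x$ comes from a $y$- or $z$-edge of $G$, both of which shift $\projy{\cdot}$ by exactly one, so along the tree path realizing $d_x(u,v)$ the $y$-projection changes by $\pm 1$ at every step, forcing $d_x(u,v) \equiv |\projy{u} - \projy{v}| \pmod 2$, and analogously for $d_y$ and $d_z$. Since the three axial differences $\projx{u}-\projx{v}$, $\projy{u}-\projy{v}$, $\projz{u}-\projz{v}$ are linearly dependent (one is determined by the other two), the three parities sum to zero, so $d_x + d_y + d_z$ is always even. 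The base case $d_x + d_y + d_z = 0$ forces $u$ and $v$ to share all three portals and hence to coincide.

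The inductive step rests on the following key claim: whenever $d_x(u,v) + d_y(u,v) + d_z(u,v) > 0$, the vertex $u$ has a neighbor $u'$ in $G$ such that this sum drops by exactly $2$. Granting the claim, $\dist(u,v) \leq 1 + \dist(u',v) \leq 1 + \tfrac{1}{2}(d_x + d_y + d_z - 2) = \tfrac{1}{2}(d_x + d_y + d_z)$ by induction. Traversing any edge of $G$ keeps the portal of its own axis fixed and shifts the other two portals to neighbors in the respective trees, so the change in the sum lies in $\{-2, 0, +2\}$; my task is therefore to exhibit an edge realizing the value $-2$, i.e.\ an edge that simultaneously advances $u$ along two of the three tree geodesics from $\portal_{\cdot}(u)$ toward $\portal_{\cdot}(v)$.

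The hard part will be proving this neighbor-existence claim. In the infinite grid $\Geqt$, the six neighbor directions of $u$ cover every compatible combination of portal-tree advancements, so a doubly-progressing edge is always available. In a hole-free subgraph some neighbor positions can be missing, and I would argue by a case analysis driven by which of $d_x(u,v), d_y(u,v), d_z(u,v)$ are positive together with the local orientations of the three geodesics (noting in particular that when some $d_{\cdot}$ equals zero, only the edge type of the corresponding axis is usable). The hole-freeness of $G$ is essential: a missing neighbor that simultaneously blocks every useful direction would have to be enclosed by $G$, contradicting the assumption that the complement of $X$ is connected. Combined with the tree structure of all three portal graphs, this local obstruction argument should yield the required neighbor in every case and close the induction.
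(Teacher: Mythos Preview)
Your upper bound $d_x + d_y + d_z \le 2\dist(u,v)$ is clean and correct: taking a shortest path and counting edge types gives walks of the stated lengths in the three portal trees, and tree distances lower-bound walk lengths.

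The lower bound, however, has a genuine gap. You reduce everything to the neighbor-existence claim and then only sketch a case analysis, ending with the assertion that ``a missing neighbor that simultaneously blocks every useful direction would have to be enclosed by $G$''. This is not obvious: when $u$ lies on the boundary of $X$ several neighbors are missing without any of them being enclosed, so the simple-connectivity argument as stated does not bite. You would need a careful local analysis of how the three portal-tree geodesics can be oriented at $u$ and why, in a hole-free structure, at least two of them must point into the same existing neighbor. That analysis is doable but is the entire content of the lemma, and it is not carried out here.

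The paper avoids this construction altogether. It inducts along a fixed shortest path $(u=w_1,\dots,w_m=v)$ and uses one key observation: a shortest path in a hole-free triangular grid cannot visit the same portal twice, because the detour between the two visits can be replaced by the straight segment inside the portal, shortening the path. Given this, each step $w_i \to w_{i+1}$ keeps one portal distance fixed and increases the other two by exactly one (the portal trees are trees by \Cref{lem:portal_graph:tree}, and the visited portals form a simple path in each). Hence the sum $d_x + d_y + d_z$ grows by exactly $2$ per step, yielding the equality directly. This single ``no-revisit'' lemma replaces your entire lower-bound construction and is where you should spend your effort.
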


\iftoggle{full}{
\begin{proof}
    Intuitively, the sum of the distances in the portal graphs count each edge twice.
    The following proof is analogous to the proof of Lemma 21 in \cite{DBLP:conf/sirocco/CoyCSSW23}.
    Consider a shortest path $(u=w_1,\dots,w_m=v)$ from $u$ to $v$.
    We prove the statement for this path by induction.
    The induction base $i = 1$ holds trivially. 
    
    Suppose that the statement holds for the first $i$ nodes of the path.
    Consider node $w_{i+1}$.
    By induction hypothesis, $2 \cdot \dist(u,w_i) = \dist_x(u,w_i) + \dist_y(u,w_i) + \dist_z(u,w_i)$ holds.
    
    For now, assume that edge $(w_i,w_{i+1})$ is parallel to the $x$-axis.
    This implies that $w_i$ and $w_{i+1}$ belong to the same $x$-portal and hence, $\dist_x(u,w_{i+1}) = \dist_x(u,w_i)$ holds.
    Note that shortest path cannot visit any portals more than once.
    Otherwise, we would be able to shorten the path.
    Combined with \Cref{lem:portal_graph:tree}, this implies $\dist_y(u,w_{i+1}) = \dist_y(u,w_i) + 1$ and $\dist_z(u,w_{i+1}) = \dist_z(u,w_i) + 1$.
    Altogether, we have $2 \cdot \dist(u,w_{i+1}) =  2 \cdot \dist(u,w_i) + 2 = \dist_x(u,w_i) + \dist_y(u,w_i) + \dist_z(u,w_i) + 2 = \dist_x(u,w_{i+1}) + \dist_y(u,w_{i+1}) + \dist_z(u,w_{i+1})$.
    The cases where $(w_i,w_{i+1})$ is parallel to the $y$ or $z$-axis work analogously.
\end{proof}
}{}

However, the amoebot structure has no access to the portal graphs.
Instead, similar to \cite{DBLP:conf/sirocco/CoyCSSW23}, we define a subgraph of $G_X$ for each portal graph that preserves its properties as follows.

\begin{definition}
    The \emph{implicit portal graph} of $\mathcal P_x$ is the subgraph $T = (X,E_x \cup E'_x)$ of $G_X$ where $E_x$ is the set of edges parallel to the $x$-axis and $E'_x$ cotains the westernmost edge between each pair of adjacent $x$-portals (see \Cref{fig:portal:full,fig:portal:x}).
    We define the implicit portal graphs of $\mathcal P_y$ and $\mathcal P_z$ analogously (see \Cref{fig:portal:y,fig:portal:z}).
\end{definition}

Intuitively, the vertices of a portal graph are portals while the vertices of an implicit portal graph are amoebots.
Note that $T$ is a tree since we connect all amoebots of each portal into a chain and each portal graph is a tree.

\iftoggle{full}{

Furthermore, note that each amoebot can locally decide which of its incident edges belong to $T$.
For example, consider an amoebot for the implicit portal graph of the $x$-portal graph.
The edges to the west and east always belong to $T$ since they are parallel to the $x$-axis, i.e., they belong to $E_x$.
The edge to the north-west (south-west) belongs to $T$ if there is no edge to the west since the amoebot is the westernmost amoebot of its portal.
The edge to the north-east (south-east) belongs to $T$ if there is no edge to the north-west (south-west) since the amoebot to the north-east (south-east) is the westernmost amoebot of its portal.

}{}

\begin{figure}[tbp]
    \begin{minipage}[t]{.21\linewidth}
        \centering
        \includegraphics{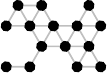}
        \subcaption{Initial structure.}
        \label{fig:portal:full}
    \end{minipage}
    \hfill
    \begin{minipage}[t]{.21\linewidth}
        \centering
        \includegraphics{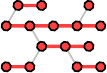}
        \subcaption{$x$-portal graph.}
        \label{fig:portal:x}
    \end{minipage}
    \hfill
    \begin{minipage}[t]{.21\linewidth}
        \centering
        \includegraphics{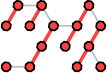}
        \subcaption{$y$-portal graph.}
        \label{fig:portal:y}
    \end{minipage}
    \hfill
    \begin{minipage}[t]{.21\linewidth}
        \centering
        \includegraphics{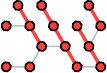}
        \subcaption{$z$-portal graph.}
        \label{fig:portal:z}
    \end{minipage}
    \hfill
    \begin{minipage}[t]{.13\linewidth}
        \centering
        \includegraphics{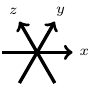}
        \subcaption{Axes.}
        \label{fig:portal:axes}
    \end{minipage}
    \caption{
        (Implicit) portal graphs.
        Each red connected component indicates a portal.
        We obtain the portal graphs by fusing the amoebots of each portal to a single node.
    }
    \label{fig:portal}
\end{figure}

\begin{lemma}
\label{obs:portal}
    Let $P$ be a portal.
    Let $u,v \in V \setminus P$ be two nodes.
    The shortest path from $u$ to $v$ traverses $P$ iff $u$ and $v$ are not in the same connected component of $V \setminus P$.
\end{lemma}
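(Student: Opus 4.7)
The lemma is an equivalence. The $(\Leftarrow)$ direction is immediate: if $u$ and $v$ lie in different connected components of $V \setminus P$, no path in $V$ between them can avoid $P$, so the shortest path must traverse $P$.

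For the $(\Rightarrow)$ direction I would prove the contrapositive: if $u$ and $v$ lie in the same connected component $C$ of $V \setminus P$, then no shortest path from $u$ to $v$ visits $P$. Assume without loss of generality that $P$ is an $x$-portal. The first step is to identify the components of $V \setminus P$ with the subtrees of $T_x - P$, where $T_x$ denotes the $x$-portal graph (a tree by \Cref{lem:portal_graph:tree}): the vertex union of each subtree is connected via $x$-edges inside portals and via at least one cross-edge between neighboring portals, while an edge in $V \setminus P$ across two subtrees would yield an adjacency in $T_x$ bypassing $P$, contradicting tree structure. Hence $\portal_x(u)$ and $\portal_x(v)$ lie in the same subtree of $T_x - P$, and the unique $T_x$-path between them avoids $P$.

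The core step is a counting argument on edge types. For any shortest path $\pi$ from $u$ to $v$, let $a_x, a_y, a_z$ count its edges in each axis. The $x$-portals visited by $\pi$ form a walk in $T_x$ from $\portal_x(u)$ to $\portal_x(v)$ of length $a_y + a_z$ (the only way to leave an $x$-portal along $\pi$ is via a non-$x$-edge), and analogous walks in $T_y$ and $T_z$ have lengths $a_x + a_z$ and $a_x + a_y$. Since every walk in a tree has length at least the tree distance between its endpoints, I obtain $a_y + a_z \ge \dist_x(u,v)$, $a_x + a_z \ge \dist_y(u,v)$, and $a_x + a_y \ge \dist_z(u,v)$. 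Summing and invoking \Cref{lem:portal_graph:triangular} yields $2|\pi| \ge \dist_x(u,v) + \dist_y(u,v) + \dist_z(u,v) = 2|\pi|$, so equality must hold throughout. In particular the walk induced by $\pi$ in $T_x$ coincides with the unique tree path from $\portal_x(u)$ to $\portal_x(v)$, without any repeated portal; since this path avoids $P$, $\pi$ visits no vertex of $P$.

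The key step to get right is the equality-in-walks observation, which crucially combines the tree structure of the portal graphs (\Cref{lem:portal_graph:tree}) with the tight distance formula (\Cref{lem:portal_graph:triangular}). Beyond this, no further geometric case analysis is needed, which is what makes the argument clean despite the triangular grid's complexity.
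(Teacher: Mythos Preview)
Your proof is correct, but it follows a genuinely different route from the paper's. The paper argues the $(\Rightarrow)$ direction by a local shortcutting observation: along any shortest path, whenever two nodes $w_1,w_m$ lie in the same portal $P'$, every intermediate node must also lie in $P'$ (otherwise one could replace the subpath by the straight segment inside $P'$). Assuming $u$ and $v$ lie in the same component of $V\setminus P$, the paper then looks at the unique portal $P'$ in that component adjacent to $P$ and observes that any shortest path through some $p\in P$ would have to enter and leave $P'$ on either side of $p$, forcing $p\in P'$---a contradiction.

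Your argument instead establishes globally that every shortest path induces a geodesic walk in each portal tree, by combining the three tree lower bounds $a_y+a_z\ge\dist_x(u,v)$ etc.\ with the tight identity of \Cref{lem:portal_graph:triangular}. This is cleaner and avoids any geometric case analysis, but it leans on the distance formula, whereas the paper's proof is entirely self-contained (it uses only that portals are line segments and that the portal graph is a tree). Both yield the same strengthened conclusion that \emph{no} shortest path visits $P$ when $u,v$ share a component. Your identification of the components of $V\setminus P$ with the subtrees of $T_x-P$ is correct and worth stating explicitly, as the paper leaves this implicit.
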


\iftoggle{full}{
\begin{proof}
    Suppose that the shortest path from $u$ to $v$ traverses $P$.
    Let $(u, \dots, v)$ denote the sequence of amoebots traversed by the shortest path.
    Consider a subpath $(w_1, \dots, w_m)$.
    If $w_1 \in P'$ and $w_m \in P'$, then for each $1 \leq i \leq m$, $w_i \in P'$.
    Otherwise, we could shorten the shortest path with the path between $w_1$ and $w_m$ in $P'$, which contradicts the assumption.
    Now, suppose that $u$ and $v$ are in the same connected component of $V \setminus P$.
    Let $P'$ be the portal in the connected component of $u$ and $v$ that is adjacent to $P$.
    Note that $P'$ is unique since we assume that the amoebot structure has no holes.
    Then, the subpath from $u$ to $p$ has to traverse an amoebot $u' \in P'$, and the subpath from $p$ to $v$ has to traverse an amoebot $v' \in P'$, respectively.
    Consider the subpath $(u', \dots, p, \dots, v')$.
    Since $u' \in P'$ and $v' \in P'$, $p \in P'$.
    This is a contradiction to $p \in P$.
    Hence, $u$ and $v$ cannot be in the same connected component of $V \setminus P$ if the shortest path from $u$ to $v$ traverses $P$.
    
    Suppose that $u$ and $v$ are not in the same connected component of $V \setminus P$.
    Then, each path from $u$ to $v$ has to traverse $P$.
\end{proof}
}{}



\iftoggle{full}{

\section{Tree Primitives}

Our algorithms make extensive use of tree structures.
In this section, we introduce important tree primitives that we believe to be of independent interest.
These are not limited to the geometric variant of the amoebot model.
Therefore, we will speak of nodes instead of amoebots.

In the first subsection, we will introduce the Euler tour technique by Tarjan and Vishkin and adapt it to the reconfigurable circuit extension.
This technique serves as the framework for the tree primitives that we present in the second and third subsection.
The first primitive roots the tree at a given node and prunes any subtrees without nodes in a given set.
The second primitive elects a node from a given set of nodes.
The third primitive computes the centroid(s) of a tree.
In the last subsection, we will adapt the primitives to implicit portal graphs.


\subsection{Euler Tour Technique}
\label{sec:euler}

In order to solve both problems, we will utilize the \emph{Euler tour technique (ETT)} that was introduced by Tarjan and Vishkin for the \emph{PRAM model}\footnote{More precisely, Tarjan and Vishkin assume the concurrent-read, concurrent-write PRAM (CRCW PRAM) model.} \cite{DBLP:journals/siamcomp/TarjanV85}.
The technique allows the computation of various tree functions, e.g., computing a rooted version of a tree, a pre- and postorder numbering of the nodes, the number of descendants of each node, the level of each node, and the centroid(s) of a tree \cite{DBLP:conf/icpp/CongB04,DBLP:journals/siamcomp/TarjanV85}.
In the following, we will explain the technique and adapt it to the reconfigurable circuit extension.

Let $T = (V_T, E_T)$ be a tree and $r \in V_T$ an arbitrary node.
We replace each undirected edge $\{ u, v\} \in E_T$ with two directed edges $(u,v)$ and $(v,u)$.
Let $T' = (V_T, E'_T)$ denote the resulting graph.
Consider any Euler cycle of $T'$, e.g., for each edge $(u,v) \in E'_T$, let edge $(v,w) \in E'_T$ be the next edge of the Euler cycle where $w$ is the next counterclockwise neighbor of $v$ with respect to $u$.
By splitting the Euler cycle at $r$, we obtain an Euler tour $\pi$ of $T'$ that starts and ends at $r$.

Let $\weight : E'_T \to \{ 0, 1 \}$ be any weight function.
The exact definition of the function is application-specific.
Let $(e_0, \dots, e_{|E'_T|-1})$ denote the sequence of edges traversed by $\pi$.
The first step of the ETT is to compute the prefix sum of each edge, i.e., each $e_i \in E'_T$ computes $\prefix_{e_i} = \sum_{j = 0}^i \weight(e_i)$.
For that, Tarjan and Vishkin utilize a processor for each edge, and apply a doubling technique, which requires pointers between non-incident edges (see \cite{DBLP:journals/siamcomp/TarjanV85} for more details).
Neither is possible in the amoebot model.
Nor do we have the memory to store the prefix sums.

Instead, we will utilize the PASC algorithm on the nodes to compute the prefix sums bit by bit as follows.
Each node in $V_T$ operates an independent instance for each of its occurrences on $\pi$.
Let $(v_0, \dots, v_{|E'_T|})$ denote the sequence of instances traversed by $\pi$.
Observe that node $r$ operates the first and last instance of the sequence, i.e., $v_0$ and $v_{|E'_T|}$.
Let $V'_T$ denote the set of all instances.

We define a weight function $\weight : V'_T \to \{ 0, 1 \}$ such that $\weight(v_i) = \weight(e_i)$ for each $0 \leq i < |E'_T|$ and $\weight(v_{|E'_T|}) = 0$.
We apply the PASC algorithm on the sequence of nodes to compute the prefix sums of each node bit by bit, i.e., each $v_i \in V'_T$ computes $\prefix_{v_i} = \sum_{j = 0}^i \weight(v_i)$ bit by bit.
By definition, $\prefix_{e_i} = \prefix_{v_i} = \prefix_{v_{i+1}} - \weight(v_{i+1})$ holds.
Hence, for each edge $e_i = (v_i, v_{i+1}) \in E'_T$, both incident nodes $v_i$ and $v_{i+1}$ are able to compute $\prefix_{e_i}$ bit by bit.

In the second step of the ETT, each node $u \in V_T$ computes $\prefix_{(u,v)} - \prefix_{(v,u)}$ for each of its neighbors $v \in \neighborhood(u)$.
As in the first step, we have to compute all differences bit by bit.
This can be done in parallel with the first step.

\begin{lemma}
\label{lem:euler}
    Let an Euler tour $\pi$ of $T' = (V_T, E'_T)$ be given, i.e., each amoebot knows its predecessor and successor of each of its occurrences.
    Further, let a weight function $\weight : E'_T \to \{ 0, 1 \}$ be given, i.e., for each $(u, v) \in E'_T$, amoebot $u$ knows $\weight(u,v)$.
    The ETT computes the following values in parallel.
    \begin{enumerate}
        \item 
        Each amoebot computes the prefix sum of each of its incident edges.
        More precisely, in the $i$-th iteration of the ETT, each amoebot computes the $i$-th bit of the prefix sum of each of its incident edges.
        \item 
        Each amoebot $u \in V_T$ computes $\prefix_{(u,v)} - \prefix_{(v,u)}$ for each of its neighbors $v \in \neighborhood(u)$.
        More precisely, in the $i$-th iteration of the ETT, each amoebot $u \in V_T$ computes the $i$-th bit of $\prefix_{(u,v)} - \prefix_{(v,u)}$ for each of its neighbors $v \in \neighborhood(u)$.
    \end{enumerate}
    Furthermore, the ETT terminates after $O(\log W)$ rounds where $W = \sum_{e \in E'_T} \weight(e)$.
\end{lemma}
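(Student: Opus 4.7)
The plan is to invoke Corollary~\ref{cor:pasc:prefixsum} on the chain $(v_0,\dots,v_{|E'_T|})$ of occurrences induced by the Euler tour $\pi$, using the weight function $\weight:V'_T\to\{0,1\}$ defined in the paragraph preceding the statement. The chain is physically realized by letting each occurrence at an amoebot use its own pair of partition sets chained to its predecessor and successor along $\pi$; since each amoebot is given the neighboring occurrences for every one of its appearances, this can be configured locally, and distinct occurrences at the same amoebot live on disjoint circuits and participate in the PASC computation independently. Because $\sum_{v\in V'_T}\weight(v)=\sum_{e\in E'_T}\weight(e)=W$, Corollary~\ref{cor:pasc:prefixsum} produces $\prefix_{v_i}$ bit by bit at every occurrence in $O(\log W)$ iterations, the $i$-th bit appearing in the $i$-th iteration.

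Claim~(1) then follows edge by edge. For $e_i=(v_i,v_{i+1})$, the amoebot hosting $v_i$ reads the $i$-th bit of $\prefix_{e_i}$ directly off its own $\prefix_{v_i}$, and the amoebot hosting $v_{i+1}$ obtains it via $\prefix_{e_i}=\prefix_{v_{i+1}}-\weight(v_{i+1})$; since $\weight(v_{i+1})\in\{0,1\}$ is known locally, this decrement is resolved with a single borrow bit maintained across iterations and costs $O(1)$ state per occurrence. For claim~(2), fix an amoebot $u$ and a neighbor $v\in\neighborhood(u)$. Exactly one occurrence of $u$ is the tail of $(u,v)$ and a different occurrence of $u$ is the head of $(v,u)$, so by claim~(1) $u$ learns both $\prefix_{(u,v)}$ and $\prefix_{(v,u)}$ bit by bit, least significant bit first. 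Because PASC emits bits in LSB-first order, a ripple-borrow subtractor produces the $i$-th bit of $\prefix_{(u,v)}-\prefix_{(v,u)}$ in iteration $i$ while storing only one borrow bit per neighbor; this runs concurrently with PASC and adds no additional rounds.

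The overall runtime is therefore the $O(\log W)$ iterations of Corollary~\ref{cor:pasc:prefixsum}. The main point I expect to need to argue carefully is the non-interference of the multiple occurrences hosted by the same amoebot, since the Euler-tour reduction hinges on each occurrence behaving as an independent chain node; the disjoint-partition-set simulation outlined above, combined with the fact that $\weight(v_i)=\weight(e_i)$ ensures the partial sums along $\pi$ agree with the partial sums along the edge sequence, discharges this concern. The bookkeeping for the one-bit borrows in the decrement and in the subtractor fits within the amoebots' constant memory, so nothing else is at risk.
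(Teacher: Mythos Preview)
Your proposal is correct and follows the same approach as the paper: both reduce to Corollary~\ref{cor:pasc:prefixsum} applied to the occurrence chain $(v_0,\dots,v_{|E'_T|})$ with the node weights inherited from the edge weights, and both rely on the identities $\prefix_{e_i}=\prefix_{v_i}=\prefix_{v_{i+1}}-\weight(v_{i+1})$ already set up before the lemma. The paper's proof is a one-line appeal to that corollary, whereas you additionally spell out the per-occurrence circuit isolation and the LSB-first borrow bookkeeping for the decrement and difference; these details are sound and consistent with the paper's setup but are not required for the argument to go through.
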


\begin{proof}
    The statement follows directly from \Cref{cor:pasc:prefixsum}.
\end{proof}

\begin{corollary}
\label{cor:euler:number}
    In particular, amoebot $r$ computes $W$ bit by bit.
\end{corollary}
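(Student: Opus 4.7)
My plan is to identify the instance of $r$ on the Euler tour whose prefix sum equals $W$ and to invoke the underlying PASC machinery on the chain of instances. First, I would observe that because $\pi$ is produced by splitting the Euler cycle at $r$, the tour starts and ends at $r$; in particular, both $v_0$ and $v_{|E'_T|}$ in the sequence $(v_0, \dots, v_{|E'_T|})$ are operated by amoebot $r$. Since the Euler tour is translated into a chain on which the PASC algorithm is executed, \Cref{cor:pasc:prefixsum} guarantees that $r$ computes the prefix sum $\prefix_{v_{|E'_T|}}$ of the last instance bit by bit.

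Second, I would unfold this prefix sum using the weight function defined on instances. Since $\weight(v_i) = \weight(e_i)$ for $0 \leq i < |E'_T|$ and $\weight(v_{|E'_T|}) = 0$, we have
\[
    \prefix_{v_{|E'_T|}} = \sum_{j=0}^{|E'_T|} \weight(v_j) = \sum_{j=0}^{|E'_T|-1} \weight(e_j) = \sum_{e \in E'_T} \weight(e) = W,
\]
so $r$ computes $W$ bit by bit, as required.

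I do not expect any real obstacle here: the corollary is essentially a bookkeeping consequence of the two facts that $\pi$ traverses each directed edge of $T'$ exactly once and, by construction, returns to its starting node $r$. The only thing one has to be a bit careful about is to work with the instance sequence (where the PASC prefix sum is directly defined) rather than with the edge sequence, and to account for the dummy weight $\weight(v_{|E'_T|}) = 0$ at the final instance.
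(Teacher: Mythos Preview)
Your proposal is correct and takes essentially the same approach as the paper: both arguments rest on the fact that $r$ operates the final instance $v_{|E'_T|}$ of the Euler tour and that $\prefix_{v_{|E'_T|}} = W$. The paper's proof is simply terser, stating ``by definition, $\prefix_{v_{|E'_T|}} = W$'' without unfolding the sum as you do.
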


\begin{proof}
    Recall that $r$ operates the last instance $v_{|E'_T|}$.
    By definition, $\prefix_{v_{|E'_T|}} = W$.
\end{proof}

\begin{remark}
    In the first step of the ETT, each instance only requires $O(1)$ memory space.
    Each node $u \in V_T$ operates $\Theta(\degree_T(u))$ instances where $\degree_T(v)$ denotes the degree of node $v$ within $T$.
    In the second step of the ETT, each amoebot $u \in V_T$ computes a difference for each of its $\degree_T(u)$ neighbors.
    Each computation requires $O(1)$ memory space.
    Note that in the geometric variant of amoebot model, amoebots have sufficient memory since their degree is bounded by $6$.
\end{remark}

In the remainder of this section, we will define a weight function $\weight_Q : E'_T \to \{ 0, 1 \}$ for a set $Q \subseteq V_T$ that we will use in the subsequent sections.
Each node $u \in Q$ marks exactly one of its out-going edges.
For each $e \in E'_T$, we set $\weight_Q(e) = 1$ if $e$ has been marked, and $\weight_Q(e) = 0$ otherwise.
Note that $W = |Q|$.
For this weight function, we obtain the following properties, which are a generalization of the results in \cite{DBLP:journals/siamcomp/TarjanV85}.

\begin{lemma}
\label{lem:euler:properties}
    Let $u \in V_T$.
    If $p \in \neighborhood(u)$ is the parent of $u$ with respect to $r$, the following properties hold.
    \begin{enumerate}
        \item The subtree of $u$ with respect to $r$ contains $\prefix_{(u,p)} - \prefix_{(p,u)}$ nodes in $Q$.
        \item $\prefix_{(u,p)} - \prefix_{(p,u)} \geq 0$.
    \end{enumerate}
    If $c \in \neighborhood(u)$ is a child of $u$ with respect to $r$, the following properties hold.
    \begin{enumerate}
        \setcounter{enumi}{2}
        \item The subtree of $c$ with respect to $r$ contains $\prefix_{(c,u)} - \prefix_{(u,c)}$ nodes in $Q$.
        \item $\prefix_{(u,c)} - \prefix_{(c,u)} \leq 0$.
    \end{enumerate}
\end{lemma}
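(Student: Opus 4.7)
The plan is to establish property 1 by a careful accounting of which directed edges of $T'$ appear strictly after $(p,u)$ and up to (and including) $(u,p)$ in the Euler tour $\pi$. Properties 3 and 4 will follow by applying 1 and 2 to the pair $(c,u)$, and property 2 will be immediate from the combinatorial interpretation in 1.

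First, I would recall the standard structure of $\pi$: since the tour is a Eulerian walk of $T'$ started at $r$, any time the walk first enters a node via an incoming edge from its parent, it performs a depth-first traversal of that node's subtree before eventually returning along the reverse edge. Applied to the edge $(p,u)$, this means the walk enters $u$ for the first time right after position $(p,u)$ and leaves the subtree for the last time at position $(u,p)$. Hence the set of directed edges $E_u$ occurring strictly after $(p,u)$ and up to and including $(u,p)$ in $\pi$ is exactly
\[
E_u = \{(x,y),(y,x) : \{x,y\}\in E_T,\; x,y\in \subtree(u)\} \cup \{(u,p)\},
\]
where $\subtree(u)$ denotes the closed subtree of $u$ with respect to $r$.

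Next, I would compute the difference $\prefix_{(u,p)} - \prefix_{(p,u)}$ by summing $\weight_Q$ over $E_u$. Since $\weight_Q$ marks, for each node $q\in Q$, exactly one outgoing edge of $q$, it suffices to argue that a node $q\in Q$ contributes $1$ to this sum iff $q\in \subtree(u)$. If $q\in\subtree(u)$ and $q\ne u$, then all neighbors of $q$ in $T$ lie in $\subtree(u)$ (its parent is an ancestor below $u$, its children are descendants), so every outgoing edge of $q$ lies in $E_u$; in particular, its marked edge does. If $q=u$, then $u$'s outgoing edges are $(u,p)$ and $(u,c_i)$ over children $c_i$, all of which lie in $E_u$. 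Conversely, if $q\notin\subtree(u)$, then no outgoing edge of $q$ lies in $E_u$: the only edge of $E_u$ with source outside $\subtree(u)$ would be one starting at $p$, and the only such candidate is $(p,u)$, which is explicitly excluded from the range. Therefore
\[
\prefix_{(u,p)} - \prefix_{(p,u)} = |Q\cap \subtree(u)|,
\]
which proves property 1 and, as a nonnegative cardinality, immediately yields property 2. Properties 3 and 4 then follow by invoking properties 1 and 2 with $c$ playing the role of $u$ and $u$ playing the role of $p$, and negating.

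The only delicate point I expect is the corner case in which $u\in Q$ chooses $(u,p)$ as its marked outgoing edge: one must check that this edge is indeed included in the half-open range defining $\prefix_{(u,p)} - \prefix_{(p,u)}$, which is why the range is taken to include $(u,p)$ but exclude $(p,u)$. The rest is bookkeeping about the DFS structure of an Euler tour, together with the observation that outgoing edges of a descendant of $u$ can only leave $\subtree(u)$ if that descendant equals $u$ itself, in which case the exit edge $(u,p)$ is still captured by $E_u$.
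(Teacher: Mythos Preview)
Your proof is correct and follows essentially the same approach as the paper's. Both arguments rest on the observation that the Euler tour enters the subtree of $u$ through $(p,u)$, traverses exactly the directed edges internal to the subtree, and exits through $(u,p)$, so that $\prefix_{(u,p)}-\prefix_{(p,u)}$ counts the marked outgoing edges whose source lies in $\subtree(u)$; since each node of $Q$ marks exactly one outgoing edge, this equals $|Q\cap\subtree(u)|$. Your write-up is more explicit about the edge set $E_u$ and the corner case $q=u$, but the underlying idea and the derivation of properties 3 and 4 from 1 and 2 are identical to the paper's.
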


\begin{proof}
    The Euler tour $\pi$ enters the subtree of $u$ through $(p,u)$, traverses all edges of the subtree, and leaves the subtree through $(u,p)$.
    By definition, the traversed subpath contains $\prefix_{(u,p)} - \prefix_{(p,u)}$ instances in $Q$.
    This is also the number of nodes in $Q$ in the subtree of $u$ since the subpath consists of all instances of the subtree and each node in $Q$ marks exactly one instance.
    This already proves the first two properties.
    Note that they immediately imply the other two properties.
\end{proof}

\begin{figure}[tbp]
    \begin{minipage}[t]{\textwidth}
        \centering
        \includegraphics{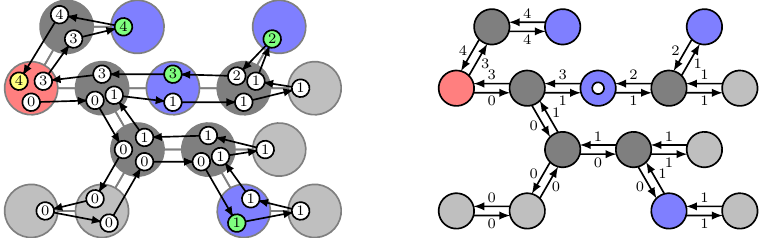}
        \subcaption{Tree structure.}
        \label{fig:rap:tree}
    \end{minipage}

    \bigskip
    
    \begin{minipage}[t]{\textwidth}
        \centering
        \includegraphics{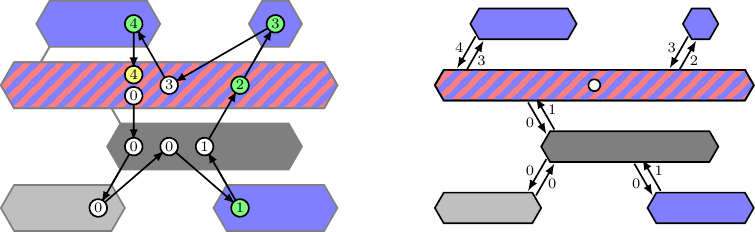}
        \subcaption{Portal graph.}
        \label{fig:rap:pg}
    \end{minipage}

    \bigskip
    
    \begin{minipage}[t]{\textwidth}
        \centering
        \includegraphics{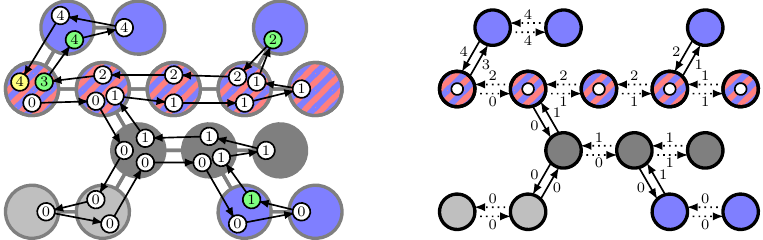}
        \subcaption{Implicit portal graph.}
        \label{fig:rap:ipg}
    \end{minipage}
    \caption{
        Tree algorithms.
        The red nodes/portals indicates $r$ and $R$, respectively.
        The blue nodes/portals indicate $Q$ and $\mathcal Q$, respectively.
        Note that in the (implicit) portal graph, $R \in \mathcal Q$.
        The left figures show the PASC algorithm on the Eulerian path, respectively.
        The smaller nodes indicate the instances of the bigger nodes/portals.
        Only the green instances participate in the PASC algorithm, i.e., they have weight $1$.
        The yellow node indicates the last instance which computes $m$.
        The numbers within the instances indicate their prefix sums.
        The right figures show the prefix sums for the edges, respectively.
        The root and prune algorithm prunes the light gray nodes/portals.
        The centroid algorithm identifies nodes/portals marked by a white dot as $Q$- or $\mathcal Q$-centroids.
    }
    \label{fig:rap}
\end{figure}


\subsection{Root and Prune Primitive}
\label{sec:rap}

Let $T = (V_T, E_T)$ be a tree.
Let a node $r \in V_T$ and a subset $Q \subseteq V_T$ be given, i.e., each node $u \in V_T$ knows whether $u = r$ and whether $u \in Q$.
Let $\subtree_r(u)$ denote the set of nodes in the subtree of $u$ in the tree rooted at $r$.
Let $V_Q = \{u \in V_T \mid \subtree_r(u) \cap Q \neq \emptyset \}$ denote the set of nodes whose subtree contains nodes in $Q$.
The goal of the \emph{root and prune problem} is to root $T$ at $r$ and to prune all subtrees without a node in $Q$, i.e., (i) each node $u \in V_T$ determines whether $u \in V_Q$, and (ii) each amoebot $u \in V_Q \setminus \{ r \}$ identifies its parent $v \in \neighborhood(u)$ with respect to $r$.
In order to solve this problem, we make use of the following \Cref{cor:rap:properties} of \Cref{lem:euler:properties} and \Cref{lem:rap:root}.

\begin{corollary}
\label{cor:rap:properties}
    Let $u \in V_T \setminus \{ r \}$.
    The subtree of $u$ with respect to $r$ does not contain any nodes in $Q$ iff $\prefix_{(u,v)} - \prefix_{(v,u)} = 0$ for all $v \in \neighborhood(u)$.
    Suppose that the subtree of $u$ contains at least one node in $Q$.
    A node $v \in \neighborhood(u)$ is the parent of $u$ iff $\prefix_{(u,v)} - \prefix_{(v,u)} > 0$.
\end{corollary}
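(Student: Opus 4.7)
The plan is to prove both equivalences as direct consequences of \Cref{lem:euler:properties} applied with the weight function $\weight_Q$. Under this weight function, $\prefix_{(u,p)} - \prefix_{(p,u)}$ counts exactly $|\subtree_r(u) \cap Q|$ when $p$ is the parent of $u$, and similarly $\prefix_{(c,u)} - \prefix_{(u,c)}$ counts $|\subtree_r(c) \cap Q|$ when $c$ is a child of $u$. Fix $u \in V_T \setminus \{r\}$ with parent $p$ and children $c_1, \ldots, c_t$ with respect to $r$.

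First I would establish the characterization of $V_Q$. For the forward direction, assume $\subtree_r(u) \cap Q = \emptyset$. Then $|\subtree_r(u) \cap Q| = 0$, so the lemma yields $\prefix_{(u,p)} - \prefix_{(p,u)} = 0$. Moreover, since $\subtree_r(c_i) \subseteq \subtree_r(u)$, each child subtree is also disjoint from $Q$, and the lemma gives $\prefix_{(c_i,u)} - \prefix_{(u,c_i)} = 0$, hence $\prefix_{(u,c_i)} - \prefix_{(c_i,u)} = 0$ as well. For the converse, if $\prefix_{(u,v)} - \prefix_{(v,u)} = 0$ holds for every neighbor $v$ (in particular for $v = p$), then $|\subtree_r(u) \cap Q| = 0$, so $u \notin V_Q$.

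Next I would prove the parent-identification criterion. Assume $\subtree_r(u) \cap Q \neq \emptyset$. By the lemma, $\prefix_{(u,p)} - \prefix_{(p,u)} = |\subtree_r(u) \cap Q| > 0$, while for every child $c_i$ we have $\prefix_{(u,c_i)} - \prefix_{(c_i,u)} = -|\subtree_r(c_i) \cap Q| \leq 0$. Thus the strictly positive difference occurs on exactly one incident edge, namely the one leading to the parent, proving both directions of the ``iff''.

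I expect no real obstacle: the whole content is packaged in \Cref{lem:euler:properties}, and the only mild observation needed is the containment $\subtree_r(c_i) \subseteq \subtree_r(u)$, which is what forces the child-edge differences to vanish in the ``no $Q$-node'' case and justifies the ``for all $v \in \neighborhood(u)$'' quantifier in the first equivalence.
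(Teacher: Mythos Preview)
Your proposal is correct and follows exactly the intended approach: the paper's own proof is the single sentence ``The statement follows directly from \Cref{lem:euler:properties},'' and your argument is a faithful unpacking of that sentence. The only addition you make is the explicit observation $\subtree_r(c_i) \subseteq \subtree_r(u)$ to handle the child edges in the first equivalence, which is indeed the small bookkeeping step the paper leaves implicit.
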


\begin{proof}
    The statement follows directly from \Cref{lem:euler:properties}.
\end{proof}

\begin{lemma}
\label{lem:rap:root}
    The subtree of $r$ with respect to $r$ does not contain any nodes in $Q$ iff $|Q| = 0$.
\end{lemma}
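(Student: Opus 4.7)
The statement is essentially a restatement of the definition, so my plan is short. First I would recall that by definition of $\subtree_r$, the subtree of the root $r$ in the tree rooted at $r$ is the entire node set, i.e., $\subtree_r(r) = V_T$. Since $Q \subseteq V_T$ by assumption, this immediately gives $\subtree_r(r) \cap Q = Q$.

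From here the biconditional follows by pure set manipulation: $\subtree_r(r) \cap Q = \emptyset$ is equivalent to $Q = \emptyset$, which in turn is equivalent to $|Q| = 0$. Both directions are symmetric and require no further argument.

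There is no real obstacle here; the only mild point to justify is that $\subtree_r(r) = V_T$, which follows from the standard convention (used implicitly throughout the section) that the subtree of a node in a rooted tree consists of that node together with all of its descendants, so that the root's subtree is the whole tree. I would simply state this convention and conclude.

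The reason this lemma is worth stating separately is that \Cref{cor:rap:properties} handles every node $u \neq r$ by inspecting prefix-sum differences on incident edges, but it does not apply to $r$ itself (as $r$ has no parent). \Cref{lem:rap:root} fills that gap by providing a trivial criterion at the root, and in the algorithmic use one can check $|Q| = 0$ globally by a single beep on a circuit spanning $V_T$ where every node in $Q$ beeps.
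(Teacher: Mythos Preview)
Your proposal is correct and matches the paper's own proof, which simply notes that the subtree of $r$ with respect to $r$ is all of $T$, whence the equivalence is immediate. The additional remarks you make about why the lemma is needed and how $|Q|=0$ can be checked algorithmically are accurate but go beyond what the paper states here.
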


\begin{proof}
    The statement holds trivially since the subtree of $r$ is equal to $T$.
\end{proof}

Hence, our \emph{root and prune primitive} solves the problem as follows.
We apply the ETT with $\weight_Q$ as weight function.
In parallel, each amoebot $u \in V_T$ compares $\prefix_{(u,v)} - \prefix_{(v,u)}$ with $0$ for all $v \in \neighborhood(u)$.
Additionally, amoebot $r$ compares $|Q|$ with $0$.
Each amoebot $u \in V_T \setminus \{ r \}$ is in $V_Q$ if there is a $v \in \neighborhood(u)$ such that $\prefix_{(u,v)} - \prefix_{(v,u)} \neq 0$.
Amoebot $r$ is in $V_Q$ if $|Q| > 0$.
Each $u \in V_Q$ identifies its neighbor $v \in \neighborhood(u)$ with $\prefix_{(u,v)} - \prefix_{(v,u)} > 0$ as its parent.

\begin{lemma}
\label{lem:rap}
    Our root and prune primitive roots $T$ at $r$ and prunes all subtrees without a node in $Q$ within $O(\log |Q|)$ rounds.
\end{lemma}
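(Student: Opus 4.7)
The plan is to separate the proof into correctness and runtime, both of which follow almost directly from the machinery already set up in this section.

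For correctness I would argue on a per-amoebot basis. Consider any $u \in V_T \setminus \{r\}$: Corollary \ref{cor:rap:properties} states that $u \in V_Q$ iff there exists some $v \in \neighborhood(u)$ with $\prefix_{(u,v)} - \prefix_{(v,u)} \neq 0$, and that in this case the parent of $u$ is the unique $v \in \neighborhood(u)$ with $\prefix_{(u,v)} - \prefix_{(v,u)} > 0$ (uniqueness follows from Lemma \ref{lem:euler:properties}, since only the edge towards the parent yields a strictly positive difference while edges towards children yield a non-positive one). For the root, Lemma \ref{lem:rap:root} reduces the membership test to $|Q| > 0$, and by Corollary \ref{cor:euler:number} amoebot $r$ can read $|Q| = W$ directly from the last instance of the Euler tour. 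Putting these observations together shows that both decisions made by the primitive coincide with the specification.

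For the runtime I would invoke Lemma \ref{lem:euler} on the weight function $\weight_Q$. Since each node of $Q$ marks exactly one outgoing edge, $W = \sum_{e \in E'_T} \weight_Q(e) = |Q|$, so the ETT terminates in $O(\log |Q|)$ rounds and, in parallel, produces the bits of every $\prefix_{(u,v)} - \prefix_{(v,u)}$ at the incident nodes $u$ and $v$.

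The main obstacle I expect is that an amoebot has only constant memory and therefore cannot buffer the $O(\log |Q|)$ bits of a difference before testing its sign. I would sidestep this with an on-the-fly comparison: while the PASC algorithm streams the $i$-th bits of $\prefix_{(u,v)}$ and $\prefix_{(v,u)}$ (least significant first), $u$ maintains a constant-size register with one of three states $\{<,=,>\}$ encoding the ordering of the portions observed so far; whenever the two incoming bits differ, the register is overwritten by the ordering dictated by that bit, since a more significant differing bit dominates all lower ones. After the last iteration of the PASC algorithm the register holds the sign of $\prefix_{(u,v)} - \prefix_{(v,u)}$, which by Corollary \ref{cor:rap:properties} is exactly the information needed to classify $v$ as parent, child or a neighbor in a pruned branch. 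Combining this with $r$'s analogous test of $|Q|$ against $0$, the whole primitive completes within the $O(\log |Q|)$ rounds of the ETT.
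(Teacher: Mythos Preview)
Your proof is correct and follows essentially the same route as the paper: invoke Lemma~\ref{lem:euler} (with $W=|Q|$) for the differences and the runtime, Corollary~\ref{cor:euler:number} for $|Q|$ at $r$, and then Corollary~\ref{cor:rap:properties} together with Lemma~\ref{lem:rap:root} for correctness. Your explicit treatment of the on-the-fly sign comparison with a three-state register is a welcome elaboration of what the paper only implicitly assumes when it says the comparison with $0$ is done ``in parallel'' with the bit-by-bit ETT.
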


\begin{proof}
    By \Cref{lem:euler}, each amoebot $u \in V_T$ computes $\prefix_{(u,v)} - \prefix_{(v,u)}$ for each of its neighbors $v \in \neighborhood(u)$.
    Recall that $W = |Q|$.
    By \Cref{cor:euler:number}, $r$ computes $|Q|$.
    Hence, by \Cref{cor:rap:properties} and \Cref{lem:rap:root}, each amoebot $u \in V_T$ is able to determine whether $u \in V_Q$, and each amoebot $u \in V_Q \setminus \{ r \}$ is able to identify its parent $v \in \neighborhood(u)$ with respect to $r$.
    The runtime follows from \Cref{lem:euler}.
\end{proof}


\subsection{Election Primitive}

Let $T = (V_T, E_T)$ be a tree.
Let a node $r \in V_T$ and a non-empty subset $Q \subseteq V_T$ be given, i.e., each node $u \in V_T$ knows whether $u = r$ and whether $u \in Q$.
The goal of the \emph{election problem} is to elect a single node $r' \in Q$, i.e., each node $u \in V_T$ knows whether $u = r'$.
Note that this is not a leader election since we already assume that a unique node $r$ is given.

Our \emph{election primitive} solves the problem as follows.
The idea is to apply the ETT with weight function $\weight_Q$ to identify the first marked edge $(u,v) \in E'_T$ on the Euler tour of $T' = (V_T, E'_T)$.
Then, we simply elect $u$.
However, since we are not interested in the prefix sums, we simplify the ETT as follows.
First, we remove all marked edges.
This splits the Euler tour into subpaths.
Note that the first subpath starts at $r$ and ends at $u$.
Each subpath establishes a circuit along the subpath.
Then, node $r$ beeps on the circuit of the first subpath.
We elect the node at the other end.

\begin{lemma}
\label{lem:election}
    Our election primitive elects a single node $r' \in Q$ within $O(1)$ rounds. 
\end{lemma}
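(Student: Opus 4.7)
The plan is to exploit the Euler tour infrastructure of \Cref{sec:euler} but bypass the PASC-based prefix-sum computation, since for election we only need to locate the first marked edge along $\pi$ rather than count the preceding weight. Using $\weight_Q$, each $u \in Q$ marks exactly one of its outgoing directed edges, so the edge sequence $(e_0, \ldots, e_{|E'_T|-1})$ of $\pi$ contains exactly $|Q|$ marked edges. Let $e_i = (v_i, v_{i+1})$ be the first such edge. Since the outgoing edge at instance $v_i$ is marked, $v_i$ is an instance of some node $u \in Q$, and $u$ is the node I propose to elect.

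To detect $u$ in $O(1)$ rounds, each node configures partition sets at its instances along $\pi$ as follows. At an instance whose incoming and outgoing edges on $\pi$ are both unmarked, the incoming and outgoing pins are joined into a single partition set. At an instance whose outgoing edge on $\pi$ is marked, the incoming pin is placed in its own partition set, severing the circuit; symmetrically at an instance whose incoming edge is marked, the outgoing pin is isolated. This yields one circuit per maximal unmarked run of edges on $\pi$. In particular, there is a distinguished circuit $C_1$ containing the outgoing pin of the first instance $v_0$ of $r$ and the incoming pin of $v_i$. In a single round, $r$ beeps on $C_1$, and $u$ recognizes itself as $r'$ at the unique one of its instances that simultaneously owns a marked outgoing edge and receives a beep on its incoming pin.

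The main subtlety is the degenerate case where the very first edge $e_0$ out of $v_0$ is already marked, which happens precisely when $r \in Q$ and $r$ chose this outgoing edge to mark. Then the ``first circuit'' collapses to a single instance with no external pins joined to it, but the decision becomes purely local: $r$ checks whether its first outgoing edge on $\pi$ is marked and, if so, elects itself directly. Correctness in the general case follows because $r$'s beep reaches precisely the instances strictly preceding the first marked edge on $\pi$, and the unique $u \in Q$ owning the instance at the end of that prefix is the only node whose election condition is triggered; every other $u' \in Q$ has its marked outgoing edge on a later, disjoint circuit and therefore sees no beep at the relevant instance. The whole protocol consists of a constant-time partition-set reconfiguration followed by one beep round and a local decision, yielding the claimed $O(1)$ bound.
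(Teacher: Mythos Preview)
Your proposal is correct and follows essentially the same approach as the paper: sever the Euler tour at the marked edges to obtain one circuit per maximal unmarked run, have $r$ beep on the circuit containing its first outgoing instance, and let the unique $u \in Q$ at the receiving end elect itself. You supply more implementation detail (the partition-set configuration and the degenerate case $r \in Q$ marking $e_0$) than the paper's two-line proof, but the underlying idea and the $O(1)$ round count are identical.
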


\begin{proof}
    The correctness follows from the fact that the first subpath starts at $r$ and ends at $u$.
    Further, the primitive only utilizes a single round.
\end{proof}


\subsection{Centroid Decomposition}
\label{sec:centroid}

Let $T = (V_T, E_T)$ be a tree.
Let a node $r \in V_T$ and a subset $Q \subseteq V_T$ be given, i.e., each node $u \in V_T$ knows whether $u = r$ and whether $u \in Q$.
%
A node $u \in Q$ is a \emph{$Q$-centroid} iff after the removal of $u$, the tree splits into connected components with at most $|Q|/2$ nodes in $Q$, respectively.
The goal of the \emph{$Q$-centroid problem} is to compute the $Q$-centroid(s) of $T$, i.e., each node determines whether it is a $Q$-centroid.
In order to solve this problem, we make use of the following corollary of \Cref{lem:euler:properties}.

\begin{corollary}
\label{cor:centroid:properties}
    Let $u \in V_T$.
    After the removal of $u$, the connected component of $v \in \neighborhood(u)$ contains $|Q| - (\prefix_{(u,v)} - \prefix_{(v,u)})$ nodes in $Q$ if $v$ is the parent of $u$ with respect to $r$, and $\prefix_{(v,u)} - \prefix_{(u,v)}$ nodes in $Q$ if $v$ is a child of $u$ with respect to $r$.
\end{corollary}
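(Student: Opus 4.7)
The plan is to read off the statement directly from Lemma~\ref{lem:euler:properties} together with the basic observation about how a tree decomposes when a node is removed. Concretely, if we delete $u$ from $T$, the resulting forest has exactly one connected component per neighbor $v \in \neighborhood(u)$: the component containing $v$ consists of all nodes that are on $v$'s side of the cut edge $\{u,v\}$. Relative to the root $r$, this component is either the subtree $\subtree_r(v)$ (when $v$ is a child of $u$) or the complement $V_T \setminus \subtree_r(u)$ (when $v$ is the parent $p$ of $u$).

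From here the two cases are immediate. When $v$ is a child of $u$, the $v$-component after removal of $u$ equals $\subtree_r(v)$; by the third property of Lemma~\ref{lem:euler:properties}, it contains $\prefix_{(v,u)} - \prefix_{(u,v)}$ nodes in $Q$, which is the second claim. When $v$ is the parent $p$ of $u$, the $v$-component equals $V_T \setminus \subtree_r(u)$, so its number of $Q$-nodes is $|Q| - |\subtree_r(u) \cap Q|$. By the first property of Lemma~\ref{lem:euler:properties}, $|\subtree_r(u) \cap Q| = \prefix_{(u,v)} - \prefix_{(v,u)}$, and substituting gives the first claim.

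I do not expect a real obstacle here: the corollary is just a rewrite of Lemma~\ref{lem:euler:properties} combined with the standard observation that deleting $u$ from $T$ leaves each neighbor in its own attached subtree. The only point worth flagging is that the statement is conditional on $v$ being the parent (resp.\ child) of $u$, so the corner case $u = r$ (which has no parent) does not require separate treatment — in that case only the child formula is invoked. Since Lemma~\ref{lem:euler:properties} also guarantees the non-negativity of the relevant differences, both expressions are automatically non-negative as one would expect for cardinalities.
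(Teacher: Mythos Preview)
Your proof is correct and follows exactly the same approach as the paper, which simply states that the corollary follows directly from Lemma~\ref{lem:euler:properties}. You have merely spelled out the two cases in more detail than the paper does.
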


\begin{proof}
    The statement follows directly from \Cref{lem:euler:properties}.
\end{proof}

Hence, our \emph{$Q$-centroid primitive} computes the $Q$-centroid(s) as follows.
First, we apply the ETT with $\weight_Q$ as weight function to compute the parents of each amoebot in $Q$ with respect to $r$.
Then, we apply the ETT with $\weight_Q$ as weight function again.
However, after each iteration of the ETT, amoebot $r$ broadcasts the current bit of $|Q|$.
In parallel to the ETT, each amoebot $u \in V_Q \setminus \{ r \}$ computes $|Q| - (\prefix_{(u,p)} - \prefix_{(p,u)})$ where $p$ denotes the parent of $u$.
Additionally, each amoebot $u \in V_Q$ computes $|Q|/2$ and compares it with $\size_u(v)$ for each $v \in \neighborhood(u)$ where $\size_u(v) = |Q| - (\prefix_{(u,v)} - \prefix_{(v,u)})$ if $v$ is the parent of $u$ and $\size_u(v) = \prefix_{(v,u)} - \prefix_{(u,v)}$ otherwise.
An amoebot $u \in Q$ identifies as a $Q$-centroid if $\size_u(v) \leq |Q|/2$ for each $v \in \neighborhood(u)$.

\begin{lemma}
\label{lem:centroid}
    Our $Q$-centroid primitive computes the $Q$-centroid(s) within $O(\log |Q|)$ rounds.
\end{lemma}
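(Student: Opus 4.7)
The plan is to establish correctness by reducing the centroid condition to the quantities the primitive computes, and then to bound the runtime by appealing to \Cref{lem:euler}.

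For correctness, first observe that after the first invocation of the ETT (with weight function $\weight_Q$) every amoebot $u \in V_Q$ can determine, for each neighbor $v \in \neighborhood(u)$, whether $v$ is its parent or one of its children with respect to $r$: by \Cref{cor:rap:properties} the sign of $\prefix_{(u,v)} - \prefix_{(v,u)}$ decides this. Having these labels available, the definition $\size_u(v) = |Q| - (\prefix_{(u,v)} - \prefix_{(v,u)})$ for the parent case and $\size_u(v) = \prefix_{(v,u)} - \prefix_{(u,v)}$ for the child case matches exactly the quantity given by \Cref{cor:centroid:properties}, namely the number of nodes of $Q$ in the connected component of $v$ in $T - u$. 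Hence $u \in Q$ is a $Q$-centroid iff $\size_u(v) \leq |Q|/2$ for every $v \in \neighborhood(u)$, which is the condition the primitive tests.

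Next I would explain how the comparisons can be performed with only $O(1)$ memory. The second invocation of the ETT produces the bits of every $\prefix_{(u,v)} - \prefix_{(v,u)}$ from least significant to most significant (\Cref{lem:euler}), and by \Cref{cor:euler:number} amoebot $r$ simultaneously produces the bits of $|Q|$; broadcasting each such bit on a global circuit takes one extra round per iteration and does not affect the asymptotic cost. Since amoebots need to compare two integers whose bits arrive one pair at a time, each amoebot maintains, for every neighbor, a single flag recording the outcome of the comparison restricted to the bits seen so far, and updates this flag whenever the newly arrived bits of the two numbers disagree (the later disagreement always dominates, since it is in a more significant position). In parallel, each amoebot produces the bits of $\size_u(v)$ from $|Q|$ and $\prefix_{(u,v)} - \prefix_{(v,u)}$ by running a one-bit adder/subtractor with a carry stored in its state, and then feeds those bits into the comparator with the bits of $|Q|/2$, which is obtained simply by delaying the bitstream of $|Q|$ by one iteration.

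For the runtime, each invocation of the ETT on the weight function $\weight_Q$ terminates after $O(\log |Q|)$ rounds by \Cref{lem:euler}, since $W = |Q|$. The two ETT invocations, the broadcast of the bits of $|Q|$ by $r$, and the local bit-by-bit computations of $\size_u(v)$ and of the comparisons all proceed in lockstep with the ETT, so they add only a constant overhead per iteration. Hence the total runtime is $O(\log |Q|)$. The main obstacle is arranging the bit-serial arithmetic and comparison so that $O(1)$ memory per amoebot suffices even though several comparisons (one per neighbor) must be carried out concurrently; I expect this to be routine given that the degree is bounded and each comparator only needs a single flag of state.
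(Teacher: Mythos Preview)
Your proposal is correct and follows essentially the same approach as the paper: identify parents via the sign of $\prefix_{(u,v)}-\prefix_{(v,u)}$ from a first ETT pass, then in a second ETT pass compute and compare $\size_u(v)$ against $|Q|/2$ using \Cref{cor:centroid:properties}, with $r$ broadcasting the bits of $|Q|$ (obtained via \Cref{cor:euler:number}); the runtime bound comes directly from \Cref{lem:euler} with $W=|Q|$. Your write-up is more explicit than the paper about the bit-serial arithmetic and comparison under the $O(1)$-memory constraint, which the paper leaves implicit, but the underlying argument is the same.
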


\begin{proof}
    By \Cref{lem:rap}, each amoebot in $Q$ is able to identify its parent.
    By \Cref{lem:euler}, each amoebot $u \in V_T$ computes $\prefix_{(u,v)} - \prefix_{(v,u)}$ for each of its neighbors $v \in \neighborhood(u)$.
    Recall that $W = |Q|$.
    By \Cref{cor:euler:number}, $r$ computes $|Q|$, which it broadcasts.
    Hence, by \Cref{cor:centroid:properties}, each amoebot is able to determine whether it is a $Q$-centroid.
    The runtime follows from \Cref{lem:euler}.
\end{proof}


Next, we discuss the existence of $Q$-centroid(s).
A node $u \in Q$ is a \emph{centroid} iff after the removal of $u$, the tree splits into connected components with at most $|V_T|/2$ nodes, respectively.
Let $w : V_T \to \mathbb N_0$ be any weight function.
A node $u \in V_T$ is a \emph{weighted centroid} iff after the removal of $u$, the tree splits into connected components whose nodes have a cumulative weight of at most $W/2$, respectively, where $W = \sum_{v \in V_T} \weight(v)$.
The following results are known for (weighted) centroids.

\begin{theorem}[Jordan \cite{jordan1869assemblages}]
\label{th:centroid:simple}
    Each tree has one or two adjacent centroids.
\end{theorem}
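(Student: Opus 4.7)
My plan is to prove the theorem via the classical Jordan argument, which splits naturally into an existence part (by a local minimization and exchange) and a uniqueness part (by a disjointness count). Both parts are purely combinatorial and do not interact with the reconfigurable-circuit machinery of the preceding sections.

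For existence, I would let $f(u)$ denote the size of the largest connected component of $T - u$ and pick $u^* \in V_T$ minimizing $f$; the task is to show $f(u^*) \leq |V_T|/2$. Assume instead that the largest component $C$ of $T - u^*$ satisfies $|C| > |V_T|/2$, and let $v$ be the unique neighbor of $u^*$ in $C$ (unique because two such neighbors would create a cycle in $T$). Moving the candidate from $u^*$ to $v$, the component of $u^*$ in $T - v$ is exactly $V_T \setminus C$, of size $|V_T| - |C| < |V_T|/2$, while every other component of $T - v$ is contained in $C \setminus \{v\}$ and hence has size at most $|C| - 1$. This gives $f(v) < f(u^*)$, contradicting the minimality of $u^*$.

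For uniqueness, suppose $u$ and $v$ are both centroids, and let $u = w_0, \dots, w_k = v$ be the unique $u$-$v$ path in $T$. I would define $B_u$ to consist of $u$ together with all nodes whose unique path to $v$ in $T$ passes through $u$, and $B_v$ symmetrically. The centroid condition at $u$, applied to the component of $T - u$ that contains $v$ (namely $V_T \setminus B_u$), gives $|B_u| \geq |V_T|/2$, and likewise $|B_v| \geq |V_T|/2$. By uniqueness of paths in $T$, the sets $B_u$ and $B_v$ are disjoint, and whenever $k \geq 2$ the interior vertices $w_1, \dots, w_{k-1}$ lie in neither. A double count then yields $|V_T| \geq |B_u| + |B_v| + (k - 1) \geq |V_T| + (k - 1)$, forcing $k \leq 1$, so two centroids must be equal or adjacent. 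Three distinct centroids would be pairwise adjacent and hence form a triangle in $T$, which is impossible.

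The only subtle step is the strict exchange inequality $f(v) < f(u^*)$ in the existence argument, which hinges on the two facts that $v \in C$ and $|C| > |V_T|/2$; once these are in place, the remainder is a disjoint-set count together with the standard absence of triangles in trees.
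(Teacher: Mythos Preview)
The paper does not supply its own proof of this theorem: it is quoted as Jordan's classical result and used as a black box (together with the weighted version of Bielak and Panczyk) in the analysis of $Q'$-centroids. There is therefore nothing to compare against.

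That said, your argument is the standard one and is correct. The existence part is fine; the key inequality $f(v) \leq \max\{|V_T|-|C|,\,|C|-1\} < |C| = f(u^*)$ follows exactly as you indicate once $|C| > |V_T|/2$. The uniqueness part via the disjoint ``branch sets'' $B_u,B_v$ and the count $|V_T| \geq |B_u|+|B_v|+(k-1)$ is also sound, and the triangle observation rules out a third centroid. One cosmetic point: it is worth stating explicitly that $u \neq v$ when you argue $B_u \cap B_v = \emptyset$, since for $u=v$ the sets coincide; this is implicit in your setup but easy to miss.
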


\begin{theorem}[Bielak and Panczyk \cite{DBLP:journals/umcs/BielakP12}]
\label{th:centroid:weighted}
    Each tree has at least one weighted centroid.
    Further, with only positive weights, each tree has one or two adjacent weighted centroids.
\end{theorem}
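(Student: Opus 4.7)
The plan is to extend Jordan's classical centroid argument (\Cref{th:centroid:simple}) to the weighted setting. Define, for each $u \in V_T$, the quantity $f(u) = \max\{\weight(C) : C \text{ a connected component of } T - u\}$, where $\weight(C) = \sum_{v \in C} \weight(v)$, so that $u$ is a weighted centroid precisely when $f(u) \le W/2$. The central observation is that whenever $f(u) > W/2$, the heavy component is unique, because two disjoint subsets of $V_T$ cannot both carry more than half of the total weight. For existence I would use greedy descent: start at an arbitrary node $u_0$; if $f(u_0) \le W/2$ we are finished, otherwise step to the neighbor $u_1$ lying in the unique heavy component of $T - u_0$ and iterate. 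The invariant driving termination is that after moving from $u_i$ to $u_{i+1}$, the component of $T - u_{i+1}$ containing $u_i$ has weight $W - f(u_i) < W/2$, so the heavy direction at $u_{i+1}$ (if any) must point strictly forward. The walk therefore never backtracks; since $T$ is finite and acyclic, it must terminate, and it can only terminate at a node with $f$-value at most $W/2$.

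For the structural claim under strictly positive weights, I would use a path analysis. Suppose $u \neq v$ are both weighted centroids and let $(u = p_0, p_1, \dots, p_m = v)$ be the unique path between them in $T$. Partition $V_T$ into three parts: $A$ consisting of $u$ together with all nodes separated from $v$ by $u$, $B$ defined symmetrically for $v$, and the middle $M$ consisting of $p_1, \dots, p_{m-1}$ together with their side branches. Writing $W_A, W_B, W_M$ for the corresponding weight sums, the component of $T - u$ containing $v$ has weight $W_B + W_M$, so the centroid inequality at $u$ yields $W_B + W_M \le W/2$, i.e.\ $W_A \ge W/2$; by symmetry $W_B \ge W/2$. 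Combined with $W_A + W_B + W_M = W$, this forces $W_M \le 0$, hence $W_M = 0$. Under strict positivity this means $M = \emptyset$ and $m = 1$, so any two distinct centroids are adjacent. A third centroid would have to be adjacent to both, producing a triangle in $T$ and contradicting acyclicity, so the centroid set has size one or two.

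The most delicate point I anticipate is ensuring that the greedy descent genuinely makes progress when some node weights vanish: a naive argument via a strictly decreasing $f$ can stall, but the uniqueness-of-heavy-component observation circumvents this by forcing the walk never to return to its predecessor, so that finiteness of $T$ alone guarantees termination. The remainder of the argument is essentially careful bookkeeping on how the weight along the $u$-to-$v$ path is split, and the fact that all weights on $M$ vanish is exactly what positivity rules out in the structural step.
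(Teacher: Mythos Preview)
The paper does not prove \Cref{th:centroid:weighted}; it is quoted as a known result of Bielak and Panczyk and cited without argument, so there is no in-paper proof to compare against. Your proposal is a correct, self-contained proof. The greedy-descent existence argument is sound: the key invariant that the back-component at $u_{i+1}$ has weight $W - f(u_i) < W/2$ indeed forces the heavy direction (if any) to point forward, so the walk is a simple path in a finite tree and must terminate at a centroid. The path analysis for the structural part is also correct: from $W_B + W_M \le W/2$ and $W_A + W_M \le W/2$ (the centroid conditions at $u$ and $v$) together with $W_A + W_B + W_M = W$ you get $W_M \le 0$, hence $W_M = 0$, and strict positivity then collapses the path to a single edge; the triangle argument rules out a third centroid.
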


Note that the existence of weighted centroids for weight function $\weight_Q$ does not imply the existence of $Q$-centroids since these might not be in $Q$.
In fact, a tree might not have any $Q$-centroids, e.g., any graph with $Q = \emptyset$.
However, in the following, we will show that we can augment each non-empty set $Q$ with a set $A_Q$ with $O(|Q|)$ nodes to guarantee the existence of one or two $(Q \cup A_Q)$-centroids.

Let $T = (V_T, E_T)$ and $r \in V_T$.
Let $Q \subseteq V_T$ be a non-empty subset.
We apply the root and prune primitive on $T$.
Let $T_Q = (V_Q, E_Q)$ denote the resulting tree.
Let $\degree_Q(u)$ denote the degree of a node $u \in V_Q$ in $T_Q$.
We define the \emph{augmentation set} to be $A_Q = \{ u \in V_Q \mid \degree_Q(u) \geq 3 \}$.

\begin{lemma}
    Our root and prune primitive computes the augmentation set $A_Q$ within $O(\log |Q|)$ rounds.
\end{lemma}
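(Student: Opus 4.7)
The plan is to invoke the root and prune primitive of \Cref{lem:rap} and then let each node $u \in V_Q$ determine $\degree_Q(u)$ from information it already possesses. By \Cref{lem:rap}, the primitive takes $O(\log |Q|)$ rounds, and during its execution each amoebot $u$ obtains the value $\prefix_{(u,v)} - \prefix_{(v,u)}$ for every neighbor $v \in \neighborhood(u)$; by \Cref{lem:euler:properties}, these differences are exactly the signed counts of $Q$-nodes in the subtree hanging off of each incident edge.

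The key observation I would exploit is that the edges of the pruned tree $T_Q$ incident to $u$ can be read off directly from these signs. By \Cref{cor:rap:properties}, an edge $\{u,v\}$ belongs to $T_Q$ iff $\prefix_{(u,v)} - \prefix_{(v,u)} \neq 0$: a positive value identifies the (unique) parent of $u$ in $T_Q$, and each negative value identifies a child in $T_Q$. Therefore each $u \in V_Q$ can locally compute $\degree_Q(u)$ simply by counting its neighbors $v$ for which the stored difference is nonzero, and then declare $u \in A_Q$ iff this count is at least $3$.

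Since this postprocessing is purely local and requires no additional communication rounds, the overall runtime is dominated by the root and prune primitive, giving the claimed $O(\log |Q|)$ rounds. There is no real obstacle in the argument; the main point to notice is that \Cref{lem:rap}, although stated only in terms of parent identification, in fact exposes the entire local structure of $T_Q$ at each node, so no separate degree-gathering protocol is needed on top of it.
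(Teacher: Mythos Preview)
Your proposal is correct and follows essentially the same approach as the paper: both observe that for $u \in V_Q$, a neighbor $v$ lies in $V_Q$ (equivalently, the edge $\{u,v\}$ survives in $T_Q$) iff $\prefix_{(u,v)} - \prefix_{(v,u)} \neq 0$, so $\degree_Q(u)$ is obtained by a local count of nonzero differences already computed by the root and prune primitive, with no extra rounds.
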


\begin{proof}
    Note that a neighbor $v \in \neighborhood(u)$ of a node $u \in V_Q$ is in $V_Q$ iff $\prefix_{(u,v)} - \prefix_{(v,u)} \neq 0$.
    This allows each node $u \in V_Q$ to compute $\degree_Q(u)$. 
    Hence, the statement follows from \Cref{lem:rap}.
\end{proof}

Let $Q' = Q \cup A_Q$.
We now prove that $T$ and each connected subgraph of $T$ has one or two $Q'$-centroids, and that $A_Q = O(|Q|)$ and with that $Q' = O(|Q|)$.

\begin{lemma}
\label{lem:centroid:Qprime}
    Let $Q \subseteq V_T$ be a non-empty subset.
    Then, $T$ has one or two $Q'$-centroids.
\end{lemma}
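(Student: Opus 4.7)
The plan is to work inside the pruned tree $T_Q = (V_Q, E_Q)$ rather than in $T$, since any component of $T$ cut off from $T_Q$ by removing a node contains no $Q'$-nodes (as $Q \subseteq Q'$) and is therefore irrelevant for the $Q'$-centroid condition. Inside $T_Q$ the structure of $Q'$ is very restricted: every leaf lies in $Q \subseteq Q'$, every node of degree $\geq 3$ lies in $A_Q \subseteq Q'$, and the only nodes outside $Q'$ are internal degree-$2$ nodes that are not in $Q$. I would first invoke the existence part of \Cref{th:centroid:weighted} on $T_Q$ with weight function $\weight_{Q'}$ to obtain a weighted centroid $c^\ast$; removing any node outside $V_Q$ would leave some component of total weight $|Q'|$, so $c^\ast \in V_Q$.

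For existence, if $c^\ast \in Q'$ we are done, so assume $c^\ast \notin Q'$. Then by the constraints above $c^\ast$ has degree exactly two in $T_Q$, and removing $c^\ast$ yields two components each containing exactly $|Q'|/2$ $Q'$-nodes (since $c^\ast$ contributes weight zero and the weighted-centroid bound is met with equality). I would then walk from $c^\ast$ along one of the two incident chains until reaching the first node $u$ in $Q'$. The open sub-chain between $c^\ast$ and $u$ contains no $Q'$-nodes by construction, so removing $u$ leaves the $c^\ast$-side as a single component with exactly $|Q'|/2$ $Q'$-nodes (the far side of $c^\ast$), while the remaining components together share $|Q'|/2 - 1$ $Q'$-nodes. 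A short case split on $u$ (a leaf in $Q$, a degree-$2$ node in $Q$, or a branching node in $A_Q$) shows that every resulting component has at most $|Q'|/2$ $Q'$-nodes, so $u$ is a $Q'$-centroid; the leaf case only occurs for $|Q'| = 2$, where the single remaining component has exactly one $Q'$-node and the bound still holds.

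For the upper bound of two, suppose $u_1, u_2 \in Q'$ are both $Q'$-centroids, and let $U_i$ be the component of $u_i$ in $T_Q - u_{3-i}$. Since $U_1 \cup U_2 = V_Q$, inclusion--exclusion gives $|U_1 \cap Q'| + |U_2 \cap Q'| - |U_1 \cap U_2 \cap Q'| = |Q'|$; combined with $|U_i \cap Q'| \leq |Q'|/2$ this forces $|U_1 \cap U_2 \cap Q'| = 0$. The set $U_1 \cap U_2$ consists of the interior of the path from $u_1$ to $u_2$ together with all off-path subtrees attached to that interior, so none of these vertices lies in $Q'$; the interior vertices therefore all have degree exactly two in $T_Q$, and no off-path subtree can exist (else it would have been pruned for containing no $Q$-node). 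If a third $Q'$-centroid $u_3$ existed, the Steiner point $m$ of $\{u_1, u_2, u_3\}$ in $T_Q$ would either coincide with some $u_i$, placing $u_i$ in the interior of the path between the other two and contradicting the empty-interior property, or else be a distinct node of degree $\geq 3$ in $T_Q$, hence in $A_Q \subseteq Q'$, again lying in an empty interior. The main obstacle is the existence step: the weighted centroid may sit outside $Q'$, and the role of the augmentation set $A_Q$ is precisely to guarantee that the first $Q'$-node encountered on the walk from $c^\ast$ is either a branching node, a degree-$2$ $Q$-node, or (in the tiny $|Q'| = 2$ case) a leaf, and in each of these three situations removal of that node produces balanced components.
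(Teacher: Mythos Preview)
Your proof is correct, but it follows a different route than the paper. The paper's argument is a clean reduction: after pruning to $T_Q$ (and additionally pruning the root while it is a non-$Q'$ leaf), every node outside $Q'$ has degree exactly two, so one can \emph{contract} each maximal chain of such nodes into a single edge. The resulting tree $T''$ has vertex set exactly $Q'$, and a node is a $Q'$-centroid of $T$ iff it is an ordinary centroid of $T''$. Jordan's theorem (\Cref{th:centroid:simple}) then gives both existence and the ``at most two'' bound in one stroke. Your approach instead stays in $T_Q$, invokes the weighted-centroid theorem for existence, walks from the weighted centroid to the nearest $Q'$-node with a case split on its degree, and then proves the upper bound separately via an inclusion--exclusion and Steiner-point argument. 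Both work; the contraction reduces the problem to a classical statement and avoids the separate upper-bound argument, while your direct analysis makes the role of $A_Q$ more visible in the existence step.

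One small inaccuracy worth fixing: you assert that ``every leaf lies in $Q$'', but $T_Q$ as defined always contains the root $r$, which may be a leaf of $T_Q$ with $r \notin Q$. The paper handles this by explicitly pruning the root while it has degree~$1$ and is not in $Q'$. In your argument this does not actually cause a problem---if the walk from $c^\ast$ headed toward such an $r$ without meeting a $Q'$-node, that side would contain zero $Q'$-nodes, contradicting $|Q'|/2 \geq 1$---but the blanket claim about leaves should be qualified.
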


\begin{proof}
    If $|Q| = 1$, then $A_Q = \emptyset$.
    Trivially, the only node in $Q' = Q$ is a $Q'$-centroid.
    Suppose that $|Q| \geq 2$.

    We prune any subtrees with respect to $r$ without a node in $Q'$.
    Then, while the root is not in $Q'$ and has a degree of $1$, we prune the root and make its child the new root.
    Let $T' = (V_{T'},E_{T'})$ denote the resulting tree.
    Note that $|V_{T'}| \geq 2$ since $|Q| \geq 2$.
    This implies that each node has at least a degree of $1$.
    Further, note that each node of degree $1$ in $T'$ is in $Q \subseteq Q'$.
    Otherwise, we could prune the node.

    Consider a node $u \in V_{T'} \setminus Q'$.
    It cannot have a degree of $0$.
    Otherwise, it would have been pruned.
    It cannot have a degree of $1$.
    Otherwise, it would be in $Q'$.
    It cannot have a degree of greater than $2$ since $A_Q \subseteq Q'$.
    Therefore, it must have a degree of $2$.
    These can only form chains with the endpoints adjacent to one node in $Q'$, respectively.
    We replace each of these chains with a single edge.
    Let $T'' = (Q',E_{T''})$ denote the resulting tree.

    We claim that a node is a $Q'$-centroid of $T'$ iff it is a $Q'$-centroid in $T$.
    Further, a node is a $Q'$-centroid of $T''$ iff it is a $Q'$-centroid in $T'$.
    Consider the connected components after the removal of a node $u \in Q'$.
    On the one hand, the removal of nodes not in $Q'$ might eliminate connected components without any nodes in $Q'$.
    Otherwise, it does not affect the number of nodes in $Q'$ of any other connected component.
    On the other hand, the addition of nodes not in $Q'$ might introduce new connected components without any nodes in $Q'$.
    Otherwise, it does not affect the number of nodes in $Q'$ of any existing connected component.
    Hence, both claims hold.
    By \Cref{th:centroid:simple} or \Cref{th:centroid:weighted}, $T''$ has one or two centroids and with that also $T$.
\end{proof}

\begin{corollary}
\label{cor:centroid:Qprime}
    Furthermore, let $Z = (V_Z,E_Z)$ be a connected subgraph of $T$ with $Q' \cap V_Z \neq \emptyset$.
    Then, $Z$ has one or two $(Q' \cap V_Z)$-centroids.
\end{corollary}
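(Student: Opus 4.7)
The plan is to mirror the proof of \Cref{lem:centroid:Qprime} with $Z$ playing the role of $T$ and $Q^{\ast} := Q' \cap V_Z$ playing the role of $Q'$. Since $Z$ is a connected subgraph of the tree $T$, it is itself a tree; pick $r_Z \in Q^{\ast}$ as a root of $Z$ and apply the same two-step pruning: first delete every $r_Z$-subtree of $Z$ that contains no $Q^{\ast}$-node, then iteratively strip the current root whenever it has degree one and lies outside $Q^{\ast}$. Because $r_Z \in Q^{\ast}$, the root is never stripped, so the resulting tree $Z'$ is rooted at $r_Z$ and satisfies: every leaf lies in $Q^{\ast}$, and every $r_Z$-subtree rooted at a node of $V_{Z'}$ contains a $Q^{\ast}$-node. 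The case $|Q^{\ast}| = 1$ is immediate, so assume $|Q^{\ast}| \geq 2$.

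The central claim, and the only real obstacle, is that every $u \in V_{Z'} \setminus Q^{\ast}$ satisfies $\degree_{Z'}(u) = 2$. Degrees zero and one are excluded as in \Cref{lem:centroid:Qprime}, so suppose $\degree_{Z'}(u) \geq 3$ with three distinct $Z'$-neighbors $v_1, v_2, v_3$; the goal is to derive $u \in A_Q$, contradicting $u \notin Q^{\ast} \supseteq A_Q \cap V_Z$. By the properties of $Z'$ above, the component of $Z' \setminus \{u\}$ containing each $v_i$ contains some $q_i \in Q^{\ast}$: either that component is an $r_Z$-subtree of a child of $u$ (so has a $Q^{\ast}$-node by the pruning), or it is the ``above'' component containing $r_Z \in Q^{\ast}$. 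Each $v_i$ is also a $T$-neighbor of $u$, and each $q_i$ lies in the $T$-component $C_i$ of $T \setminus \{u\}$ containing $v_i$. At most one $v_i$ is the $r$-parent of $u$, so at least two of them, say $v_1$ and $v_2$, are $r$-children of $u$; for those, $C_i = \subtree_r(v_i) \subseteq \subtree_r(u)$.

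The subtle point is that $q_i \in Q^{\ast} \subseteq Q \cup A_Q$ may lie in $A_Q \setminus Q$; however $A_Q \subseteq V_Q$ gives $q_i \in V_Q$, so $\subtree_r(q_i)$ contains a $Q$-node. For $i \in \{1,2\}$, $q_i \in \subtree_r(u)$ forces $\subtree_r(q_i) \subseteq \subtree_r(u)$, placing a $Q$-node inside $\subtree_r(u)$ and hence $u \in V_Q$. Each $r$-child $v_i$ then lies in $V_Q$ because $\subtree_r(v_i) \supseteq \subtree_r(q_i)$ already contains a $Q$-node, and the $r$-parent (if any) lies in $V_Q$ because $u \in V_Q$ propagates a $Q$-node up into the parent's $r$-subtree. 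Thus $\degree_Q(u) \geq 3$, and combined with $u \in V_Q$ this gives $u \in A_Q$, the desired contradiction.

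Having shown that every non-$Q^{\ast}$ node of $Z'$ has degree exactly two, contract each maximal chain of such nodes into a single edge to obtain a tree $Z''$ on vertex set $Q^{\ast}$. Just as in \Cref{lem:centroid:Qprime}, inserting or removing degree-two vertices outside $Q^{\ast}$ never alters the $Q^{\ast}$-count in any component, so a node is a $Q^{\ast}$-centroid of $Z$ iff it is a centroid of $Z''$; applying \Cref{th:centroid:simple} to $Z''$ yields one or two such centroids, finishing the argument.
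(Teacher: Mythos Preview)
Your proof is correct, but it takes a genuinely different route from the paper's. You essentially \emph{redo} the proof of \Cref{lem:centroid:Qprime} inside $Z$: you root $Z$ at some $r_Z \in Q^\ast$, prune, and then argue directly that no non-$Q^\ast$ node of the pruned tree can have degree $\geq 3$ by tracing each of the three branches back to a $Q$-node via the global $r$-rooting of $T$, thereby forcing the node into $A_Q$. This works, and your handling of the subtle case $q_i \in A_Q \setminus Q$ (using $A_Q \subseteq V_Q$ to still produce a $Q$-node below) is the crux that makes the argument go through.

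The paper instead treats \Cref{lem:centroid:Qprime} as a black box. It first observes that $(Q'\cap V_Z)$-centroids coincide in $Z$ and in $T$, then applies \Cref{lem:centroid:Qprime} to $T$ with the set $Q'\cap V_Z$, which yields one or two $((Q'\cap V_Z)\cup A_{Q'\cap V_Z})$-centroids. The remaining work is a short containment argument showing $A_{Q'\cap V_Z} \subseteq Q'\cap V_Z$: from $\degree_{Q'\cap V_Z}(u) \leq \degree_{Q'}(u)$ one gets $A_{Q'\cap V_Z} \subseteq A_{Q'} \subseteq Q'$, and since any $u \in A_{Q'\cap V_Z}$ lies on the $T$-path between two nodes of $V_Z$ and $Z$ is connected, also $u \in V_Z$. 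This buys a much shorter proof that avoids rebuilding the pruning machinery, at the cost of a slightly more abstract step (reasoning about augmentation sets of subsets). Your approach is more self-contained and arguably easier to verify line by line, but longer.
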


\begin{proof}
    First, note that a node is a $(Q' \cap V_Z)$-centroid of $Z$ iff it is a $(Q' \cap V_Z)$-centroid in $T$.
    This holds by the same arguments as in the proof of \Cref{lem:centroid:Qprime}.
    %
    Since $Q' \cap V_Z \neq \emptyset$, $T$ has one or two $((Q' \cap V_Z) \cup A_{Q' \cap V_Z})$-centroids by \Cref{lem:centroid:Qprime}.
    We claim that $A_{Q' \cap V_Z} \subseteq Q' \cap V_Z$ and with that $((Q' \cap V_Z) \cup A_{Q' \cap V_Z}) = Q' \cap V_Z$.
    
    Observe that for each node $u$, $\degree_{Q' \cap V_Z}(u) \leq \degree_{Q'}(u)$ holds.
    This implies $A_{Q' \cap V_Z} \subseteq A_{Q'}$ and with that $A_{Q' \cap V_Z} \subseteq Q'$.
    Let $u \in A_{Q' \cap V_Z}$.
    Since by definition, $\degree_{Q' \cap V_Z}(u) \geq 3$, $u$ connects at least $2$ nodes in $Q' \cap V_Z \subseteq V_Z$ (one edge might lead to $r$).
    Since $Z$ is connected, $u$ has to be in $V_Z$.
    Since $A_{Q' \cap V_Z} \subseteq Q'$ and $A_{Q' \cap V_Z} \subseteq V_Z$, the claim holds.
    By the claim, $T$ has one or two $(Q' \cap V_Z)$-centroids and with that also $Z$.
\end{proof}

\begin{corollary}
\label{cor:centroid:augmentation}
    Let $Q \subseteq V_T$.
    Then, $|A_Q| = O(|Q|)$, and with that $|Q'| = O(|Q|)$.    
\end{corollary}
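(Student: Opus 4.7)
The plan is to bound the number of high-degree nodes of $T_Q$ by the number of its leaves, and then to observe that almost every leaf must lie in $Q$.

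First I would recall that the root-and-prune primitive produces a tree $T_Q = (V_Q, E_Q)$. I would then claim that every degree-$1$ node of $T_Q$ is either in $Q$ or equal to the root $r$. To see this, let $u \in V_Q$ be a leaf of $T_Q$ different from $r$. Since $u \in V_Q$, the subtree of $u$ in $T$ (with respect to $r$) contains at least one node of $Q$. Since $u$ is a leaf in $T_Q$, none of its $T$-children has a subtree intersecting $Q$, so the only node of $Q$ contained in the subtree rooted at $u$ is $u$ itself; in particular $u \in Q$. Hence the set of leaves $L \subseteq Q \cup \{r\}$, and so $|L| \le |Q| + 1$.

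Next I would apply the standard handshake argument on the (unrooted) tree $T_Q$. Letting $n_i$ denote the number of nodes of degree $i$ in $T_Q$, we have $\sum_i n_i = |V_Q|$ and $\sum_i i \cdot n_i = 2(|V_Q| - 1)$. Subtracting twice the first from the second gives $\sum_{i \ge 3}(i-2)\,n_i = n_1 - 2$, so $|A_Q| = \sum_{i \ge 3} n_i \le \sum_{i \ge 3}(i-2)\,n_i = n_1 - 2 \le |L| - 2$ when $|V_Q| \ge 2$; for $|V_Q| \le 1$ we trivially have $|A_Q| = 0$. Combining with the bound from the previous step yields $|A_Q| \le |Q| - 1 = O(|Q|)$, and therefore $|Q'| = |Q \cup A_Q| \le |Q| + |A_Q| = O(|Q|)$.

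I do not expect any real obstacle here; the only subtle point is the role of the root $r$, which, if it happens to have degree $1$ in $T_Q$, could contribute a leaf that is not in $Q$. This introduces the harmless ``$+1$'' that is absorbed into the $O(|Q|)$ bound. No additional structure of the model is needed: the entire argument is a purely combinatorial property of the rooted, pruned tree produced by the root-and-prune primitive.
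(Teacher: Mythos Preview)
Your proof is correct and follows essentially the same approach as the paper: bound the number of high-degree vertices of the pruned tree by the number of leaves via the handshaking lemma, and observe that (almost) every leaf lies in $Q$. The only cosmetic difference is that the paper carries out the count on the auxiliary tree $T''$ constructed in the proof of \Cref{lem:centroid:Qprime} (where the degree-$1$ root has already been stripped off, so every leaf is in $Q$), whereas you work directly on $T_Q$ and absorb the possible extra leaf at $r$ into a harmless ``$+1$''.
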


\begin{proof}
    Consider tree $T'' = (V_{T''},E_{T''})$ of the proof of \Cref{lem:centroid:Qprime}.
    Let $L = \{ u \in V_{T''} \mid \degree_Q(u) = 1 \}$ denote the set of all leaves.
    By definition, no leaf is in $A_Q$.
    By construction, each leaf is in $Q'$.
    By the handshaking lemma and the Euler formula, $|L| + 2(|V_{T''}| - |L|) + 3|A_Q| \leq \sum_{v \in V_{T''}} \degree(v) = 2(V_{T''} - 1)$.
    This implies $|A_Q| \leq |L| - 2 \leq |Q| - 1$.
\end{proof}


\Cref{cor:centroid:Qprime} allows us to decompose a tree $T$ recursively.
In each recursion, we decompose the tree at a $Q'$-centroid and perform a recursion on each resulting subtree if it contains at least one node in $Q'$.
Observe that each node in $Q'$ is chosen at some point to decompose a subtree.
We now define a tree $\decomposition(T) = (Q', E_D)$ as follows.
For each recursion (except the first one), we add an edge from the $Q'$-centroid used in the recursion to the $Q'$-centroid used in the calling recursion.
We call $\decomposition(T)$ a \emph{$Q'$-centroid decomposition tree}.
Note that $T$ may have multiple $Q'$-centroid decomposition trees since in each recursion, we could have two $Q'$-centroids to choose from.

\begin{lemma}
\label{lem:decomposition:height}
    The $Q'$-centroid decomposition tree has height $O(\log |Q|)$.
\end{lemma}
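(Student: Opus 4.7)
The plan is to run the standard centroid decomposition argument, but measured against $|Q'|$ rather than $|V_T|$. The key fact I will use is that at every recursive level, removing a $(Q' \cap V_Z)$-centroid splits the current subtree $Z$ into components each containing at most half as many nodes of $Q'$ as $Z$ did. Combined with the bound $|Q'| = O(|Q|)$ from \Cref{cor:centroid:augmentation}, this immediately yields height $O(\log |Q|)$.

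Concretely, I would proceed as follows. Let $u$ be a node of $\decomposition(T)$ at depth $d$, and let $Z_u$ be the subtree of $T$ in whose recursive call $u$ was selected. I claim that $|Q' \cap V_{Z_u}| \leq |Q'|/2^{d}$. The proof is by induction on $d$. The base case $d = 0$ is trivial since $Z_u = T$ and $|Q' \cap V_T| = |Q'|$. For the inductive step, let $v$ be the parent of $u$ in $\decomposition(T)$, chosen at depth $d-1$ from $Z_v$. Since $v$ is a $(Q' \cap V_{Z_v})$-centroid of $Z_v$, by \Cref{cor:centroid:Qprime} removing $v$ from $Z_v$ leaves connected components that each contain at most $|Q' \cap V_{Z_v}|/2$ nodes of $Q' \cap V_{Z_v}$, and $Z_u$ is one such component. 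Hence $|Q' \cap V_{Z_u}| \leq |Q' \cap V_{Z_v}|/2 \leq |Q'|/2^d$ by the induction hypothesis.

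Recursion terminates precisely when the current subtree contains no node of $Q'$, so every leaf of $\decomposition(T)$ corresponds to a subtree $Z_u$ with $|Q' \cap V_{Z_u}| \geq 1$. Combining this lower bound with the upper bound from the induction gives $1 \leq |Q'|/2^d$, i.e., $d \leq \log_2 |Q'|$. Applying \Cref{cor:centroid:augmentation}, $|Q'| = O(|Q|)$, so the height of $\decomposition(T)$ is at most $\log_2 |Q'| = O(\log |Q|)$.

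There is no real obstacle here; the argument is the textbook halving analysis for centroid decompositions. The only subtlety worth highlighting is that the centroid at each recursion is taken with respect to $Q' \cap V_{Z_v}$ rather than $Q'$ globally, which is exactly what \Cref{cor:centroid:Qprime} was set up to provide, and that the recursion is measured against the augmented set $Q'$ but that this only costs a constant factor in the final bound thanks to \Cref{cor:centroid:augmentation}.
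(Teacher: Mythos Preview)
Your proposal is correct and takes essentially the same approach as the paper's proof: both argue that each recursion halves the number of $Q'$-nodes, giving height $O(\log|Q'|)$, and then invoke \Cref{cor:centroid:augmentation} to convert this to $O(\log|Q|)$. Your version simply spells out the halving via an explicit induction, whereas the paper compresses it to a single sentence; one minor remark is that the halving property you attribute to \Cref{cor:centroid:Qprime} is really just the definition of a $(Q'\cap V_{Z_v})$-centroid, with \Cref{cor:centroid:Qprime} only needed to guarantee that such a centroid exists at each level.
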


\begin{proof}
    The $Q'$-centroid decomposition tree has height $O(\log |Q'|)$ since each recursion halves the number of nodes in $Q'$.
    By \Cref{cor:centroid:augmentation}, $O(\log |Q'|) = O(\log |Q|)$.
\end{proof}

Let $T = (V_T, E_T)$ be a tree.
Let a node $r \in V_T$ and a subset $Q' \subseteq V_T$ for a non-empty subset $Q \subseteq V_T$ be given, i.e., each node $u \in V_T$ knows whether $u = r$ and whether $u \in Q'$.
The goal of the \emph{$Q'$-centroid decomposition problem} is to iteratively compute a $Q'$-centroid decomposition tree $\decomposition(T) = (Q', E_D)$, i.e., in the $i$-th iteration, each node $u \in V_T$ determines whether it is a node in $\decomposition(T)$ of depth $i-1$.

A recursion of our \emph{decomposition primitive} for a connected subtree $Z = (V_Z, E_Z)$ proceeds as follows.
Let a node $r_Z \in V_Z$ and a subset $Q'_Z \subseteq V_Z$ be given.
For the first recursion, we set $r_Z = r$ and $Q'_Z = Q'$.
First, we apply the centroid primitive with $r_Z$ and $Q'_Z$ to compute the $Q'_Z$-centroids.
Then, we apply the election primitive to elect one of the $Q'_Z$-centroids.
Let $c_Z$ denote the elected centroid.
We decompose the graph at $c_Z$.
Let $Z_u = (V_u, E_u)$ be the subtree containing node $u$.
For each neighbor $u \in \neighborhood(c_Z)$, subtree $Z_u$ establishes a circuit that connects all nodes of the subtree.
Then, each node $v \in Q'_Z$ beeps on the circuit of its subtree.
For each neighbor $u \in \neighborhood(c_Z)$, there is a beep on the circuit of $Z_u$ iff $Q'_Z \cap V_u \neq \emptyset$.
Finally, for each neighbor $u \in \neighborhood(c_Z)$, we perform a recursion on the subgraph $Z_u$ with $r_{Z_u} = u$ and $Q'_{Z_u} = Q'_Z \cap V_u$ if $Q' \cap V_u \neq \emptyset$.

The decomposition primitive performs all recursions of the same recursion level in parallel.
After each execution of a recursion level, the amoebots establish a global circuit.
Each node in $Q'$ that has not been elected so far beeps.
We terminate if there was a beep.
Otherwise, we proceed to the next recursion level.

\begin{lemma}
\label{lem:decomposition}
    Our decomposition primitive computes a \emph{$Q'$-centroid decomposition tree} within $O(\log^2 |Q|)$ rounds.
    More precisely, in the $i$-th recursion level, the nodes compute the $Q'$-centroids of depth $i-1$.
\end{lemma}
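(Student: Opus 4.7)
The plan is to verify correctness and termination of the recursive procedure and then bound its round complexity by combining the per-level cost with the depth bound from \Cref{lem:decomposition:height}.

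For correctness, I would argue by induction on the recursion level. Initially the primitive is invoked on $T$ with $r_Z = r$ and $Q'_Z = Q'$, which is a valid connected subtree together with a non-empty subset of nodes (by assumption $Q \neq \emptyset$). By \Cref{cor:centroid:Qprime} applied to $Z$ and the set $Q' \cap V_Z = Q'_Z$, there exists at least one $Q'_Z$-centroid, so the centroid primitive (\Cref{lem:centroid}) correctly identifies at least one candidate; the election primitive (\Cref{lem:election}) then reduces these to a single elected centroid $c_Z \in Q'_Z$, which is declared to be the depth-$(i-1)$ node of $\decomposition(T)$ corresponding to the current recursion. The circuit-based test (one circuit per neighbor $u$ of $c_Z$, with every node of $Q'_Z$ beeping) correctly determines exactly those subtrees $Z_u$ that still contain nodes of $Q'_Z$, since a beep appears on the circuit of $Z_u$ iff $Q'_Z \cap V_u \neq \emptyset$. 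Removing $c_Z$ splits $Z$ into vertex-disjoint subtrees, so the recursive calls operate on pairwise vertex-disjoint subgraphs and can be executed fully in parallel without any interference on circuits or partition sets. The inductive hypothesis applied to each $Z_u$ with $Q'_{Z_u} = Q'_Z \cap V_u$ then yields that a $Q'_Z$-centroid decomposition tree of $Z$ is produced, whose root is $c_Z$ and whose subtrees are the decomposition trees of the $Z_u$ --- exactly matching the definition of $\decomposition(T)$.

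For termination, I would observe that after executing all recursions of level $i$ in parallel, every node in $Q'$ that has already been elected as a centroid at some earlier level remains silent on the global test circuit, while every remaining node of $Q'$ beeps. Absence of a beep therefore certifies that every node of $Q'$ has become a centroid of $\decomposition(T)$, and the primitive correctly terminates. Since by construction each recursion halves the number of $Q'$-nodes in the processed subtree, this moment is reached no later than recursion level $O(\log|Q'|) = O(\log|Q|)$ by \Cref{cor:centroid:augmentation} and \Cref{lem:decomposition:height}.

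For the runtime, one recursion level consists of: a centroid computation, which by \Cref{lem:centroid} costs $O(\log|Q'_Z|) = O(\log|Q|)$ rounds; the election, which by \Cref{lem:election} is $O(1)$; the per-neighbor subtree circuits plus beep, which is $O(1)$; and the global termination test, again $O(1)$. Hence one level costs $O(\log|Q|)$ rounds. Combined with the $O(\log|Q|)$ level bound, the total is $O(\log^2|Q|)$.

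The main obstacle I anticipate is the parallel execution argument: one must be careful that the centroid primitive and the auxiliary Euler tours for different subtrees at the same level do not share external links or partition sets. Because $c_Z$ is removed when decomposing $Z$ and each $Z_u$ inherits $u$ as its new root, the subtrees are vertex-disjoint and the edge $\{c_Z,u\}$ is not used inside any $Z_u$; every beep, circuit, and ETT instance at a given level is thus confined to its own subtree, which lets me invoke \Cref{lem:centroid} and \Cref{lem:election} simultaneously across all active subtrees without any bookkeeping beyond what each node already performs for its own recursion.
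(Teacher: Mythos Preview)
Your proposal is correct and follows essentially the same approach as the paper's proof: both invoke \Cref{cor:centroid:Qprime} for the existence of centroids in each recursive subtree, \Cref{lem:centroid} and \Cref{lem:election} for per-level correctness, and combine the $O(\log|Q|)$ per-level cost with the $O(\log|Q|)$ depth bound from \Cref{lem:decomposition:height}. Your version is simply more explicit about the inductive structure, the global termination test, and the vertex-disjointness that justifies parallel execution---all points the paper leaves implicit.
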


\begin{proof}
    \Cref{cor:centroid:Qprime} guarantees that existence of a $Q'$-centroid decomposition tree.
    The correctness follows then from \Cref{lem:centroid,lem:election}.
    
    By \Cref{lem:centroid}, the centroid primitive requires $O(\log |Q|)$ rounds.
    All other steps including the election primitive (see \Cref{lem:election}) only require a constant number of rounds.
    Hence, each recursion requires $O(\log |Q|)$ rounds.
    By \Cref{lem:decomposition:height}, there are $O(\log |Q|)$ recursion levels.
\end{proof}


\subsection{Portal Trees}

Theoretically, we can also apply all primitives in this section on portal graphs (see \Cref{fig:rap:pg}).
However, recall that the amoebot structure has only access to the implicit portal graphs.
In the following, we will first adapt the ETT to implicit portal graphs.
Then, we will redefine and solve all problems.

Let $T = (V_T,E_T)$ be an implicit portal graph of portal graph $\mathcal P = (V_{\mathcal P},E_{\mathcal P})$.
For each $P_1 \in V_{\mathcal P}$ and $P_2 \in \neighborhood(P_1)$, let $\connector_{P_1}(P_2)$ denote the amoebot $u \in P_1$ that is adjacent to an amoebot $v \in P_2$ in $T$, i.e., $\{ u, v \} \in E_T$.
Recall that by construction, $\connector_{P_1}(P_2)$ is unique.

We adapt the ETT to implicit portal graphs as follows.
Let a portal $R \in V_{\mathcal P}$ and a subset $\mathcal Q \subseteq V_{\mathcal P}$ be given, i.e., each amoebot $u \in V_T$ knows whether $\portal(u) = R$ and whether $\portal(u) \in \mathcal Q$.
For each portal, we elect a representative, e.g., for each $x$-portal, we elect the westernmost amoebot.
Let $\hat r \in V_T$ be the representative of the root portal $R$.
Let $\hat Q \subseteq V_T$ be the set of representatives of the portals in $\mathcal Q$.
Consider an execution of the ETT on the implicit portal graph with weight function $\weight_{\hat Q}$ (see \Cref{fig:rap:ipg}).
Observe that by \Cref{cor:euler:number}, amoebot $\hat r$ computes $|\hat Q| = |\mathcal Q|$.
We now compare that execution to an execution of the ETT on the portal graph with weight function $\weight_{\mathcal Q}$ (see \Cref{fig:rap:pg}).

\begin{lemma}
\label{lem:rap:comparison}
    Let $P_1, P_2 \in V_{\mathcal P}$ be two adjacent portals.
    Let $u_1 = \connector_{P_1}(P_2)$ and $u_2 = \connector_{P_2}(P_1)$.
    Then, $\prefix_{(P_2,P_1)} - \prefix_{(P_1,P_2)} = \prefix_{(u_2,u_1)} - \prefix_{(u_1,u_2)}$ holds.
\end{lemma}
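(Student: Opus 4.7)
The plan is to reduce both sides to signed counts of marked elements in subtrees via \Cref{lem:euler:properties}, and then exhibit a bijection between the two counts.

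First, I would fix orientations. Without loss of generality, assume that $P_1$ is the parent of $P_2$ in the portal graph rooted at $R$ (the other case is symmetric, since swapping parent/child roles merely flips the sign on both sides). Because $\hat r \in R$ and the implicit portal graph $T$ is obtained from the portal graph $\mathcal P$ by expanding each portal into an internal chain, the unique path from $\hat r$ to any amoebot $v \in P_2$ in $T$ must traverse the inter-portal edge $\{u_1, u_2\}$ (using $u_1 = \connector_{P_1}(P_2)$ and $u_2 = \connector_{P_2}(P_1)$) before continuing inside $P_2$. Hence $u_1$ is the parent of $u_2$ in $T$ rooted at $\hat r$.

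Next, I would apply \Cref{lem:euler:properties} twice. For the ETT on the portal graph with weight function $\weight_{\mathcal Q}$, the lemma yields
\[
    \prefix_{(P_2,P_1)} - \prefix_{(P_1,P_2)} = |\mathcal Q \cap \subtree_R(P_2)|.
\]
For the ETT on the implicit portal graph with weight function $\weight_{\hat Q}$, the lemma yields
\[
    \prefix_{(u_2,u_1)} - \prefix_{(u_1,u_2)} = |\hat Q \cap \subtree_{\hat r}(u_2)|.
\]
So the claim reduces to showing that these two sets are in bijection.

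The main (and essentially only) step of substance is to verify that $\subtree_{\hat r}(u_2)$ contains exactly the representatives of the portals in $\subtree_R(P_2)$. For one inclusion, if a portal $P$ lies in $\subtree_R(P_2)$, then every path from $R$ to $P$ in $\mathcal P$ passes through $P_2$, and by the same expansion argument as above every path from $\hat r$ to the representative of $P$ in $T$ passes through $u_2$; hence the representative lies in $\subtree_{\hat r}(u_2)$. For the other inclusion, any amoebot $v \in \subtree_{\hat r}(u_2)$ sits in some portal $P$ whose unique path to $R$ in $\mathcal P$ must go through $P_2$ (otherwise the implicit-portal path from $v$ to $\hat r$ would not pass through $u_2$, contradicting $v \in \subtree_{\hat r}(u_2)$), so $P \in \subtree_R(P_2)$, and in particular this also holds when $v$ is a representative. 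Since each portal contributes exactly one representative, the two cardinalities agree, establishing the claim. I do not foresee a real obstacle; the only care needed is matching parent/child orientations between the two trees, which is immediate from the construction of the implicit portal graph.
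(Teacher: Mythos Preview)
Your proposal is correct and follows essentially the same approach as the paper: fix the parent/child orientation, identify the subtree of $u_2$ in $T$ with the union of the portals in the subtree of $P_2$ in $\mathcal P$, and conclude that both prefix-sum differences count the same marked elements. The paper's proof is just a terser version of yours, appealing directly to the equality of marked edges on the corresponding Euler-tour subpaths rather than explicitly invoking \Cref{lem:euler:properties} and the representative bijection.
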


\iftoggle{full}{
\begin{proof}
    W.l.o.g., let $P_1$ be the parent of $P_2$ with respect to $R$.
    Observe that the subtree of $u_2$ with respect to $\hat r$ in the implicit portal graph contains the same amoebots as the union of the portals in the subtree of $P_2$ with respect to $R$ in the portal graph.
    The equation holds since by construction, we mark the same number of edges on the subpaths of the Euler tours, respectively.
\end{proof}
}{}

Thus, the ETT computes all the necessary information, i.e., $|\mathcal Q|$ and all prefix sum differences, to solve all problems.
However, this information is scattered across the amoebots.
More precisely, in portal $P_1 \in V_{\mathcal P}$, for each $P_2 \in \neighborhood(P_1)$, only $\connector_{P_1}(P_2)$ computes $\prefix_{(P_1,P_2)} - \prefix_{(P_2,P_1)}$ and with that all further computations and comparisons.
Hence, the amoebots of each portal have to communicate with each other.
In the following, we explain how the necessary communication for each problem.


For the \emph{root and prune problem}, let a portal $R \in V_{\mathcal P}$ and a subset $\mathcal Q \subseteq V_{\mathcal P}$ be given, i.e., each amoebot $u \in V_T$ knows whether $\portal(u) = R$ and whether $\portal(u) \in \mathcal Q$.
Let $\subtree_R(P)$ denote the set of portals in the subtree of $P$ in the tree rooted at $R$.
Let $V_{\mathcal Q} = \{P \in V_{\mathcal P} \mid \subtree_R(u) \cap Q \neq \emptyset \}$ denote the set of portals whose subtree contains portals in $\mathcal Q$.
Our goal is to root $\mathcal P$ at $R$ and to prune all subtrees without a node in $\mathcal Q$, i.e., (i) each node $u \in V_T$ determines whether $\portal(u) \in V_{\mathcal Q}$, and (ii) each amoebot $u \in V_T$ where $\portal(u) \in V_{\mathcal Q} \setminus \{ R \}$ identifies all neighbors $v \in \neighborhood(u)$ such that $\portal(v)$ is the parent of $\portal(u)$ with respect to $R$.

We apply the ETT on $T$ as described above.
For each $P_1 \in V_{\mathcal P} \setminus \{ R \}$ and each $P_2 \in \neighborhood(P_1)$, amoebot $\connector_{P_1}(P_2)$ computes $\prefix_{(P_1,P_2)} - \prefix_{(P_2,P_1)} \neq 0$ and compares it with $0$.
At the same time, $\hat r$ computes $|\mathcal Q|$.

In order to determine whether $\portal(u) \in V_{\mathcal Q}$ for each $u \in V_T$, we proceed as follows.
First, each portal establishes a circuit that connects all amoebots of the portal (see \Cref{fig:rap:circuits:1}).
Then, for each $P_1 \in V_{\mathcal P} \setminus \{ R \}$ and each $P_2 \in \neighborhood(P_1)$, amoebot $\connector_{P_1}(P_2)$ beeps on the circuit of $P_1$ if $\prefix_{(P_1,P_2)} - \prefix_{(P_2,P_1)} \neq 0$.
Simultaneously, amoebot $\hat r \in R$ beeps on the circuit of $R$ if $|\mathcal Q| > 0$.
For each amoebot $u \in V_T$, $\portal(u) \in V_{\mathcal Q}$ if there was a beep on the circuit of $\portal(u)$.

\begin{figure}[tbp]
    \centering
    \begin{minipage}[t]{.49\linewidth}
        \centering
        \includegraphics{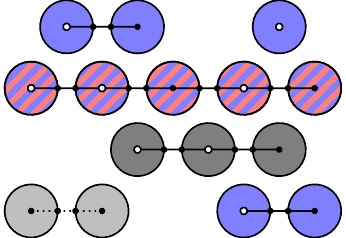}
        \subcaption{Circuits for each portal.}
        \label{fig:rap:circuits:1}
    \end{minipage}
    \hfill
    \begin{minipage}[t]{.49\linewidth}
        \centering
        \includegraphics{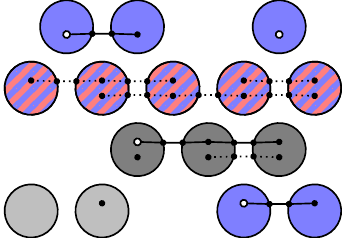}
        \subcaption{Circuits for each directed edge.}
        \label{fig:rap:circuits:2}
    \end{minipage}
    \caption{
        Used circuits for the root and prune primitive.
        For the meaning of the colors, see \Cref{fig:rap}.
        The white partition sets and the solid circuits indicate beeps.
    }
    \label{fig:rap:circuits}
\end{figure}

In order to identify all neighbors $v \in \neighborhood(u)$ such that $\portal(v)$ is the parent of $\portal(u)$ with respect to $R$ for each $u \in V_T$, we proceed as follows.
First, each portal $P_1$ establishes a circuit for each adjacent portal $P_2$ that connects all amoebots in $P_1$ that are adjacent to an amoebot in $P_2$ (see \Cref{fig:rap:circuits:2}).
For the sake of simplicity, we attribute that circuits to the directed edge $(P_1,P_2)$, respectively.
Then, for each $P_1 \in V_{\mathcal P} \setminus \{ R \}$ and each $P_2 \in \neighborhood(P_1)$, amoebot $\connector_{P_1}(P_2)$ beeps on the circuit of $(P_1,P_2)$ if $\prefix_{(P_1,P_2)} - \prefix_{(P_2,P_1)} > 0$.
A beep on the circuit of $(P_1,P_2)$ indicates that $P_2$ is the parent of $P_1$ with respect to $R$.
Hence, each $u \in P_1$ is able to identify its neighbors in $P_2$.

\begin{lemma}
\label{lem:portal:rap}
    Our \emph{root and prune primitive} roots $\mathcal P$ at $R$ and prunes all subtrees without a portal in $\mathcal Q$ within $O(\log |\mathcal Q|)$ rounds.
\end{lemma}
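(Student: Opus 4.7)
The plan is to reduce this result to the tree version, Lemma~\ref{lem:rap}, via the correspondence stated in Lemma~\ref{lem:rap:comparison}, and then handle the fact that within each portal the relevant information lives only at the connector amoebots rather than being known portal-wide. The runtime will come directly from Lemma~\ref{lem:euler} applied to the implicit portal graph with the weight function $\weight_{\hat Q}$.

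First, I would run the ETT on $T$ with weight function $\weight_{\hat Q}$. By Lemma~\ref{lem:euler}, in parallel, every amoebot $u \in V_T$ obtains $\prefix_{(u,v)} - \prefix_{(v,u)}$ bit by bit for each $v \in \neighborhood(u)$, and by Corollary~\ref{cor:euler:number} the root representative $\hat r$ obtains $|\hat Q| = |\mathcal Q|$ bit by bit. In particular, for every adjacent pair of portals $P_1, P_2 \in V_{\mathcal P}$, the connector amoebot $\connector_{P_1}(P_2)$ learns $\prefix_{(u_1,u_2)} - \prefix_{(u_2,u_1)}$, which by Lemma~\ref{lem:rap:comparison} equals $\prefix_{(P_1,P_2)} - \prefix_{(P_2,P_1)}$ along the corresponding edge of the portal-graph Euler tour.

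Now I translate Corollary~\ref{cor:rap:properties} and Lemma~\ref{lem:rap:root} to the portal graph $\mathcal P$: a portal $P_1 \neq R$ lies in $V_{\mathcal Q}$ iff $\prefix_{(P_1,P_2)} - \prefix_{(P_2,P_1)} \neq 0$ for some $P_2 \in \neighborhood(P_1)$, the root $R$ lies in $V_{\mathcal Q}$ iff $|\mathcal Q| > 0$, and if $P_1 \in V_{\mathcal Q} \setminus \{R\}$ then the unique $P_2$ with $\prefix_{(P_1,P_2)} - \prefix_{(P_2,P_1)} > 0$ is its parent. These decisions are local to the connectors $\connector_{P_1}(P_2)$ (and to $\hat r$), so the remaining issue is to spread them to all amoebots of each portal. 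For (i), the portal-wide circuits of Figure~\ref{fig:rap:circuits:1} let every connector in $P_1$ broadcast a beep if it detects a nonzero difference (and $\hat r$ beeps in $R$ if $|\mathcal Q|>0$); by the above equivalence, every $u$ with $\portal(u)=P_1$ hears a beep iff $P_1 \in V_{\mathcal Q}$. For (ii), the per-directed-edge circuits of Figure~\ref{fig:rap:circuits:2} allow $\connector_{P_1}(P_2)$ to signal ``$P_2$ is my parent'' along the circuit of $(P_1,P_2)$, which by construction reaches exactly the amoebots of $P_1$ that are adjacent to $P_2$, so each such amoebot can mark its neighbors in $P_2$ as parent pointers.

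For the runtime: the ETT dominates, and $W = |\hat Q| = |\mathcal Q|$, so Lemma~\ref{lem:euler} gives $O(\log |\mathcal Q|)$ rounds, while the two beeping rounds used to disseminate the decisions inside each portal contribute only $O(1)$ additional rounds. The main obstacle in the argument is really bookkeeping: carefully matching Euler-tour instances at the amoebot level with the abstract edges of the portal-tree Euler tour so that Lemma~\ref{lem:rap:comparison} can be invoked, and checking that sharing a single portal-circuit among all connectors in a portal does not cause the parent-identification step to confuse different neighbors. The latter is handled precisely because (ii) uses one distinct circuit per directed edge $(P_1,P_2)$ rather than a shared portal-wide circuit.
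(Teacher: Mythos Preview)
Your proposal is correct and follows essentially the same approach as the paper: run the ETT on the implicit portal graph with weight function $\weight_{\hat Q}$, invoke Lemma~\ref{lem:rap:comparison} to identify the connector-level prefix-sum differences with the portal-graph ones, apply the portal analogues of Corollary~\ref{cor:rap:properties} and Lemma~\ref{lem:rap:root}, and then disseminate the decisions via the portal-wide circuit of Figure~\ref{fig:rap:circuits:1} for membership in $V_{\mathcal Q}$ and the per-directed-edge circuits of Figure~\ref{fig:rap:circuits:2} for parent identification. The paper does not give a separate formal proof for this lemma; the argument is implicit in the algorithm description immediately preceding the lemma statement, and your write-up matches it point for point, including the $O(\log|\mathcal Q|)$ runtime via $W=|\hat Q|=|\mathcal Q|$.
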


In order to compute the augmentation set $A_{\mathcal Q}$ of $\mathcal Q$, we have to compute the degree of each portal in $V_{\mathcal Q}$.
For that, each portal $P_1 \in V_{\mathcal P}$ simply applies the PASC algorithm where for each $P_2 \in \neighborhood(P_1)$, amoebot $\connector_{P_1}(P_2)$ participates if $\prefix_{(P_1,P_2)} - \prefix_{(P_2,P_1)} \neq 0$.
In case that there is an amoebot $u = \connector_{P_1}(P_2) = \connector_{P_1}(P_3)$ for $P_2 \neq P_3$ with $\prefix_{(P_1,P_2)} - \prefix_{(P_2,P_1)} \neq 0$ and $\prefix_{(P_1,P_3)} - \prefix_{(P_3,P_1)} \neq 0$, then $u$ simulates two participating amoebots in the chain.
By \Cref{cor:pasc:prefixsum}, the last amoebot of the portal $P_1$ computes $\degree_{\mathcal Q}(P_1)$.
It compares the degree with $3$.
Now, each portal establishes a circuit that connects all amoebots of the portal (compare to \Cref{fig:rap:circuits:1}).
Then, the last amoebot of portal $P_1$ beeps on the circuit of $P_1$ if $\degree_{\mathcal Q}(P_1) \geq 3$.
For each amoebot $u \in V_T$, $\portal(u) \in A_{\mathcal Q}$ if there was a beep on the circuit of $\portal(u)$.

\begin{lemma}
\label{lem:portal:augmentation}
    Our \emph{root and prune primitive} computes the augmentation set $A_\mathcal Q$ within $O(\log |\mathcal Q|)$ rounds.
\end{lemma}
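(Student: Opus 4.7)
The plan is essentially to verify that the primitive described in the paragraph before the lemma statement is correct and meets the runtime bound. The construction is: first invoke the root-and-prune primitive on the implicit portal graph so that for every portal $P_1$ and every $P_2 \in \neighborhood(P_1)$, amoebot $\connector_{P_1}(P_2)$ knows whether $\prefix_{(P_1,P_2)} - \prefix_{(P_2,P_1)} \neq 0$; then count along the chain of amoebots of $P_1$ how many such neighbors exist; then broadcast the comparison of that count with $3$ to the whole portal via a portal-wide circuit.

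For correctness, I would argue two things. First, by \Cref{lem:portal:rap} combined with \Cref{lem:rap:comparison}, an adjacent portal $P_2$ lies in $V_{\mathcal Q}$ (equivalently, $\{P_1,P_2\}$ is an edge of $T_{\mathcal Q}$) if and only if the connector amoebot $\connector_{P_1}(P_2)$ sees a nonzero prefix-sum difference. Hence the number of weight-$1$ participants in the PASC chain that $P_1$ sets up is exactly $\degree_{\mathcal Q}(P_1)$, where a single amoebot that serves as connector for several neighbors simulates the corresponding number of weighted participants. By \Cref{cor:pasc:prefixsum}, the last amoebot of the portal chain thus learns $\degree_{\mathcal Q}(P_1)$ bit by bit and can compare it with the constant $3$. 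Relaying a single beep on the portal-wide circuit (as in the root-and-prune construction, cf.\ \Cref{fig:rap:circuits:1}) lets every amoebot of $P_1$ decide membership in $A_{\mathcal Q}$.

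For the runtime, the root-and-prune call costs $O(\log|\mathcal Q|)$ by \Cref{lem:portal:rap}. The PASC computation on each portal takes $O(\log W)$ rounds where $W = \degree_{\mathcal Q}(P_1)$. The one step I would pause at is bounding $W$ by $O(|\mathcal Q|)$: after pruning, every leaf of $T_{\mathcal Q}$ lies in $\mathcal Q$ (otherwise it would be pruned), with at most the root as an exception, so $T_{\mathcal Q}$ has at most $|\mathcal Q|+1$ leaves; since the degree of any node in a tree is bounded by the number of leaves in the subtrees it separates, we obtain $\degree_{\mathcal Q}(P_1) \leq |\mathcal Q|+1$, hence $\log W = O(\log|\mathcal Q|)$. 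The final broadcast takes one additional round, so the total is $O(\log|\mathcal Q|)$.

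The only nontrivial obstacle I expect is the degree bound just mentioned; everything else is a direct assembly of already-proved primitives (root-and-prune, PASC prefix sums, and a portal-wide beep), executed in parallel over all portals without resource conflicts because each portal uses only its own chain and its own circuit.
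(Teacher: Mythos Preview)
Your argument is correct and follows the paper's construction (the paragraph preceding the lemma is the paper's entire treatment; no separate proof is given). In particular, your bound $\degree_{\mathcal Q}(P_1)\le |\mathcal Q|+1$ via the leaf count of $T_{\mathcal Q}$ is exactly the missing step needed to justify the $O(\log|\mathcal Q|)$ runtime of the PASC call, which the paper leaves implicit. One small imprecision: the equivalence ``$P_2\in V_{\mathcal Q}$ iff the prefix-sum difference at $\connector_{P_1}(P_2)$ is nonzero'' fails when $P_2$ is the parent and $P_1\notin V_{\mathcal Q}$, but in that case all differences vanish and $\degree_{\mathcal Q}(P_1)=0$, so your count is still correct where it matters.
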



For the \emph{election problem}, let a portal $R \in V_{\mathcal P}$ and a subset $\mathcal Q \subseteq V_{\mathcal P}$ be given, i.e., each amoebot $u \in V_T$ knows whether $\portal(u) = R$ and whether $\portal(u) \in \mathcal Q$.
Our goal is to elect a single portal $R' \in \mathcal Q$, i.e., each node $u \in V_T$ knows whether $\portal(u) = R'$.

We apply the simplified ETT on $T$ to elect an amoebot $r' \in \hat Q$.
Then, we elect $R' = \portal(r') \in \mathcal Q$.
For that, each portal establishes a circuit that connects all amoebots of the portal (compare to \Cref{fig:rap:circuits:1}).
Amoebot $r'$ beeps on the circuit of its portal.
Each amoebot that receives a beep belongs to $R' = \portal(r')$.

\begin{lemma}
\label{lem:portal:election}
    Our \emph{election primitive} elects a single portal $R' \in \mathcal Q$ within $O(1)$ rounds.
\end{lemma}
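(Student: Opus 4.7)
The plan is to reduce portal election to the already-established node election primitive (Lemma \ref{lem:election}) applied to the implicit portal graph $T$, and then to propagate the outcome from the elected representative to the rest of its portal using a single beep.

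First I would invoke Lemma \ref{lem:election} on $T$ with distinguished root $\hat r$ (the representative of $R$) and target subset $\hat Q \subseteq V_T$ consisting of the chosen representatives of the portals in $\mathcal Q$. Since $\mathcal Q \neq \emptyset$, the set $\hat Q$ is non-empty, so within $O(1)$ rounds a unique amoebot $r' \in \hat Q$ is elected. By the choice of $\hat Q$ each portal in $\mathcal Q$ contributes exactly one representative, hence $R' := \portal(r')$ is a well-defined portal in $\mathcal Q$, and this gives the required unique election at the portal level.

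Next I would disseminate this choice to every amoebot of $R'$. Each portal independently establishes the circuit that unifies all of its amoebots (as in \Cref{fig:rap:circuits:1}); this is locally possible because, by the construction of the implicit portal graph, two consecutive amoebots of the same portal are joined in $T$ by an axis-parallel edge, so each amoebot can recognise its in-portal neighbours without any global information. Once the portal circuits are in place, $r'$ beeps on the circuit of its portal in a single round. Since the circuits of distinct portals are amoebot-disjoint, there is no cross-interference: every amoebot of $R'$ receives the beep and concludes $\portal(u) = R'$, while every amoebot of any other portal hears nothing on its own portal circuit and concludes the opposite.

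There is no real obstacle here beyond correctly assembling the two ingredients; the only point that requires a brief sanity check is that the per-portal circuit is truly local, which follows from the structural property of the implicit portal graph noted above. Summing the $O(1)$ rounds from Lemma \ref{lem:election}, the $O(1)$ rounds to set up the portal circuits, and the single round for the beep yields the claimed $O(1)$ bound.
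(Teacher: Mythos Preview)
Your proof is correct and follows essentially the same approach as the paper: apply the node-level election primitive (Lemma~\ref{lem:election}) on the implicit portal graph with $\hat r$ and $\hat Q$ to elect a representative $r'$, set $R'=\portal(r')$, and then broadcast via a per-portal circuit so that every amoebot learns whether its portal is $R'$. The only addition is your explicit justification that the per-portal circuit can be assembled locally, which the paper leaves implicit.
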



For the \emph{$\mathcal Q$-centroid problem}, let a portal $R \in V_{\mathcal P}$ and a subset $\mathcal Q \subseteq V_{\mathcal P}$ be given, i.e., each amoebot $u \in V_T$ knows whether $\portal(u) = R$ and whether $\portal(u) \in \mathcal Q$.
Our goal is to compute the $\mathcal Q$-centroid(s) of $\mathcal P$, i.e., each node $u \in V_T$ determines whether $\portal(u)$ is a $\mathcal Q$-centroid.

First, we apply the root and prune primitive to compute the parent of each portal in $\mathcal Q$ with respect to $R$.
Then, we apply the ETT on $T$ as described above.
For each $P_1 \in V_{\mathcal P} \setminus \{ R \}$ and each $P_2 \in \neighborhood(P_1)$, amoebot $\connector_{P_1}(P_2)$ computes $\size_{P_1}(P_2)$ and compares it to $|\mathcal Q|/2$ where $\size_{P_1}(P_2) = |\mathcal Q| - \prefix_{(P_1,P_2)} - \prefix_{(P_2,P_1)}$ if $P_2$ is the parent of $P_1$ and $\size_{P_1}(P_2) = \prefix_{(P_2,P_1)} - \prefix_{(P_1,P_2)}$ otherwise.
For that, $\hat r$ computes and broadcasts $|\mathcal Q|$.

Then, each portal establishes a circuit that connects all amoebots of the portal (compare to \Cref{fig:rap:circuits:1}).
For each $P_1 \in V_{\mathcal P} \setminus \{ R \}$ and each $P_2 \in \neighborhood(P_1)$, amoebot $\connector_{P_1}(P_2)$ beeps on the circuit of $P_1$ if $\size_{P_1} > |\mathcal Q|/2$.
A beep on the circuit of $P_1$ indicates that $P_1$ cannot be a $\mathcal Q$-centroid.
Hence, each amoebot $u \in V_T$ where $\portal(u) \in V_{\mathcal Q}$ that does not receive a beep belongs to a $\mathcal Q$-centroid.

\begin{lemma}
\label{lem:portal:centroid}
    Our \emph{$\mathcal Q$-centroid primitive} computes the $\mathcal Q$-centroids within $O(\log |\mathcal Q|)$ rounds.
\end{lemma}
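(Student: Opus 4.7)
The plan is to reduce correctness to the centroid characterization on the portal graph $\mathcal P$ and then bound the running time by summing the costs of the three subroutines the primitive invokes.

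I would first invoke the root-and-prune primitive (Lemma \ref{lem:portal:rap}) so that every connector amoebot $\connector_{P_1}(P_2)$ learns whether $P_2$ is the parent or a child of $P_1$ with respect to $R$. In parallel, I would run the ETT on the implicit portal graph $T$ with weight function $\weight_{\hat Q}$, where $\hat Q$ is the set of portal representatives of the portals in $\mathcal Q$. By Lemma \ref{lem:euler}, each connector $u_1 = \connector_{P_1}(P_2)$ computes $\prefix_{(u_1,u_2)} - \prefix_{(u_2,u_1)}$ bit by bit, and by Corollary \ref{cor:euler:number}, the representative $\hat r$ of $R$ learns $|\hat Q| = |\mathcal Q|$ bit by bit. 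Lemma \ref{lem:rap:comparison} transfers these differences to their analogues in $\mathcal P$, and Corollary \ref{cor:centroid:properties} applied inside $\mathcal P$ then yields that $\size_{P_1}(P_2)$ is exactly the number of portals of $\mathcal Q$ in the connected component containing $P_2$ after removing $P_1$.

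Since $P_1 \in \mathcal Q$ is a $\mathcal Q$-centroid iff $\size_{P_1}(P_2) \leq |\mathcal Q|/2$ for every neighbor $P_2$, I would have $\hat r$ broadcast the bits of $|\mathcal Q|$ along a global circuit so that every connector can execute the most-significant-bit-first comparison with $|\mathcal Q|/2$ in lock-step with the ETT iterations. Each portal then establishes its internal circuit (as in Figure \ref{fig:rap:circuits:1}), and a connector beeps exactly when its component size exceeds $|\mathcal Q|/2$; membership in $\mathcal Q$ together with the absence of any beep on $\portal(u)$'s circuit marks $\portal(u)$ as a centroid.

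For the runtime, root-and-prune takes $O(\log |\mathcal Q|)$ rounds by Lemma \ref{lem:portal:rap}, the ETT together with its broadcast of $|\mathcal Q|$ takes $O(\log |\mathcal Q|)$ rounds by Lemma \ref{lem:euler}, and the final beeping phase is $O(1)$ rounds. The main point that needs care is the synchronization: the bit-by-bit streams for $\size_{P_1}(P_2)$, for $|\mathcal Q|/2$, and for the comparison outcome must be interleaved so that the per-portal beep aggregates a single, consistent verdict rather than an intermediate bit. Once this bookkeeping is set up correctly, the bit-level comparison is standard and only contributes a constant factor, so the total remains $O(\log |\mathcal Q|)$.
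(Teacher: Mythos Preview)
Your proof sketch is correct and follows essentially the same approach as the paper: invoke the root-and-prune primitive to identify parents, run the ETT with weight $\weight_{\hat Q}$ so that each connector computes its prefix-sum difference, transfer these differences to the portal graph via Lemma~\ref{lem:rap:comparison}, apply Corollary~\ref{cor:centroid:properties} to interpret them as component sizes, broadcast $|\mathcal Q|$ from $\hat r$ for the bitwise comparison with $|\mathcal Q|/2$, and aggregate the outcome on a per-portal circuit. The paper does not give a separate proof for Lemma~\ref{lem:portal:centroid} but relies on the preceding algorithm description together with the proof of the tree version (Lemma~\ref{lem:centroid}); your sketch spells out precisely those steps with the additional bookkeeping about bit-stream synchronization, which is a fair elaboration rather than a different argument.
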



For the \emph{$\mathcal Q'$-centroid decomposition problem}, let a portal $R \in V_{\mathcal P}$ and a subset $\mathcal Q' \subseteq V_{\mathcal P}$ for a non-empty subset $Q \subseteq V_{\mathcal P}$ be given, i.e., each amoebot $u \in V_T$ knows whether $\portal(u) = R$ and whether $\portal(u) \in \mathcal Q'$.
Our goal is to iteratively compute a $\mathcal Q'$-centroid decomposition tree $\decomposition(\mathcal P) = (\mathcal Q', E_D)$, i.e., in the $i$-th iteration, each node $u \in V_T$ determines whether $\portal(u)$ is a portal in $\decomposition(\mathcal P)$ of depth $i-1$.

Our \emph{decomposition primitive} for implicit portal graphs works largly like the decomposition primitive for general tree structure, i.e., a recursion for a connected subtree $\mathcal Z = (V_\mathcal Z, E_\mathcal Z)$ of $\mathcal P$ proceeds as follows.
Let a node $R_\mathcal Z \in V_\mathcal Z$ and a subset $\mathcal Q'_\mathcal Z \subseteq V_\mathcal Z$ be given.
For the first recursion, we set $R_\mathcal Z = R$ and $\mathcal Q'_\mathcal Z = \mathcal Q'$.
First, we apply the centroid primitive with $R_\mathcal Z$ and $\mathcal Q'_\mathcal Z$ to compute the $\mathcal Q'_\mathcal Z$-centroids.
Then, we apply the election primitive to elect one of the $\mathcal Q'_\mathcal Z$-centroids.
Let $C_\mathcal Z$ denote the elected centroid.
We decompose the subtree $\mathcal Z$ at $C_\mathcal Z$.

Let $\mathcal Z_P = (V_P, E_P)$ denote the subtree containing portal $P \in \neighborhood(C_\mathcal Z)$.
Let $R_{\mathcal Z_P} = P$ and $\mathcal Q'_{\mathcal Z_P} = \mathcal Q'_\mathcal Z \cap V_P$.
For each of these subtrees $\mathcal Z_P$, we proceed as follows.
First, each portal establishes a circuit that connects all amoebots of the portal.
Then, $\connector_P(C_\mathcal Z)$ beeps on the circuit of $P$.
Each amoebot that receives a beep belongs to $R_{\mathcal Z_P}$.
Second, the subtree establishes a circuit that connects all amoebots of the subtree.
Then, each amoebot of a portal in $\mathcal Q'_\mathcal Z$ beeps on the circuit of the subtree.
There is a beep on the circuit of the subtree iff $\mathcal Q'_\mathcal Z \cap V_P \neq \emptyset$.
Finally, we perform a recursion on the subtree $\mathcal Z_P$ with $R_{\mathcal Z_P}$ and $\mathcal Q'_{\mathcal Z_P}$ if $\mathcal Q' \cap V_P \neq \emptyset$.

The decomposition primitive performs all recursions of the same recursion level in parallel.
After each execution of a recursion level, the amoebots establish a global circuit.
Each amoebot of a portal in $\mathcal Q'$ that has not been elected so far beeps.
We terminate if there was a beep.
Otherwise, we proceed to the next recursion level.

\begin{lemma}
\label{lem:portal:decomposition}
    Our decomposition primitive computes a \emph{$\mathcal Q'$-centroid decomposition tree} within $O(\log^2 |Q|)$ rounds.
    More precisely, in the $i$-th recursion level, the nodes compute the $\mathcal Q'$-centroids of depth $i-1$.
\end{lemma}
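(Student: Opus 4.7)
The plan is to mirror the proof of Lemma \ref{lem:decomposition}, replacing each tree-level subroutine by its portal-graph analog and reusing the structural lemmas about $\mathcal Q'$-centroids. First I would argue existence: the arguments of Lemma \ref{lem:centroid:Qprime} and Corollary \ref{cor:centroid:Qprime} go through verbatim on $\mathcal P$, since by Lemma \ref{lem:portal_graph:tree} it is a tree, so every connected subtree $\mathcal Z$ with $\mathcal Q'_\mathcal Z = \mathcal Q'\cap V_\mathcal Z \neq \emptyset$ admits at least one $\mathcal Q'_\mathcal Z$-centroid. Moreover $|\mathcal Q'| = O(|\mathcal Q|) = O(|Q|)$ by Corollary \ref{cor:centroid:augmentation} (applied to $\mathcal P$), and each recursion halves the number of $\mathcal Q'$-portals in the current subtree, so the recursion depth is $O(\log|Q|)$ by exactly the argument of Lemma \ref{lem:decomposition:height}.

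Next I would establish per-recursion correctness. For a current subtree $\mathcal Z$ with designated root $R_\mathcal Z$ and subset $\mathcal Q'_\mathcal Z$, Lemma \ref{lem:portal:centroid} guarantees that the portal centroid primitive correctly labels the $\mathcal Q'_\mathcal Z$-centroids, and Lemma \ref{lem:portal:election} elects one of them as $C_\mathcal Z$. The beep on the per-portal circuit of $P \in \neighborhood(C_\mathcal Z)$, initiated by $\connector_P(C_\mathcal Z)$, correctly informs every amoebot of $P$ that it is the new root $R_{\mathcal Z_P}$ of the child subtree $\mathcal Z_P$. The second beep on the subtree-wide circuit, fired by all amoebots whose portals lie in $\mathcal Q'_\mathcal Z$, is received iff $\mathcal Q'_{\mathcal Z_P}\neq\emptyset$, so children are spawned exactly on the non-empty branches; the recursion therefore enumerates precisely the portals of a valid $\mathcal Q'$-centroid decomposition tree, and a portal is selected at recursion level $i$ iff it has depth $i-1$ in that tree.

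The main subtlety, and the one I would treat most carefully, is that all recursions at the same level are executed in parallel and share the same amoebot substrate. This is safe because the elected centroids $C_\mathcal Z$ are removed before their children are spawned, so the portal sets of distinct current subtrees at a given level are pairwise disjoint, and in particular no amoebot participates in the circuit constructions of two sibling recursions simultaneously. Each amoebot thus locally reuses the same pool of partition sets across levels, which suffices under the constant-memory restriction of the model. The global termination beep by all un-elected $\mathcal Q'$-portals is silent exactly when every portal of $\mathcal Q'$ has been selected as a centroid in some earlier recursion, i.e., when $\decomposition(\mathcal P)$ has been fully computed.

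For the runtime, each recursion performs the portal centroid primitive ($O(\log|\mathcal Q'|)=O(\log|Q|)$ rounds by Lemma \ref{lem:portal:centroid}), the portal election primitive ($O(1)$ rounds by Lemma \ref{lem:portal:election}), and a constant number of additional beep phases to propagate $R_{\mathcal Z_P}$, detect non-empty children, and check global termination. Combined with the $O(\log|Q|)$ recursion depth established above, the total cost is $O(\log^2|Q|)$ rounds, as claimed.
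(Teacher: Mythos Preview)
Your proposal is correct and takes essentially the same approach as the paper: the paper does not give an explicit proof for Lemma~\ref{lem:portal:decomposition} at all, tacitly relying on the fact that it is the portal-graph analogue of Lemma~\ref{lem:decomposition}, whose short proof invokes Corollary~\ref{cor:centroid:Qprime} for existence, Lemmas~\ref{lem:centroid} and~\ref{lem:election} for correctness, and Lemma~\ref{lem:decomposition:height} for the recursion depth. Your argument instantiates exactly this outline with the portal-graph primitives (Lemmas~\ref{lem:portal:centroid} and~\ref{lem:portal:election}) and additionally spells out the parallel-execution and termination details that the paper leaves implicit.
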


}{


\section{Tree Primitives}

Our algorithms make excessive use of tree structures.
Due to space constraints, we will only list the tree primitives that we use.
Let $T = (V_T,E_T)$ be an implicit portal graph of portal graph $\mathcal P = (V_{\mathcal P},E_{\mathcal P})$.
Let a portal $R \in V_{\mathcal P}$ and a subset $\mathcal Q \subseteq V_{\mathcal P}$ be given, i.e., each amoebot $u \in V_T$ knows whether $\portal(u) = R$ and whether $\portal(u) \in \mathcal Q$.


Let $\subtree_R(P)$ denote the set of portals in the subtree of $P$ in the tree rooted at $R$.
Let $V_{\mathcal Q} = \{P \in V_{\mathcal P} \mid \subtree_R(u) \cap Q \neq \emptyset \}$ denote the set of portals whose subtree contains portals in $\mathcal Q$.

\begin{lemma}
\label{lem:portal:rap}
    Our \emph{root and prune primitive} roots $\mathcal P$ at $R$ and prunes all subtrees without a portal in $\mathcal Q$ within $O(\log |\mathcal Q|)$ rounds.
    More precisely, each node $u \in V_T$ determines whether $\portal(u) \in V_{\mathcal Q}$, and each amoebot $u \in V_T$ where $\portal(u) \in V_{\mathcal Q} \setminus \{ R \}$ identifies all neighbors $v \in \neighborhood(u)$ such that $\portal(v)$ is the parent of $\portal(u)$ with respect to $R$.
\end{lemma}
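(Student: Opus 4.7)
The plan is to reduce correctness on the implicit portal graph $T$ to correctness of the ETT applied on the portal graph $\mathcal P$ itself, using \Cref{lem:rap:comparison} as the bridge, and then to verify that two short beep phases propagate the per-connector results to the rest of each portal. The main leverage is that \Cref{lem:rap:comparison} identifies the prefix sum differences produced by the connectors of $T$ (under weight function $\weight_{\hat Q}$) with those that the ETT on $\mathcal P$ (under $\weight_{\mathcal Q}$) would produce. This ports the characterizations given by \Cref{cor:rap:properties} and \Cref{lem:rap:root} from the abstract tree setting to $\mathcal P$ essentially verbatim.

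First I would apply \Cref{lem:euler} to $T$ with weight function $\weight_{\hat Q}$; since $W = |\hat Q| = |\mathcal Q|$, every connector $\connector_{P_1}(P_2)$ computes, bit by bit, the quantity $\prefix_{(P_1,P_2)} - \prefix_{(P_2,P_1)}$ on $\mathcal P$, and $\hat r$ obtains $|\mathcal Q|$ by \Cref{cor:euler:number}. The ported versions of \Cref{cor:rap:properties} and \Cref{lem:rap:root} then say that $P_1 \in V_{\mathcal Q}$ iff some connector in $P_1$ has nonzero prefix sum difference, or $P_1 = R$ and $|\mathcal Q| > 0$; and that $P_2$ is the portal-parent of $P_1$ iff the corresponding prefix sum difference is strictly positive. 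The two beep phases then lift these local facts to all amoebots of each portal: the single internal portal circuit makes the $V_{\mathcal Q}$-flag audible to the whole portal, while the circuit dedicated to the directed edge $(P_1, P_2)$, which joins exactly those amoebots of $P_1$ that have a $T$-edge to $P_2$, reaches precisely the $P_1$-amoebots that must record $P_2$ as the parent direction.

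The runtime is dominated by the ETT, which runs in $O(\log W) = O(\log |\mathcal Q|)$ rounds by \Cref{lem:euler}; the two broadcast phases add only $O(1)$ rounds. The only delicate case I expect is the root portal $R$ itself, since the prefix sum characterization of \Cref{cor:rap:properties} does not cover $R$; this is handled cleanly by having $\hat r$ beep when $|\mathcal Q| > 0$, a value it knows via \Cref{cor:euler:number}. Once this special case is in place, the rest of the proof is a mechanical verification that each circuit-level broadcast carries exactly the intended information to exactly the intended amoebots.
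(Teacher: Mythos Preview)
Your proposal is correct and mirrors the paper's approach: apply the ETT on $T$ with $\weight_{\hat Q}$, invoke \Cref{lem:rap:comparison} to port \Cref{cor:rap:properties} and \Cref{lem:rap:root} to $\mathcal P$, and use the two beep phases (portal-internal and per-directed-edge) to disseminate the connectors' results, with $\hat r$ handling the root case via \Cref{cor:euler:number}. One small wording fix: the directed-edge circuit for $(P_1,P_2)$ must connect all amoebots of $P_1$ that are adjacent in $G_X$ to some amoebot of $P_2$ (as in the paper), not just those with a ``$T$-edge'' to $P_2$---only the single connector $\connector_{P_1}(P_2)$ has such an edge, so that circuit would broadcast nothing; your stated intent (``reaches precisely the $P_1$-amoebots that must record $P_2$ as the parent direction'') is the right set, so this is only a slip in the description.
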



Let $T_\mathcal Q = (V_\mathcal Q, E_\mathcal Q)$ denote the resulting tree.
Let $\degree_\mathcal Q(u)$ denote the degree of a portal $P \in V_\mathcal Q$ in $T_\mathcal Q$.
We define the \emph{augmentation set} to be $A_Q = \{ u \in V_Q \mid \degree_Q(u) \geq 3 \}$.

\begin{lemma}
\label{lem:portal:augmentation}
    Our \emph{augmentation primitive} computes the augmentation set $A_\mathcal Q$ within $O(\log |\mathcal Q|)$ rounds.
\end{lemma}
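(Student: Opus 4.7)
The plan is to first invoke the root and prune primitive (\Cref{lem:portal:rap}) so that for every portal $P_1 \in V_\mathcal{Q}$ and every neighbor $P_2 \in \neighborhood(P_1)$ in the portal graph, the connector amoebot $\connector_{P_1}(P_2)$ already holds enough information to decide whether $\prefix_{(P_1,P_2)}-\prefix_{(P_2,P_1)} \neq 0$, i.e.\ whether $\{P_1,P_2\}\in E_\mathcal{Q}$. Hence each such amoebot locally knows whether its incident inter-portal edge contributes to $\degree_\mathcal{Q}(P_1)$, and the task reduces to aggregating, within every portal, the count of contributing edges and checking whether this count reaches $3$.

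Since every portal is a chain of amoebots along one of the three grid axes, I would next invoke the PASC primitive of \Cref{cor:pasc:prefixsum} on this chain, giving each amoebot $u \in P_1$ a weight equal to the number of neighbors $P_2$ with $u = \connector_{P_1}(P_2)$ and $\{P_1,P_2\} \in E_\mathcal{Q}$. After $O(\log \degree_\mathcal{Q}(P_1))$ iterations the last amoebot of the chain holds $\degree_\mathcal{Q}(P_1)$ bit by bit and compares it locally against the constant $3$.

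Finally, each portal establishes the single intra-portal circuit used in \Cref{fig:rap:circuits:1}; the last amoebot of $P_1$ beeps on this circuit iff $\degree_\mathcal{Q}(P_1)\ge 3$, and every amoebot $u\in P_1$ then concludes $\portal(u)\in A_\mathcal{Q}$ iff it receives the beep. For the runtime, I would argue that, once $T_\mathcal{Q}$ is rooted at $R$, every child of $P_1$ in $T_\mathcal{Q}$ contributes at least one portal of $\mathcal{Q}$ to its (disjoint) subtree by the very definition of $V_\mathcal{Q}$, so $\degree_\mathcal{Q}(P_1)\le |\mathcal{Q}|+1$; hence the PASC round complexity is $O(\log|\mathcal{Q}|)$, which dominates the $O(\log|\mathcal{Q}|)$ cost of the initial root and prune call and the $O(1)$-round broadcast.

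The subtlety I expect to be the main obstacle is the rare amoebot that serves simultaneously as $\connector_{P_1}(P_2)$ for two distinct neighbors $P_2 \ne P_3$ of $P_1$ (with both edges in $E_\mathcal{Q}$), since \Cref{cor:pasc:prefixsum} is stated for a chain of unit-weight participants. My plan is to let such an amoebot simulate two consecutive unit-weight entries within its single PASC instance, so that the total chain weight still equals $\degree_\mathcal{Q}(P_1)$ without any increase in round count, and the runtime analysis above applies verbatim.
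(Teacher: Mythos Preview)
Your proposal is correct and follows essentially the same approach as the paper: run the root and prune primitive to determine which inter-portal edges lie in $E_{\mathcal Q}$, then apply the PASC prefix-sum primitive along each portal chain (with the connector amoebots participating, and an amoebot simulating two consecutive entries when it is the connector for two distinct neighbors) so that the last amoebot learns $\degree_{\mathcal Q}(P_1)$, compares it with $3$, and broadcasts the outcome on the intra-portal circuit. Your explicit bound $\degree_{\mathcal Q}(P_1)\le |\mathcal Q|+1$ to justify the $O(\log|\mathcal Q|)$ PASC runtime is a detail the paper leaves implicit, but otherwise the arguments coincide.
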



\begin{lemma}
\label{lem:portal:election}
    Our \emph{election primitive} elects a single portal $R' \in \mathcal Q$ within $O(1)$ rounds.
    More precisely, each node $u \in V_T$ knows whether $\portal(u) = R'$.
\end{lemma}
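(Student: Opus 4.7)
The plan is to reduce the problem to selecting a single marked amoebot on a traversal of the implicit portal tree $T$, and then to propagate the choice along the elected portal. First, using the common compass orientation, each portal designates a canonical representative amoebot (for instance, the westernmost amoebot of each $x$-portal), which every amoebot can decide locally in $O(1)$ rounds from its own pin configuration. Let $\hat r \in R$ denote the representative of $R$, and let $\hat Q \subseteq V_T$ denote the set of representatives of the portals in $\mathcal Q$.

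Next, I would build a directed Euler tour of $T$ by having each amoebot configure its pin set so that, around the tour, every directed edge of $T$ appears exactly once and each occurrence of an amoebot corresponds to one of its partition sets. Each representative in $\hat Q$ then marks one of its outgoing edges on the tour; deleting the marked edges splits the tour into disjoint subpaths, each of which forms a single circuit. The subpath starting at $\hat r$ ends at exactly one marked amoebot, so when $\hat r$ beeps on that circuit, the amoebot at its other end is the unique elected $r' \in \hat Q$. Define $R' := \portal(r') \in \mathcal Q$; since $r'$ is a representative, $R'$ is a single, well-defined portal.

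Finally, to inform every amoebot of $R'$ that it belongs to the elected portal, each portal establishes a circuit connecting all of its amoebots, and $r'$ beeps on its own portal's circuit. An amoebot $u$ concludes $\portal(u) = R'$ if and only if it received a beep. Every stage uses only a constant number of rounds, so the total runtime is $O(1)$.

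The step I expect to be the main subtlety is realizing the directed Euler tour of $T$ with only the constant number of pins per amoebot that the model allows: an amoebot may lie on several edges of $T$ and so has several occurrences along the tour, and these occurrences must be wired as independent partition sets. This is feasible because in the triangular grid each amoebot has degree at most $6$, so the number of occurrences it has to simulate is bounded by a constant. Once this wiring is in place, the correctness of the "beep on the first subpath" argument is immediate from the structure of the Euler tour, and no step depends on the sizes of $T$ or $\mathcal Q$, which is what gives the $O(1)$ runtime.
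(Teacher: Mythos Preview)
Your proposal is correct and follows essentially the same approach as the paper: pick canonical portal representatives $\hat r$ and $\hat Q$, run the simplified Euler-tour election on the implicit portal tree to elect a representative $r'\in\hat Q$, and then broadcast within $\portal(r')$ via a per-portal circuit so that every amoebot learns whether it belongs to $R'$. Your remark about the bounded-degree realization of the Euler tour mirrors the paper's own observation that each amoebot simulates $\Theta(\deg_T(u))\le 6$ instances.
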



A portal $P \in \mathcal Q$ is a \emph{$\mathcal Q$-centroid} iff after the removal of $P$, the tree splits into connected components with at most $|\mathcal Q|/2$ portals in $\mathcal Q$, respectively.

\begin{lemma}
\label{lem:portal:centroid}
    Our \emph{$\mathcal Q$-centroid primitive} computes the $\mathcal Q$-centroids within $O(\log |\mathcal Q|)$ rounds.
    More precisely, each portal $u \in V_T$ determines whether $\portal(u)$ is a $\mathcal Q$-centroid.
\end{lemma}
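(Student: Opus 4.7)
The plan is to verify the described primitive by reducing it to results already established for the ETT on the implicit portal graph, following the pattern of Lemma \ref{lem:centroid} (the general-tree version) lifted through the equivalence in \Cref{lem:rap:comparison}. Concretely, the argument splits into a correctness part, an intra-portal dissemination part, and a runtime accounting.

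For correctness, I would first invoke the root and prune primitive (\Cref{lem:portal:rap}) so that every portal in $V_{\mathcal Q}$ learns its parent with respect to $R$. Then the ETT on $T$ with the weight function $\weight_{\hat Q}$ has, by \Cref{lem:euler} together with \Cref{lem:rap:comparison}, the effect that each connector $\connector_{P_1}(P_2)$ computes $\prefix_{(P_1,P_2)} - \prefix_{(P_2,P_1)}$ bit by bit, while simultaneously $\hat r$ obtains $|\mathcal Q|$ by \Cref{cor:euler:number} and broadcasts its bits on a global circuit. Equipped with parent information, each connector distinguishes the two cases in the definition of $\size_{P_1}(P_2)$. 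The analogue of \Cref{cor:centroid:properties} on $\mathcal P$ (which follows by an application of \Cref{lem:euler:properties} to the portal graph, exactly as in the general-tree case) then guarantees that $P_1 \in \mathcal Q$ is a $\mathcal Q$-centroid iff $\size_{P_1}(P_2) \leq |\mathcal Q|/2$ for every $P_2 \in \neighborhood(P_1)$.

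For the intra-portal step, each portal establishes the internal circuit of \Cref{fig:rap:circuits:1}; a connector beeps iff its comparison yields $\size_{P_1}(P_2) > |\mathcal Q|/2$, and the absence of any beep on $P_1$'s circuit certifies the centroid condition at $P_1$. Since every amoebot of $P_1$ listens on the same circuit, they all decide uniformly, which is exactly what the lemma requires. For the runtime, the root and prune phase costs $O(\log |\mathcal Q|)$ rounds by \Cref{lem:portal:rap}, and the ETT together with the broadcast of $|\mathcal Q|$ cost $O(\log |\mathcal Q|)$ rounds by \Cref{lem:euler} (its weight function sums to $|\mathcal Q|$); the final portal-local beep is $O(1)$. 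The main obstacle I expect is organizing the bit-by-bit comparison of $\size_{P_1}(P_2)$ against $|\mathcal Q|/2$ online — both operands appear incrementally, and every connector has only constant memory — so the comparators must be pipelined with the ETT and the broadcast so that the outcome is available as soon as the last bit of $|\mathcal Q|$ is revealed. Once that bookkeeping is in place, all three phases fit inside $O(\log |\mathcal Q|)$ rounds, yielding the claim.
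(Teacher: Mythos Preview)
Your proposal is correct and mirrors the paper's approach: the paper describes the primitive immediately before the lemma (root and prune via \Cref{lem:portal:rap} for parent information, ETT on $T$ with $\weight_{\hat Q}$ so that each $\connector_{P_1}(P_2)$ computes $\size_{P_1}(P_2)$ via \Cref{lem:rap:comparison}, broadcast of $|\mathcal Q|$ from $\hat r$, portal-local beep to disseminate the outcome) and then states the lemma without a separate proof, leaving the verification implicit along exactly the lines you spell out. The correctness and runtime arguments you give are the intended ones, lifted from \Cref{lem:centroid} and \Cref{cor:centroid:properties} to the portal setting.
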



Now, let additionally $A_\mathcal Q \subseteq V_{\mathcal P}$ be given, i.e., each amoebot $u \in V_T$ knows whether $\portal(u) \in A_\mathcal Q$.
For $\mathcal Q' = \mathcal Q \cup A_\mathcal Q$, we can show that each connected subtree of $\mathcal P$ has one or two $\mathcal Q'$-centroids.
This allows us to decompose $\mathcal P$ recursively.
In each recursion, we decompose the tree at a $\mathcal Q'$-centroid and perform a recursion on each resulting subtree if it contains at least one node in $\mathcal Q'$.
Observe that each node in $\mathcal Q'$ is chosen at some point to decompose a subtree.
We now define a tree $\decomposition(\mathcal P) = (\mathcal Q', E_D)$ as follows.
For each recursion (except the first one), we add an edge from the $\mathcal Q'$-centroid used in the recursion to the $\mathcal Q'$-centroid used in the calling recursion.
We call $\decomposition(\mathcal P)$ a \emph{$\mathcal Q'$-centroid decomposition tree}.

\begin{lemma}
\label{lem:portal:decomposition}
    Our \emph{decomposition primitive} computes a \emph{$\mathcal Q'$-centroid decomposition tree} within $O(\log^2 |\mathcal Q|)$ rounds.
    More precisely, in the $i$-th recursion level, we compute the $\mathcal Q'$-centroids of depth $i-1$, i.e., each amoebot $u \in V_T$ determines whether $\portal(u)$ is a $\mathcal Q'$-centroid of depth $i-1$.
\end{lemma}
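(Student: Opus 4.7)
The plan is to analyze a single recursion, then combine across levels. First I would check that one recursion on a connected subtree $\mathcal Z$ with designated root $R_{\mathcal Z}$ and nonempty set $\mathcal Q'_{\mathcal Z} = \mathcal Q' \cap V_{\mathcal Z}$ is well-defined and runs in $O(\log |\mathcal Q|)$ rounds: the $\mathcal Q'_{\mathcal Z}$-centroid computation takes $O(\log |\mathcal Q|)$ rounds by \Cref{lem:portal:centroid}, the election of a single centroid $C_{\mathcal Z}$ takes $O(1)$ rounds by \Cref{lem:portal:election}, and each of the subsequent steps (beeping on per-portal circuits to identify the new root $R_{\mathcal Z_P}$ of each child subtree, and beeping on the subtree-wide circuit to detect whether $\mathcal Q'_{\mathcal Z} \cap V_P$ is empty) takes $O(1)$ rounds.

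Next I would prove correctness by induction on the recursion depth $i$. The base case $i=1$ holds because the first call has $R_{\mathcal Z}=R$ and $\mathcal Q'_{\mathcal Z}=\mathcal Q'$, so a $\mathcal Q'$-centroid of $\mathcal P$ is computed and elected, which is precisely the root of $\decomposition(\mathcal P)$ (depth $0$). For the inductive step, assume the $(i-1)$-th level has elected exactly the portals at depth $i-2$ of $\decomposition(\mathcal P)$. The components of $\mathcal Z \setminus \{C_{\mathcal Z}\}$ that still contain a portal of $\mathcal Q'$ are, by the augmentation lemma (\Cref{lem:portal:centroid} together with the centroid-existence statement preceding this lemma), precisely those connected subtrees on which a further centroid exists and lies in $\mathcal Q'$. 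Hence at level $i$ the primitive elects exactly one $\mathcal Q'$-centroid per such subtree, and these are the children of the level-$(i-1)$ centroids in $\decomposition(\mathcal P)$, i.e.\ the depth-$(i-1)$ portals.

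For the running time I would note that at every level the active subtrees are pairwise vertex-disjoint, so their primitives can run in parallel using only the external links internal to each subtree; no circuit conflicts arise because each amoebot belongs to at most one current subtree, which it learned during the previous recursion. Since each chosen centroid partitions $\mathcal Q'_{\mathcal Z}$ into components of size at most $|\mathcal Q'_{\mathcal Z}|/2$, after $O(\log |\mathcal Q'|) = O(\log |\mathcal Q|)$ levels (using $|\mathcal Q'| = O(|\mathcal Q|)$) every remaining subtree contains at most one portal of $\mathcal Q'$, at which point the global termination circuit detects no unelected beepers and the primitive halts. Multiplying the per-level cost $O(\log |\mathcal Q|)$ by the $O(\log |\mathcal Q|)$ number of levels yields the claimed $O(\log^2 |\mathcal Q|)$ bound.

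The main obstacle I expect is justifying the parallel execution at each level, namely that the concurrent invocations of the centroid, election, and beeping primitives on sibling subtrees do not interfere and that each amoebot knows which subtree it currently belongs to. I would handle this by observing that the preceding recursion's pruning and root-identification step already labels each amoebot with the unique component $\mathcal Z_P$ it lies in, and that all circuits used by the child recursion are confined to external links between amoebots carrying that label; thus the independent primitives can be scheduled in lockstep with a global synchronizer and their round counts simply add across levels.
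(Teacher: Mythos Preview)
Your proposal is correct and follows essentially the same approach as the paper. The paper's own argument is terse: it cites the existence of a $\mathcal Q'$-centroid in every connected subtree (\Cref{cor:centroid:Qprime}), invokes \Cref{lem:portal:centroid} and \Cref{lem:portal:election} for correctness and the $O(\log|\mathcal Q|)$ per-level cost, and then \Cref{lem:decomposition:height} for the $O(\log|\mathcal Q|)$ bound on the number of levels. Your write-up spells out the same skeleton in more detail (the induction on depth, the disjointness of sibling subtrees for parallel execution), which is fine; the only small wobble is that the reference you label ``the augmentation lemma (\Cref{lem:portal:centroid} together with the centroid-existence statement)'' should really point to \Cref{cor:centroid:Qprime} (and \Cref{cor:centroid:augmentation} for $|\mathcal Q'|=O(|\mathcal Q|)$), not to \Cref{lem:portal:centroid}.
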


}



\section{Shortest Path Forest with a Single Source}
\label{sec:spt}

In this section, we consider $(1,\ell)$-SPF.
Let $s$ denote the only amoebot in $S$.
An amoebot $v \in \neighborhood(u)$ is a feasible parent of $u \in X$ with respect to $s$ iff $\dist(s,u) = \dist(s,v) + 1$, which is equivalent to
\begin{equation}
\label{eq:sssp:parent}
    \Big(\dist_x(s,u) - \dist_x(s,v)\Big) + \Big(\dist_y(s,u) - \dist_y(s,v)\Big) + \Big(\dist_z(s,u) - \dist_z(s,v)\Big) = 2
\end{equation}
by \Cref{lem:portal_graph:triangular}.
Since each portal graph is a tree (see \Cref{lem:portal_graph:tree}), each difference $\dist_d(s,u) - \dist_d(s,v)$ only depends on the relation between $\portal_d(u)$ and $\portal_d(v)$.
If $\portal_d(v)$ is the parent of $\portal_d(u)$, then $\dist_d(s,u) - \dist_d(s,v) = 1$.
If $\portal_d(v)$ is equal to $\portal_d(u)$, then $\dist_d(s,u) - \dist_d(s,v) = 0$.
If $\portal_d(v)$ is a child of $\portal_d(u)$, then $\dist_d(s,u) - \dist_d(s,v) = -1$.
Note that since $v \in \neighborhood(u)$, there must be an axis $d$ such that $\portal_d(v)$ is equal to $\portal_d(u)$.
This implies that \Cref{eq:sssp:parent} only holds if $\portal_d(v)$ is the parent of $\portal_d(u)$ for the other two axes.
%
Hence, computing all relative distances reduces to rooting the portal graphs at $s$.
For that, we apply the root and prune primitive on each portal graph with $s$ and $D$ as parameters.

\begin{lemma}
\label{lem:sssp:parent:shortest_path}
    Each amoebot on the shortest path from $s$ to $D$ is able to choose a parent with respect to $s$.
\end{lemma}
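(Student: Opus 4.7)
The plan is to leverage the root and prune primitive (\Cref{lem:portal:rap}) already applied on each of the three portal graphs with source $s$ and destination set $D$. After these three invocations, every amoebot $u$ knows, for each axis $d$ and each neighbor $v$, whether $\portal_d(v)$ is the parent of $\portal_d(u)$ in $\mathcal P_d$ rooted at $\portal_d(s)$, provided $\portal_d(u)$ has not been pruned. By the discussion preceding the lemma, for $u$ to identify a feasible parent it suffices to find a neighbor $v$ that is flagged as a portal parent in exactly two of the three axes (for the remaining axis, the edge $\{u,v\}$ must then be parallel to that axis, which $u$ can check locally).

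Fix any $u \neq s$ on a shortest path $s = w_0, w_1, \ldots, w_k = t$ to some destination $t \in D$, and let $w$ denote the predecessor of $u$ on this path. First I would argue that $\portal_d(u)$ survives the pruning in every axis $d$. As in the proof of \Cref{lem:portal_graph:triangular}, a shortest path in $G_X$ cannot visit any $d$-portal twice, so the sequence $\portal_d(w_0), \portal_d(w_1), \ldots, \portal_d(w_k)$ collapses (after removing consecutive duplicates) to a simple path in the tree $\mathcal P_d$ from $\portal_d(s)$ to $\portal_d(t)$. Since $\portal_d(u)$ appears on this path, $\portal_d(t)$ lies in $\subtree_{\portal_d(s)}(\portal_d(u))$, hence $\portal_d(u) \in V_{\mathcal Q}$ and is kept by the root and prune primitive.

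Next I would show that $w$ itself satisfies \Cref{eq:sssp:parent}. Let $d^*$ be the unique axis to which the edge $\{w,u\}$ is parallel; then $\portal_{d^*}(w) = \portal_{d^*}(u)$, while for $d \neq d^*$ the portals $\portal_d(w)$ and $\portal_d(u)$ are distinct but adjacent in $\mathcal P_d$. Applying \Cref{lem:portal_graph:triangular} to the pairs $(s,w)$ and $(s,u)$ together with $\dist(s,u) - \dist(s,w) = 1$ yields
\[
    \sum_{d \in \{x,y,z\}} \bigl(\dist_d(s,u) - \dist_d(s,w)\bigr) = 2\bigl(\dist(s,u) - \dist(s,w)\bigr) = 2.
\]
The $d^*$-summand is $0$, and each of the remaining two summands lies in $\{-1,+1\}$ (because adjacent portals differ in portal-graph distance from $\portal_d(s)$ by exactly one); forcing the total to equal $2$ pins both of them to $+1$. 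Equivalently, $\portal_d(w)$ is the parent of $\portal_d(u)$ in $\mathcal P_d$ for both $d \neq d^*$, so $w$ is a feasible parent and is correspondingly flagged by the root and prune primitive.

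Hence $u$ can locally pick any neighbor that is flagged as a portal parent in exactly two axes, with $w$ serving as a witness that such a neighbor always exists. The main obstacle is the first step: verifying that the portals of every intermediate amoebot on a shortest path to $D$ survive the pruning, which hinges on the no-revisits property of portals along shortest paths established in the proof of \Cref{lem:portal_graph:triangular}.
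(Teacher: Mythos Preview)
Your proof is correct and follows essentially the same approach as the paper: both arguments hinge on the fact that for any amoebot $u$ on a shortest $s$--$D$ path, $\portal_d(u)$ survives the pruning in every axis $d$. The paper establishes this by contradiction (if pruned, the subtree contains no destination, so the shortest path could be shortcut), whereas you prove it directly via the no-revisits property; these are contrapositives of one another. Your second step, explicitly verifying that the shortest-path predecessor $w$ satisfies \Cref{eq:sssp:parent} and is therefore flagged by the primitive, makes rigorous what the paper leaves implicit in the sentence ``This can only happen if the root and prune primitive has pruned $\portal(u)$.''
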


\iftoggle{full}{
\begin{proof}
    Suppose the contrary, i.e., there is an amoebot $u$ on the shortest path from $s$ to $v \in D$ that is not able to choose a parent with respect to \Cref{eq:sssp:parent}.
    This can only happen if the root and prune primitive has pruned $\portal(u)$ for at least one of the (implicit) portal graphs.
    This implies that the subtree of $\portal(u)$ does not contain any portals with amoebots in $D$.
    But then, the shortest path cannot traverse any amoebot in $\portal(u)$ (including $u$).
    Otherwise, we could shorten the shortest path.
    This is a contradiction to the assumption.
\end{proof}
}{}

\Cref{lem:sssp:parent:shortest_path} guarantees that we obtain a shortest path from $s$ to each $u \in D$.
However, it does not prevent other amoebots to choose a parent with respect to $s$ (see \Cref{fig:spt}).
As a result, we might obtain a shortest path tree with subtrees without amoebots in $D$, and additional connected components (without amoebots in $D$).
In order to remove such subtrees and connected components, we simply apply the root and prune algorithm on the resulting graph with $s$ and $D$ as parameters.
The algorithm prunes the subtrees by definition.
Furthermore, during the execution, the connected components that do not contain $s$ do not receive any signals.
This allows us to also prune these connected components.

\begin{figure}[tbp]
    \begin{minipage}[t]{.27\linewidth}
        \centering
        \includegraphics{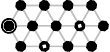}
        \subcaption{Initial structure.}
        \label{fig:spt:full}
    \end{minipage}
    \hfill
    \begin{minipage}[t]{.37\linewidth}
        \centering
        \includegraphics{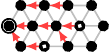}
        \subcaption{Parents w.r.t. the portal graphs.}
        \label{fig:spt:result}
    \end{minipage}
    \hfill
    \begin{minipage}[t]{.32\linewidth}
        \centering
        \includegraphics{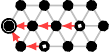}
        \subcaption{Parents after the final pruning.}
        \label{fig:spt:final_pruning}
    \end{minipage}
    \caption{
        Shortest path tree algorithm.
        The encircled amoebot indicates $s$.
        The amoebots marked by a white dot indicate $D$.
        The red arrows indicate the parents.
    }
    \label{fig:spt}
\end{figure}

\begin{theorem}
\label{th:spt}
    The shortest path tree algorithm computes an $(\{ s \},D)$-shortest path forest within $O(\log \ell)$ rounds.
\end{theorem}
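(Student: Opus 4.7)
The plan is to argue correctness of the three-axis rooting step, of the local parent selection, and of the final cleanup, and then to sum up the round costs.

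First, after applying the root-and-prune primitive in parallel on the three implicit portal graphs with root portal containing $s$ and target set $\{\portal_d(u) : u \in D\}$, every amoebot $u$ whose portal in each axis survives pruning knows, for every neighbor $v$ and every axis $d$, whether $\portal_d(v)$ is the parent, equal to, or a child of $\portal_d(u)$. As argued in the discussion leading to \Cref{eq:sssp:parent}, this reduces feasibility of $v$ as a parent of $u$ to the local condition that $\portal_d(v)$ is the parent of $\portal_d(u)$ in exactly two of the three axes, which each amoebot can test locally and then select one such $v$ deterministically (e.g., by smallest direction index).

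Second, I would verify that the chosen parent edges collectively carry a shortest $s$-to-$u$ path for every $u \in D$. By \Cref{lem:sssp:parent:shortest_path}, every amoebot on some shortest $s$-$D$ path is able to make a choice, so $D$-amoebots in particular all pick a parent. Since every chosen edge satisfies $\dist(s,u) = \dist(s,v) + 1$, chasing parents from any $u$ yields a strictly distance-decreasing walk; it is therefore acyclic and, if $u$ lies in the $s$-component of the parent-edge graph, terminates at $s$ along a shortest path. Spurious amoebots off every $s$-to-$D$ shortest path may nevertheless pick a parent, producing unwanted subtrees or even entire components disconnected from $s$.

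Third, the final root-and-prune on the resulting parent-edge forest, with root $s$ and target set $D$, removes exactly the unwanted parts. Subtrees containing no $D$-amoebot are pruned by the specification of the primitive, while any connected component not containing $s$ receives no signal during the underlying Euler-tour propagation, as explicitly argued just before the theorem statement, and is therefore discarded as well. What remains is one tree rooted at $s$ whose leaves lie in $D$ and whose unique $u$-to-$s$ path has length $\dist(s,u)$, i.e., an $(\{s\},D)$-shortest path forest.

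Finally, the runtime decomposes as follows: three parallel applications of the root-and-prune primitive on implicit portal graphs cost $O(\log \ell)$ rounds each by \Cref{lem:portal:rap}, the local parent selection is $O(1)$, and the closing root-and-prune invocation is again $O(\log \ell)$ rounds, giving $O(\log \ell)$ in total. The main obstacle I anticipate is the joint argument in the third step: namely, that the final pruning simultaneously handles both the spurious subtrees attached to the $s$-component and the components detached from $s$, since the primitive is originally stated for connected trees and one must lean on the fact that the Euler tour never reaches the detached components to conclude that they are cleaned up for free.
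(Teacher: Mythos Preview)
Your proposal is correct and follows essentially the same approach as the paper's proof: invoke \Cref{lem:sssp:parent:shortest_path} to guarantee that every amoebot on a shortest $s$--$D$ path selects a valid parent, then use a final root-and-prune on the parent-edge forest to discard spurious subtrees and detached components, and count four applications of the root-and-prune primitive at $O(\log\ell)$ rounds each. The paper's own proof is terser (two lines of citations), but your more detailed justification---including the handling of components not containing $s$ via the absence of signals---matches exactly what the paper states in the paragraph preceding the theorem.
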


\iftoggle{full}{
\begin{proof}
    By \Cref{lem:sssp:parent:shortest_path}, we obtain a shortest path forest that contains a shortest path tree rooted at $s$ that contains all amoebots in $D$.
    By \Cref{lem:portal:rap}, an additional execution of the root and prune primitive extracts that tree.
    Overall, the algorithm consists of $4$ executions of the root and prune algorithm.
    Each execution requires $O(\log \ell)$ rounds.
\end{proof}
}{}



\section{Shortest Path Forests with Multiple Sources}
\label{sec:spf}

In this section, we consider $(k,\ell)$-SPF for arbitrary $k$.
In the first subsection, we will start with the simple case where the amoebot structure forms a line for which we will present an $O(\log n)$ algorithm.


In the second subsection, we will show how to merge two shortest path forests into a single one within $O(\log n)$ rounds.
This leads immediately to the following naive solution.
Suppose that we have already computed an $S'$-shortest path forest for a subset $S' \subseteq S$ of the sources.
Now, compute an $\{s\}$-shortest path forest for an arbitrary source $s \in S \setminus S'$.
Then, merge both shortest path forests to an $S' \cup \{s\}$-shortest path forest.
This sequential approach requires $O(k \log n)$ rounds.


We can improve the runtime by applying a divide and conquer approach.
The idea of our shortest path forest algorithm is to split the amoebot structure at a portal, to recursively compute a shortest path forest for both sides, and to finally merge them.
The first three subsections contain elementary procedures used in our algorithm.
In the fourth subsection, we will elaborate on our divide and conquer approach.


\subsection{Line Algorithm}
\label{sec:line}

Suppose the amoebot structure $X$ forms a line.
The \emph{line algorithm} computes an $S$-shortest path forest as follows.
Observe that the closest source of each amoebot $u \in X \setminus S$ must be the next source in one of both directions.
Hence, it suffices if it computes its distance to those two sources.
For that, we apply the PASC algorithm (see \Cref{lem:pasc:chain}) from each source into both directions up to the next source, respectively (see \Cref{fig:spf:line}).

\begin{lemma}
\label{lem:spf:line}
    The line algorithm computes an $S$-shortest path forest for a line of amoebots within $O(\log n)$ rounds.
\end{lemma}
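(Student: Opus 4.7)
The plan is to prove correctness by a simple structural observation about shortest paths on a line, and then to extract the runtime from the PASC analysis.

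First I would argue that on a line, the closest source of any amoebot $u \in X \setminus S$ is one of at most two candidates: the nearest source to the left of $u$ and the nearest source to the right of $u$. This is because any path in a line between $u$ and some source $s$ must pass, in a monotone direction, through every amoebot lying between them; the length of such a path equals the position difference, so among sources in a given direction the nearest one minimizes the distance. Consequently, it suffices for each $u$ to learn its (unweighted) distance to these two candidate sources and then pick the smaller one, breaking ties by a fixed convention (say, preferring the left source).

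Next I would describe the PASC setup explicitly. Enumerate the sources along the line as $s_1, s_2, \dots, s_k$. For each consecutive pair $(s_i, s_{i+1})$, the segment of amoebots between them (inclusive) hosts two parallel PASC instances: one initiated by $s_i$ proceeding rightward and terminating at $s_{i+1}$, and one initiated by $s_{i+1}$ proceeding leftward and terminating at $s_i$. For the two unbounded end segments (the amoebots to the left of $s_1$ and to the right of $s_k$, if any), only a single PASC instance runs, initiated by $s_1$ or $s_k$ respectively. Each amoebot participates in at most two instances and thus only a constant number of partition sets are needed; the segments are edge-disjoint except at source endpoints, so no resource conflict arises and all instances execute simultaneously.

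By \Cref{lem:pasc:chain}, after the PASC executions each non-source amoebot $u$ knows the distances $\dist(u, s_{\mathrm{left}})$ and $\dist(u, s_{\mathrm{right}})$ (or a single distance at the ends) bit by bit, produced from most significant to least significant bit. It selects the source with the smaller distance (with the fixed tie-break) and designates as its parent the line neighbor lying in the direction of the chosen source; source amoebots have no parent. By the observation above, this yields a valid $S$-shortest path forest as defined in \Cref{sec:problem}, with each $V_{s_i}$ being a contiguous subchain around $s_i$.

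For the runtime, each PASC instance runs on a chain of at most $n$ amoebots and therefore, by \Cref{lem:pasc:chain}, terminates in $O(\log n)$ rounds. The per-bit comparison of the two received distances can be interleaved with the PASC execution (comparing from the most significant bit and beeping on a shared circuit to decide the winner), costing only $O(\log n)$ additional rounds overall. The main obstacle I anticipate is a clean argument that all segment-level PASC instances can coexist in parallel despite sharing source amoebots as chain endpoints; this is resolved by observing that a source plays only the role of terminal node in the neighboring instances and initiator in its own, which together require only $O(1)$ partition sets per source. The total runtime is therefore $O(\log n)$, as claimed.
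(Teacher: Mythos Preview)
Your proof is correct and follows essentially the same approach as the paper's (which is far terser): both rest on the observation that the closest source of any non-source amoebot on a line is one of the at most two nearest sources in either direction, then invoke parallel PASC instances on the segments between consecutive sources and appeal to \Cref{lem:pasc:chain} for the $O(\log n)$ bound.
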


\iftoggle{full}{
\begin{proof}
    The correctness follows from our observation.
    The algorithm requires $O(\log n)$ rounds since all $2k$ applications of the PASC algorithm are performed in parallel.
\end{proof}
}{}

\begin{figure}[tbp]
    \centering
    \includegraphics{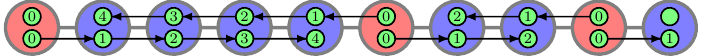}
    \iftoggle{full}{
    
    \caption{
        Line algorithm.
        The red amoebots indicate $S$.
        Note that the easternmost amoebot only receives a distance of one source.
    }
    
    }{
    
    \caption{
        Line algorithm.
        The red amoebots indicate $S$.
    }
    
    }
    \label{fig:spf:line}
\end{figure}


\subsection{Merging Algorithm}
\label{sec:merging}

Let an $S_1$-shortest path forest and an $S_2$-shortest path forest be given.
The \emph{merging algorithm} computes an $(S_1 \cup S_2)$-shortest path forest as follows.
It makes use of the following lemma.

\begin{lemma}
\label{lem:spf:merging}
    Let an $S_1$-shortest path forest and an $S_2$-shortest path forest be given.
    Let $u \in X \setminus (S_1 \cup S_2)$.
    Let $p_1$ denote the parent of $u$ in the $S_1$-shortest path forest, and $p_2$ the parent of $u$ in the $S_2$-shortest path forest.
    Then, $p_1$ is a feasible parent of $u$ in an $(S_1 \cup S_2)$-shortest path forest if $\dist(S_1, u) \leq \dist(S_2, u)$, and $p_2$ is a feasible parent of $u$ in an $(S_1 \cup S_2)$-shortest path forest if $\dist(S_2, u) \leq \dist(S_1, u)$.
\end{lemma}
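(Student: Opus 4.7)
The plan is to show that the $S_1 \cup S_2$-distance is just the pointwise minimum of the $S_1$- and $S_2$-distances, and then to exploit this at both $u$ and $p_1$ together with the triangle inequality. Formally, $\dist(S_1 \cup S_2, u) = \min\{\dist(S_1,u), \dist(S_2,u)\}$ directly from the definition, and I would assume without loss of generality that $\dist(S_1,u) \leq \dist(S_2,u)$, so that $\dist(S_1 \cup S_2, u) = \dist(S_1,u)$.

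Next I would use the defining property of the $S_1$-shortest path forest: since $p_1$ is the parent of $u$ there, $p_1$ lies on a shortest path from $u$ to its closest $S_1$-source, so $\dist(S_1,p_1) = \dist(S_1,u) - 1$. I would then bound the $S_2$-distance of $p_1$ from below by the triangle inequality. Because $p_1$ and $u$ are neighbors in $G_X$, any shortest path from $S_2$ to $p_1$ extends to $u$ by one edge, giving $\dist(S_2,p_1) \geq \dist(S_2,u) - 1 \geq \dist(S_1,u) - 1 = \dist(S_1,p_1)$. Therefore the closest source of $p_1$ in $S_1 \cup S_2$ also lies in $S_1$, and $\dist(S_1 \cup S_2, p_1) = \dist(S_1, p_1) = \dist(S_1 \cup S_2, u) - 1$.

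This identity, combined with the fact that $p_1$ is a neighbor of $u$ in $G_X$, is exactly the condition needed for $p_1$ to be a feasible parent of $u$ in an $(S_1 \cup S_2)$-shortest path forest: it witnesses a neighbor of $u$ whose distance to the nearest source in $S_1 \cup S_2$ is exactly one less than that of $u$, and the closest source in $S_1 \cup S_2$ to $u$ (some $s_1 \in S_1$) reaches $u$ via $p_1$ on a shortest path. The symmetric case $\dist(S_2,u) \leq \dist(S_1,u)$ follows by swapping the roles of $S_1$ and $S_2$.

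I do not anticipate a real obstacle here; the only subtle step is recognizing that the triangle inequality forces the closest $S_1 \cup S_2$-source of $p_1$ to come from the same side ($S_1$) as the closest source of $u$, which is what guarantees that the edge $(u,p_1)$ inherited from the $S_1$-forest is still consistent with a shortest path in the merged forest.
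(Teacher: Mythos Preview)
Your proposal is correct and follows essentially the same argument as the paper: both use $\dist(S_1\cup S_2,\cdot)=\min\{\dist(S_1,\cdot),\dist(S_2,\cdot)\}$, the parent relation $\dist(S_1,p_1)=\dist(S_1,u)-1$, and the neighbor inequality $\dist(S_2,p_1)\geq \dist(S_2,u)-1$ to conclude $\dist(S_1\cup S_2,p_1)+1=\dist(S_1\cup S_2,u)$. The only cosmetic difference is that the paper writes the neighbor bound as $\dist(S_2,v)+1\geq \dist(S_2,u)$ for all $v\in \neighborhood(u)$ rather than invoking the triangle inequality explicitly.
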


\iftoggle{full}{
\begin{proof}
    Note that $\dist(S_1 \cup S_2, v) = \min \{ \dist(S_1, v), \dist(S_2, v) \}$ for all $v \in X$.
    Further, note that by definition, $\dist(S_1, p_1) + 1 = \dist(S_1, u)$ and $\dist(S_2, p_2) + 1 = \dist(S_2, u)$.
    Suppose that $\dist(S_1, u) \leq \dist(S_2, u)$.
    This also implies $\dist(S_1, p_1) \leq \dist(S_2, p_1)$ since $\dist(S_2, v) + 1 \geq \dist(S_2, u)$ for all $v \in \neighborhood(u)$.
    Altogether, we obtain
    \begin{align*}
        \dist(S_1 \cup S_2, u)
        &= \min \{ \dist(S_1, u), \dist(S_2, u) \} \\
        &= \dist(S_1, u) \\
        &= \dist(S_1, p_1) + 1 \\
        &= \min \{ \dist(S_1, p_1), \dist(S_2, p_1) \} + 1 \\
        &= \dist(S_1 \cup S_2, p_1) + 1.
    \end{align*}
    The equations imply that $p_1$ is a feasible parent of $u$ in an $(S_1 \cup S_2)$-shortest path forest.
    The second case where $\dist(S_2, u) \leq \dist(S_1, u)$ can be proven analogously.
\end{proof}
}{}

We apply \Cref{lem:spf:merging} to compute an $(S_1 \cup S_2)$-shortest path forest as follows.
In order to compute $\dist(S_1, v)$ and $\dist(S_2, v)$ for each $v \in V$, we apply the PASC algorithm on the $S_1$- and $S_2$-shortest path forest, respectively (see \Cref{cor:pasc:tree}).
This allows each amoebot $u \in X \setminus (S_1 \cup S_2)$ to compare $\dist(S_1, u)$ and $\dist(S_2, u)$ to determine a feasible parent.

\begin{lemma}
\label{lem:spf:merging:algorithm}
    Let an $S_1$-shortest path forest and an $S_2$-shortest path forest be given.
    The merging algorithm computes an $(S_1 \cup S_2)$-shortest path forest within $O(\log n)$ rounds.
\end{lemma}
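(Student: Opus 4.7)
The plan is to verify Lemma~\ref{lem:spf:merging:algorithm} by showing that the two parallel PASC passes really compute $\dist(S_1,\cdot)$ and $\dist(S_2,\cdot)$ on the given forests, and then invoke Lemma~\ref{lem:spf:merging} pointwise to confirm that the merged parent pointers define a valid $(S_1\cup S_2)$-shortest path forest.

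First I would apply \Cref{cor:pasc:tree} in parallel to every tree of the $S_1$-shortest path forest. Each such tree is rooted at the $S_1$-source that is closest to all of its amoebots, and by definition the unique root-to-$v$ path in the tree is a shortest path in $G_X$; therefore the depth computed by PASC at amoebot $v$ equals $\dist(S_1,v)$, delivered bit by bit. The same argument applied to the $S_2$-shortest path forest gives $\dist(S_2,v)$. The two passes can run simultaneously because they use the parent edges of two disjoint tree structures, and each amoebot needs only two external links per tree edge.

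Second, every amoebot $u \in X\setminus(S_1\cup S_2)$ reads both bit streams, compares $\dist(S_1,u)$ with $\dist(S_2,u)$, and keeps its $S_1$-parent $p_1$ if $\dist(S_1,u) \le \dist(S_2,u)$ and its $S_2$-parent $p_2$ otherwise; every amoebot in $S_1\cup S_2$ declares itself a root. By \Cref{lem:spf:merging} each chosen neighbor is a feasible parent, and since every amoebot ends up in exactly one tree rooted at an element of $S_1\cup S_2$ reachable via a shortest path to the closest source, the resulting structure satisfies all five properties of an $(S_1\cup S_2)$-shortest path forest.

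For the runtime, the height of every tree is at most the diameter of $G_X$, hence at most $n-1$, so \Cref{cor:pasc:tree} bounds each PASC pass by $O(\log n)$ rounds. Comparing two $O(\log n)$-bit numbers either interleaves with PASC at constant overhead per iteration or is done in a single $O(\log n)$-round postprocessing scan, so the total is $O(\log n)$ rounds. The main obstacle I anticipate is purely a bookkeeping one: fixing the tie-breaking rule so that each amoebot selects exactly one parent, and verifying that the two simultaneous PASC passes fit within the constant pin budget; both issues are handled by the observation above that the two forests' parent edges are disjoint and that ties can be resolved deterministically in favor of $S_1$.
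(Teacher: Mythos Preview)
Your proposal is correct and follows essentially the same approach as the paper: apply \Cref{cor:pasc:tree} on each of the two given forests to obtain $\dist(S_1,\cdot)$ and $\dist(S_2,\cdot)$ bit by bit, then invoke \Cref{lem:spf:merging} at every amoebot to pick a parent, with the $O(\log n)$ bound coming directly from the PASC runtime. One small inaccuracy: the two forests' parent edges are \emph{not} disjoint in general (an amoebot may have the same parent in both), but this does not matter for the pin budget since each amoebot has degree at most~$6$ and each PASC instance uses only a constant number of links per edge.
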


\iftoggle{full}{
\begin{proof}
    The correctness follows from \Cref{lem:spf:merging}.
    The algorithm requires $O(\log n)$ rounds since we only apply the PASC algorithm.
\end{proof}
}{}


\subsection{Propagation Algorithm}
\label{sec:propagation}

Consider an arbitrary portal $P$ (see \Cref{fig:spf:propagation}).
W.l.o.g., we assume that $P$ is an $x$-portal.
The portal divides the amoebot structure into two sides $A$ and $B$.
Note that each side might be empty.
Let $S \subseteq A \cup P$.
Suppose that we have a $S$-shortest path forest for $A \cup P$.
The \emph{propagation algorithm} propagates the $S$-shortest path forest into $B$, i.e, it computes a $S$-shortest path forest for $A \cup P \cup B$, as follows.

\begin{figure}[tbp]
    \begin{minipage}[t]{.48\linewidth}
        \centering
        \includegraphics{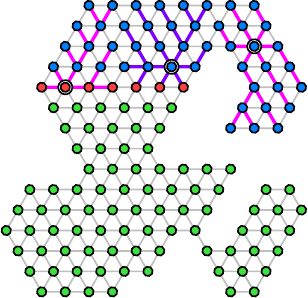}
    \end{minipage}
    \hfill
    \begin{minipage}[t]{.48\linewidth}
        \centering
        \includegraphics{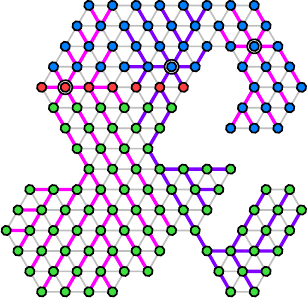}
    \end{minipage}
    \caption{
        Propagation algorithm.
        The left figure shows the initial situation, and the right figure the result.
        The red amoebots indicate $P$.
        The blue and green amoebots indicate $A$ and $B$, respectively.
        The encircled amoebots indicate $S$.
        The pink and purple edges indicate the shortest path trees.
    }
    \label{fig:spf:propagation}
\end{figure}

We say that an amoebot $v \in X$ is \emph{visible} by $u \in X$ iff $v \in \portal_x(u) \cup \portal_y(u) \cup \portal_z(u)$.
Let $\vis(P) = \bigcup_{u \in P}(\portal_x(u) \cup \portal_y(u) \cup \portal_z(u))$ be the \emph{visibility region} of $P$ (see \Cref{fig:spf:propagation:vis}).
The algorithm consists of two phases.
In the first phase, we propagate the shortest path forest into $B' = B \cap \vis(P)$.
In the second phase, we propagate the shortest path forest into $B'' = B \setminus \vis(P)$.
In the following, we will explain both phases.

\begin{figure}[tbp]
    \begin{minipage}[t]{.48\linewidth}
        \centering
        \includegraphics{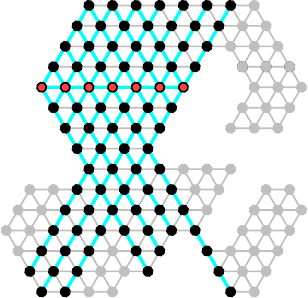}
    \end{minipage}
    \hfill
    \begin{minipage}[t]{.48\linewidth}
        \centering
        \includegraphics{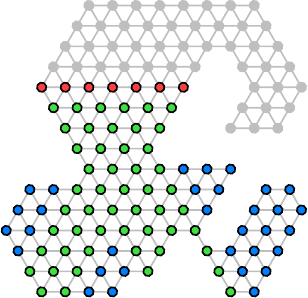}
    \end{minipage}
    \caption{
        Visibility region.
        The red amoebots indicate $P$.
        The cyan edges indicate the portals of each $u \in P$.
        The visibility region $\vis(P)$ contains the red and black amoebots.
        The green amoebots indicate $B'$.
        The blue amoebots indicate $B''$.
    }
    \label{fig:spf:propagation:vis}
\end{figure}


\iftoggle{full}{


Consider the \textbf{first phase}, i.e., the propagation into $B' = B \cap \vis(P)$.
Let $\projy{u}$, $\projz{u} \in V_\Delta$ be the projections of $u \in B'$ along the $y$- and $z$-axis onto the axis of $P$, respectively (see \Cref{fig:spf:triangle}).
Let $\Delta_u$ be the set of nodes in $V_\Delta$ within the triangle defined by $u$, $\projy{u}$, and $\projz{u}$.
Let $P_u = \Delta_u \cap P$.
Further, let $P^W_u$ denote the amoebots in $P$ to the west of $\projz{u}$, and $P^E_u$ denote the amoebots in $P$ to the east of $\projy{u}$.
Note that these might be empty.

\begin{figure}[tbp]
    \begin{minipage}[t]{.49\linewidth}
        \centering
        \includegraphics{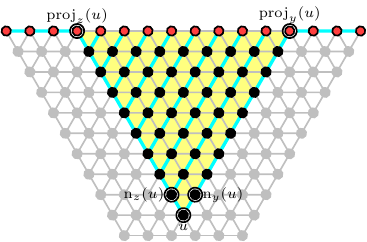}
        \subcaption{Case $\Delta_u \subseteq R$.}
        \label{fig:spf:triangle:case_subset}
    \end{minipage}
    \hfill
    \begin{minipage}[t]{.49\linewidth}
        \centering
        \includegraphics{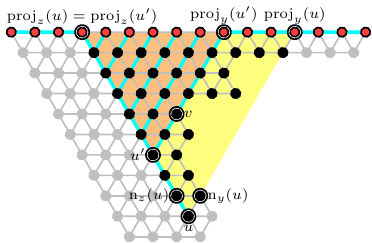}
        \subcaption{Case $\Delta_u \not\subseteq R$.}
        \label{fig:spf:triangle:case_not_subset}
    \end{minipage}
    \caption{
        $(\{u\},P)$-shortest path forest for $P \cup \Delta_u$.
        The labels refer to the encircled amoebots.
    }
    \label{fig:spf:triangle}
\end{figure}

\begin{lemma}
\label{lem:spf:triangle}
    For each $u \in B'$ and $s \in S$, there is a shortest path from $s$ to $u$ within $A \cup P \cup \Delta_u$.
\end{lemma}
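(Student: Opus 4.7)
The plan is to take any shortest path $\pi$ from $s$ to $u$ in $G_X$ and reroute a suitable suffix so that the resulting path is equally short and lies entirely in $A \cup P \cup \Delta_u$. Since $s \in A \cup P$ and $u \in B$, Lemma~\ref{obs:portal} applied to the portal $P$ forces $\pi$ to traverse $P$; I would let $q$ be the last amoebot of $\pi$ lying in $P$, so that the prefix of $\pi$ up to $q$ is automatically inside $A \cup P$ and only the suffix $\pi_2$ from $q$ to $u$, whose internal amoebots all lie in $B$, has to be rebuilt.

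Next I would construct a substitute $(q,u)$-path $\pi_2'$ contained in $P \cup \Delta_u$ by a case split on $q$'s horizontal position along $P$. If $q \in P_u$, pick $\pi_2'$ to be a monotone staircase of $y$- and $z$-edges ascending through $\Delta_u$ from $q$ to $u$, of length $h$. If $q \in P^W_u$, concatenate the sub-chain of $P$ from $q$ to $\projz{u}$ with the $z$-portal of $u$ from $\projz{u}$ up to $u$; symmetrically, if $q \in P^E_u$, concatenate via $\projy{u}$. In each case, computing the portal distances $\dist_x(q,u)$, $\dist_y(q,u)$, $\dist_z(q,u)$ and applying Lemma~\ref{lem:portal_graph:triangular} yields a lower bound on $\dist(q,u)$ that exactly matches $|\pi_2'|$, so replacing $\pi_2$ by $\pi_2'$ does not lengthen the path and produces the desired shortest $s$-to-$u$ path in $A \cup P \cup \Delta_u$.

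The main obstacle is showing that every amoebot of $\pi_2'$ actually belongs to $X$. The hop along $P$ is automatic because $P$ is a connected $x$-portal in $X$. For the ascending portion, $u \in B' = B \cap \vis(P)$ forces at least one of $\projy{u}, \projz{u}$ to lie in $P$ and to be connected to $u$ by its $y$- respectively $z$-portal inside $X$; this settles the cases $q \in P^W_u$ and $q \in P^E_u$ provided we route along whichever side is guaranteed to be in $X$. The delicate case is $q \in P_u$ with $\Delta_u \not\subseteq X$: here I would use hole-freeness of $X$ to argue that the amoebots of $\Delta_u \setminus X$ can only attach to a single side of the triangle, so that at least one length-$h$ monotone staircase from $q$ to $u$ through $\Delta_u \cap X$ always survives. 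This is exactly the dichotomy suggested by the two sub-figures of Figure~\ref{fig:spf:triangle} and forms the geometric core of the proof.
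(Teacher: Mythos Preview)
Your overall plan---take a shortest $s$--$u$ path, let $q$ be its last amoebot on $P$, and replace the suffix by a $(q,u)$-path inside $P\cup\Delta_u$ of length exactly $\dist(q,u)$---is close to the paper's argument, which instead shows the slightly stronger fact that for \emph{every} $p\in P$ there is a shortest $u$--$p$ path in $P\cup\Delta_u$ (and a shortest $s$--$p$ path in $A\cup P$); that stronger form is what yields Corollaries~\ref{cor:spf:triangle:portal} and~\ref{cor:spf:triangle:neighbors}. One minor point: the prefix up to $q$ is not ``automatically'' in $A\cup P$ just because $q$ is the last visit to $P$; you need the observation from the proof of Lemma~\ref{obs:portal} that a shortest path never revisits a portal, or simply replace the prefix by any shortest $s$--$q$ path, which does lie in $A\cup P$.

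The real gap is your treatment of $\Delta_u\not\subseteq X$. Your fallback ``route along whichever side is guaranteed to be in $X$'' does not match $\dist(q,u)$ when $q$ lies on the opposite side: if $q\in P^W_u$ but only $\projy{u}$ is visible, the walk $q\to\projy{u}\to u$ has length $(d_W+h)+h$, while the portal-distance lower bound from Lemma~\ref{lem:portal_graph:triangular} can be strictly smaller, so your rerouted path is no longer shortest. Likewise, for $q\in P_u$ with $\Delta_u\not\subseteq X$, hole-freeness does imply the missing region touches only one diagonal side, but it does \emph{not} imply that a length-$h$ monotone staircase from $q$ to $u$ survives---if $q$ sits on the blocked side, every such staircase is cut, and in fact $\dist(q,u)>h$ there. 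The paper fixes both cases with one idea you are missing: slide from $u$ along the visible diagonal to the nearest point $u'$ with $\Delta_{u'}\subseteq X$; for $p$ on the ``far'' side of $u'$ it identifies a boundary amoebot $v$ on the other diagonal of $u'$ that, by Lemma~\ref{obs:portal}, every $u$--$p$ path must traverse, so the two parallelogram legs $u\to v$ and $v\to p$ concatenate to a genuine shortest path inside $P\cup\Delta_u$.
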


\begin{proof}
    Note that there must be a shortest path from $s$ to $u$ since the amoebot structure is connected.
    By \Cref{obs:portal}, each shortest path has to traverse portal $P$.
    Hence, it suffices to show that for each $p \in P$, there is a shortest path from $s$ and $u$ to $p$ within $A \cup P \cup \Delta_u$, respectively.
    
    First, consider a shortest path from $s$ to $p$.
    Due to \Cref{obs:portal}, it cannot traverse any $x$-portal south of $P$.
    Therefore, each shortest path from $s$ to $p$ must be within $A \cup P$.

    Next, we construct a shortest path from $u$ to $p$ as follows.
    Observe that a path between two amoebots $v,w$ is a shortest path if it follows the boundary of the parallelogram spanned by $v$ and $w$.
    For $\Delta_u \subseteq X$, consider \Cref{fig:spf:triangle:case_subset}.
    The path from $u$ to each $p \in P$ satisfies our observation.
    %
    For $\Delta_u \not\subseteq X$, consider \Cref{fig:spf:triangle:case_not_subset}.
    W.l.o.g, let $u$ be visible from $\projz{u}$, i.e., $u \in \portal_z(\projz{u})$.
    Let $u'$ be the amoebot on $\portal_z(u)$ closest to $u$ such that $\Delta_{u'} \subseteq X$.
    The path from $u$ to each $p \in P^W_{u'} \cup P_{u'}$ satisfies our observation.
    %
    Let $p \in P^E_{u'}$.
    Due to \Cref{obs:portal}, a shortest path from $u$ to $p$ cannot traverse any $y$-portal west of $\portal_y(u')$.
    Let $v$ be a boundary amoebot on $\portal_y(u')$ between $u'$ and $\projy{u'}$.
    By \Cref{obs:portal}, each shortest path has to traverse $v$.
    Note that $v$ exists since otherwise, there would be an amoebot $u''$ on $\portal_z(u)$ closer to $u$ than $u'$ such that $\Delta_{u''} \subseteq X$.
    Both, the subpath from $u$ to $v$ and from $v$ to $p$ satisfy our observation.
    Overall, for each $p \in P$, there is a shortest path from $u$ to $p$ within $P \cup \Delta_u$.
\end{proof}

\begin{corollary}
\label{cor:spf:triangle:portal}
    For each $u \in B'$ and $s \in S$, there is a shortest path from $s$ to $u$ that traverses $P_u$.
\end{corollary}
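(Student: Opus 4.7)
The plan is to take the shortest path $\pi$ from $s$ to $u$ guaranteed by Lemma~\ref{lem:spf:triangle}, which lies entirely in $A \cup P \cup \Delta_u$, and show that $\pi$ must cross $P$ at a node of $P_u$. Since $u \in B'$ lies strictly south of the $x$-portal $P$, we have $u \in \Delta_u \setminus P$, so $\pi$ enters $\Delta_u \setminus P$ at some point; let $p'$ be the first such node on $\pi$ and $p$ its predecessor on $\pi$ (which exists because $s \in A \cup P$ is disjoint from $\Delta_u \setminus P$).

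By the choice of $p'$, we have $p \in A \cup P$. Nodes in $A$ are strictly north of the $x$-portal $P$, while $p' \in \Delta_u \setminus P$ is strictly south of $P$, and no two such nodes are adjacent in $\Geqt$, so $p \in P$. To argue $p \in P_u$, observe that $p \in P$ adjacent to a node strictly south of $P$ must be either the NW or NE neighbor of $p'$ in $\Geqt$, forcing $p'$ to be at distance exactly $1$ from $P$. Now, $\Delta_u$ is the triangle with apex $u$ and base $P_u$, so its row at distance $1$ from $P$ is shorter by one position on each side than $P_u$; in particular, both the NW and NE neighbors in $\Geqt$ of any node in that row lie within $\Delta_u \cap P = P_u$. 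Hence $p \in P_u$, and $\pi$ traverses $P_u$.

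The main obstacle is the geometric verification that the rows of $\Delta_u$ widen monotonically by one position on each side as one moves row by row from $u$ toward $P$. This depends on the orientation of $\Delta_u$, determined by whether $u$ is visible from $\projz{u}$ along the $z$-axis or from $\projy{u}$ along the $y$-axis, but both cases follow directly and symmetrically from the definition of $\Delta_u$ as the triangle spanned by $u$, $\projy{u}$, and $\projz{u}$.
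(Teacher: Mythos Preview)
Your argument is correct and essentially the natural way to make this corollary rigorous. The paper itself provides no separate proof, treating the statement as an immediate consequence of Lemma~\ref{lem:spf:triangle}: since the shortest path lies in $A \cup P \cup \Delta_u$ and the only nodes of this set strictly south of $P$'s line are those in $\Delta_u \setminus P$, the transition from the north side into $\Delta_u \setminus P$ must occur through $P_u = \Delta_u \cap P$. Your proof spells this out carefully.

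Two minor remarks. First, the phrase ``shorter by one position on each side'' is slightly imprecise: the row of $\Delta_u$ at distance~$1$ from $P$ has exactly one fewer node than $P_u$, offset by half a position, so that each node there has both its NW and NE neighbours in $P_u$; your stated conclusion is correct even if the intermediate description is a bit loose. Second, your final paragraph about the ``main obstacle'' and the case distinction on visibility from $\projy{u}$ versus $\projz{u}$ is unnecessary here: the shape of $\Delta_u$ as the triangle with apex $u$ and base $[\projz{u},\projy{u}]$ is fixed regardless of which projection is visible, and the row-shrinking property you use depends only on that shape, not on which side of the triangle lies in $X$. That case distinction matters for Lemma~\ref{lem:spf:propagation:case_not_subset}, not for this corollary.
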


Let $\neighbory{u},\neighborz{u} \in V_\Delta$ denote $u$'s neighbor into $\projy{u}$'s and $\projz{u}$'s direction, respectively (see \Cref{fig:spf:triangle}).
Since $u \in \vis(P)$, at least one of $\neighbory{u}$ and $\neighborz{u}$ has to be in $X$.

\begin{corollary}
\label{cor:spf:triangle:neighbors}
    For each $u \in B'$ and $s \in S$, there is a shortest path from $s$ to $u$ that traverses either $\neighbory{u}$ or $\neighborz{u}$, i.e., either $\neighbory{u}$ or $\neighborz{u}$ is a feasible parent of $u$.
\end{corollary}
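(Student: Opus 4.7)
The plan is to invoke \Cref{cor:spf:triangle:portal} together with \Cref{lem:spf:triangle}: for each $s \in S$ they guarantee a shortest path $\pi$ from $s$ to $u$ that traverses $P_u$ and lies entirely within $A \cup P \cup \Delta_u$. Let $v \in \neighborhood(u)$ denote the predecessor of $u$ on $\pi$. It suffices to show that $v \in \{\neighbory{u},\neighborz{u}\}$, since the prefix of $\pi$ up to $v$ then witnesses $\dist(s,v) = \dist(s,u) - 1$, so that the corresponding amoebot appears on a shortest $s$-to-$u$ path and is therefore a feasible parent of $u$ (picking $s$ to be a closest source in $S$ then yields the ``i.e.''\ conclusion).

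To pin down $v$ I would do a short case analysis on where it can lie. First, $v$ cannot lie in $A$: since $P$ is an $x$-portal and the structure is hole-free, $P$ separates $A$ from $B$ in $\Geqt$, so no grid edge connects $u \in B$ directly to any node in $A$. Hence $v \in P \cup \Delta_u$. The key geometric observation is that the only neighbors of $u$ inside this set are $\neighbory{u}$ and $\neighborz{u}$: in the closed triangle $\Delta_u$ the apex $u$ is incident only to the two grid points lying one step along the $y$- and $z$-axes towards $\projy{u}$ and $\projz{u}$, and any neighbor of $u$ in the $x$-portal $P$ is likewise one of these two, because the only neighbors of $u$ that lie on an $x$-portal to the south of $u$ are $\neighbory{u}$ and $\neighborz{u}$.

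Combining both parts forces $v \in \{\neighbory{u},\neighborz{u}\}$, which closes the argument. The main obstacle is really just this geometric enumeration of $u$'s neighbors inside $P \cup \Delta_u$, but since $u$ has only six neighbors in $\Geqt$ this reduces to a routine case check.
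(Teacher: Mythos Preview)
Your argument is correct and is precisely the derivation the paper leaves implicit: the statement is presented as an unproven corollary of \Cref{lem:spf:triangle} (and \Cref{cor:spf:triangle:portal}), and your proof spells out exactly that implication by taking the shortest path inside $A \cup P \cup \Delta_u$ and observing that the only neighbors of the apex $u$ lying in $P \cup \Delta_u$ are $\neighbory{u}$ and $\neighborz{u}$. The invocation of \Cref{cor:spf:triangle:portal} is not strictly needed (containment in $A \cup P \cup \Delta_u$ from \Cref{lem:spf:triangle} already suffices), but it does no harm.
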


\begin{lemma}
\label{lem:spf:propagation:case_subset}
    Let $\Delta_u \subseteq X$.
    Then, $\neighborz{u}$ is a feasible parent of $u$ if $\dist(S,\projz{u}) \leq \dist(S,\projy{u})$, and $\neighbory{u}$ is a feasible parent of $u$ if $\dist(S,\projz{u}) \geq \dist(S,\projy{u})$.
\end{lemma}

\begin{proof}
    Note that $u$ is visible by both $\projz{u}$ and $\projy{u}$ since $\Delta_u \subseteq X$.
    By \Cref{cor:spf:triangle:neighbors}, $\dist(S,u) = \min \{ \dist(S,\neighborz{u}) + 1, \dist(S,\neighbory{u}) + 1 \}$.
    By \Cref{cor:spf:triangle:portal}, $\dist(S,v) = \min_{w \in P_v} \{ \dist(S,w) + \dist(w,v) \}$ for all $v \in \Delta_u$.
    Note that $\dist(w,v)$ is equal for each $v \in \{ \neighborz{u}, \neighbory{u} \}$ and $w \in P_u$.
    If $\min_{w \in P_u} \{ \dist(S,w) \} = \dist(S,\projz{u}) \leq \dist(S,\projy{u})$, then $\dist(S,u) = \dist(S,\neighborz{u}) + 1 \leq \dist(S,\neighbory{u}) + 1$ holds.
    This implies that $\neighborz{u}$ is a feasible parent of $u$.
    If $\min_{w \in P_u} \{ \dist(S,w) \} = \dist(S,\projy{u}) \leq \dist(S,\projz{u})$, then $\dist(S,u) = \dist(S,\neighbory{u}) + 1 \leq \dist(S,\neighbory{u}) + 1$ holds.
    This implies that $\neighbory{u}$ is a feasible parent of $u$.
    Otherwise, $\argmin_{w \in P_u} \{ \dist(S,w) \} \in P_{\neighborz{u}} \cap P_{\neighbory{u}}$ such that $\dist(S,u) = \dist(S,\neighborz{u}) + 1 = \dist(S,\neighbory{u}) + 1$ holds.
    This implies that both $\neighborz{u}$ and $\neighbory{u}$ are feasible parents of $u$.
    In this case, the statement holds trivially.
\end{proof}

\begin{lemma}
\label{lem:spf:propagation:case_not_subset}
    Let $\Delta_u \not\subseteq X$.
    Then, $\neighborz{u}$ is a feasible parent of $u$ if $u$ is visible by $\projz{u}$, and $\neighbory{u}$ is a feasible parent of $u$ if $u$ is visible by $\projy{u}$.
\end{lemma}

\begin{proof}
    Note that $u$ is visible by exactly one of $\projz{u}$ and $\projy{u}$ since $\Delta_u \not\subseteq X$.
    W.l.o.g., let $u$ be visible by $\projz{u}$.
    Suppose the contrary.
    Due to \Cref{cor:spf:triangle:neighbors}, there must be a shortest path through $\neighbory{u}$ (see \Cref{fig:spf:triangle:shortest_path:ny}).
    Consider the path from $\neighborz{u}$ parallel to the axis through $u$ and $\neighbory{u}$ to the first amoebot $v$ on the shortest path.
    This amoebot must exist since $u$ is not visible by $\projy{u}$.
    We exchange the path from $v$ to $u$ through $\neighbory{u}$ with the one through $\neighborz{u}$ without increasing the length of the shortest path (see \Cref{fig:spf:triangle:shortest_path:nz}).
    This is a contradiction to the assumption.
\end{proof}

\begin{figure}[tbp]
    \begin{minipage}[t]{.49\linewidth}
        \centering
        \includegraphics{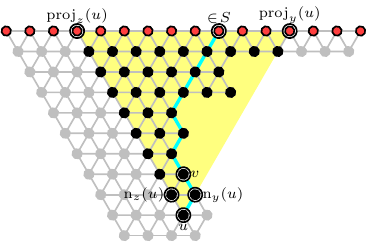}
        \subcaption{Shortest path through $\neighbory{u}$.}
        \label{fig:spf:triangle:shortest_path:ny}
    \end{minipage}
    \hfill
    \begin{minipage}[t]{.49\linewidth}
        \centering
        \includegraphics{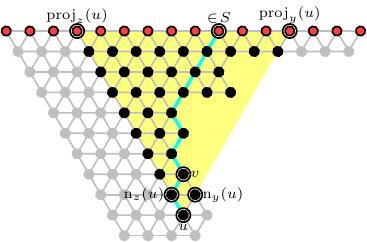}
        \subcaption{Shortest path through $\neighborz{u}$.}
        \label{fig:spf:triangle:shortest_path:nz}
    \end{minipage}
    \caption{
        Rerouting a shortest path.
        The labels refer to the encircled amoebots.
    }
    \label{fig:spf:triangle:shortest_path}
\end{figure}

}{


Consider the \textbf{first phase}, i.e., the propagation into $B' = B \cap \vis(P)$.
Let $\projy{u},\projz{u} \in V_\Delta$ be the projections of $u \in B'$ along the $y$- and $z$-axis onto the axis of $P$, respectively (see \Cref{fig:spf:triangle}).
Let $\Delta_u$ be the set of nodes in $V_\Delta$ within the triangle defined by $u$, $\projy{u}$, and $\projz{u}$.
Let $\neighbory{u},\neighborz{u} \in V_\Delta$ denote $u$'s neighbor into $\projy{u}$'s and $\projz{u}$'s direction, respectively.
Since $u \in \vis(P)$, at least one of $\neighbory{u}$ and $\neighborz{u}$ has to be in $X$.

\begin{figure}[tbp]
    \begin{minipage}[t]{.49\linewidth}
        \centering
        \includegraphics{fig/fig_triangle_01.pdf}
        \subcaption{Case $\Delta_u \subseteq R$.}
        \label{fig:spf:triangle:case_subset}
    \end{minipage}
    \hfill
    \begin{minipage}[t]{.49\linewidth}
        \centering
        \includegraphics{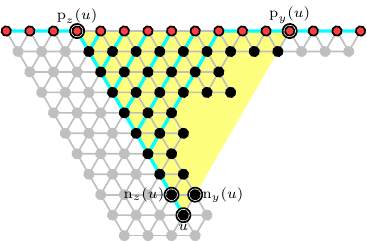}
        \subcaption{Case $\Delta_u \not\subseteq R$.}
        \label{fig:spf:triangle:case_not_subset}
    \end{minipage}
    \caption{
        $(\{u\},P)$-shortest path forest for $P \cup \Delta_u$.
        The labels refer to the encircled amoebots.
    }
    \label{fig:spf:triangle}
\end{figure}

\begin{lemma}
\label{lem:spf:triangle}
    For each $u \in B'$ and $s \in S$, there is a shortest path from $s$ to $u$ within $A \cup P \cup \Delta_u$ (see \Cref{fig:spf:triangle}).
\end{lemma}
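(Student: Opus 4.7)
The plan is to reduce the lemma to two separate geometric claims by exploiting \Cref{obs:portal}: since $s \in A \cup P$ and $u \in B$ lie in different components of $V_\Delta \setminus P$ (when $u \notin P$; the case $u \in P$ is trivial), every shortest $s$-$u$ path must pass through some $p \in P$. Hence it suffices to prove, for every $p \in P$, that (i) there is a shortest $s$-$p$ path contained in $A \cup P$, and (ii) there is a shortest $p$-$u$ path contained in $P \cup \Delta_u$. Concatenating any such pair yields a shortest $s$-$u$ path in $A \cup P \cup \Delta_u$.

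For (i) I would again invoke \Cref{obs:portal}: because the structure is hole-free, $P$ is the unique $x$-portal separating $s$ from any $x$-portal south of $P$, so a shortest path from $s \in A \cup P$ to $p \in P$ that dipped into $B$ would have to re-enter $A \cup P$ via $P$, contradicting minimality. Thus it stays in $A \cup P$.

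For (ii), which is the geometric heart and the main obstacle, I would argue by cases on whether $\Delta_u \subseteq X$. In the easy case $\Delta_u \subseteq X$ (\Cref{fig:spf:triangle:case_subset}), the path that first travels along $\portal_y(u)$ or $\portal_z(u)$ to some amoebot of $P_u$ and then along $P$ to $p$ follows the boundary of the parallelogram spanned by $u$ and $p$, hence is a shortest path by the standard observation that monotone parallelogram-boundary paths are shortest in the triangular grid. In the hard case $\Delta_u \not\subseteq X$ (\Cref{fig:spf:triangle:case_not_subset}), I would use visibility: since $u \in \vis(P)$, w.l.o.g.\ $u \in \portal_z(\projz{u})$, so walk along $\portal_z(u)$ until the first amoebot $u'$ with $\Delta_{u'} \subseteq X$; this $u'$ exists because $\projz{u} \in P$ has $\Delta_{\projz{u}} = \{\projz{u}\} \subseteq X$. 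For $p \in P^W_{u'} \cup P_{u'}$ the parallelogram-boundary path through $u'$ works as before. For $p \in P^E_{u'}$, \Cref{obs:portal} prevents crossing any $y$-portal west of $\portal_y(u')$, so the shortest path must pass through the boundary amoebot $v$ on $\portal_y(u')$ between $u'$ and $\projy{u'}$ where the structure pinches; the $u$-to-$v$ subpath and the $v$-to-$p$ subpath both trace parallelogram boundaries inside $\Delta_u \cap X$ and $P$ respectively, and their concatenation is a shortest path.

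The delicate point I expect to grind on is ensuring the pivot amoebot $v$ really exists and lies on the correct side of the hole, and that the concatenated path is truly shortest rather than merely feasible; this hinges on the fact that the boundary of the hole adjacent to $\Delta_u$ must contain at least one amoebot in $\portal_y(u')$ by the hole-freeness of $X$ and the choice of $u'$ as the closest amoebot to $u$ on $\portal_z(u)$ with $\Delta_{u'} \subseteq X$. Everything else is a routine length count on parallelogram boundaries.
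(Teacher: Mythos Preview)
Your proposal is correct and essentially identical to the paper's proof: the same reduction via \Cref{obs:portal} to an $s$--$p$ subpath in $A\cup P$ and a $p$--$u$ subpath in $P\cup\Delta_u$, the same case split on $\Delta_u\subseteq X$, the same introduction of the intermediate amoebot $u'$ along $\portal_z(u)$, and the same pivot $v$ on $\portal_y(u')$ for $p\in P^E_{u'}$, with existence of $v$ argued from the minimality of $u'$. The only imprecision is the phrase ``concatenating any such pair yields a shortest $s$--$u$ path,'' which is literally true only for the particular $p$ lying on some shortest $s$--$u$ path; but since you establish (i) and (ii) for every $p\in P$, the reduction goes through exactly as in the paper.
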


\begin{lemma}
\label{lem:spf:propagation:case_subset}
    Let $\Delta_u \subseteq X$.
    Then, $\neighborz{u}$ is a feasible parent of $u$ if $\dist(S,\projz{u}) \leq \dist(S,\projy{u})$, and $\neighbory{u}$ is a feasible parent of $u$ if $\dist(S,\projz{u}) \geq \dist(S,\projy{u})$.
\end{lemma}

\begin{lemma}
\label{lem:spf:propagation:case_not_subset}
    Let $\Delta_u \not\subseteq X$.
    Then, $\neighborz{u}$ is a feasible parent of $u$ if $u$ is visible by $\projz{u}$, and $\neighbory{u}$ is a feasible parent of $u$ if $u$ is visible by $\projy{u}$.
\end{lemma}

}


We compute the parent of each amoebot $u \in B'$ as follows.
First, we have to compute $B'$.
For that, the amoebots establish a circuit for each $y$- and $z$-portal in $P \cup B$ (see \Cref{fig:spf:propagation:circuits:1}).
Each amoebot $u \in P$ beeps on the circuit of $\portal_y(u)$ and $\portal_z(u)$.
Each amoebot $v \in B$ that receives a beep on the circuit of $\portal_y(v)$ ($\portal_z(v)$) is visible by $\projy{v}$ ($\projy{v}$).
Hence, each amoebot in $B$ that does not receive a beep is not in $\vis(P)$ and with that not in $B'$.

Now, each amoebot $u \in B'$ that received a beep on the circuit of $\portal_y(u)$ ($\portal_z(u)$) but not on the circuit of $\portal_z(u)$ ($\portal_y(u)$) chooses $\neighbory{u}$ ($\neighborz{u}$) as its parent (see \Cref{lem:spf:propagation:case_not_subset}).
Then, we apply the PASC algorithm on the shortest path trees in $A \cup P$ to compute $\dist(S,u)$ for each $u \in P$ (see \Cref{fig:spf:propagation:circuits:2}).
Concurrently, each $u \in P$ forwards its distance $\dist(S,u)$ on $\portal_y(u)$ and $\portal_z(u)$ to all amoebots $v \in B'$ that received a beep on both circuits.
This allows each $v \in B'$ that received a beep on both circuits to compare $\dist(S,\projz{u})$ and $\dist(S,\projy{u})$.
It chooses $\neighbory{u}$ as its parent if $\dist(S,\projy{u}) \leq \dist(S,\projz{u})$ and $\neighborz{u}$ otherwise (see \Cref{lem:spf:propagation:case_subset}).

\begin{figure}[tbp]
    \centering
    \includegraphics{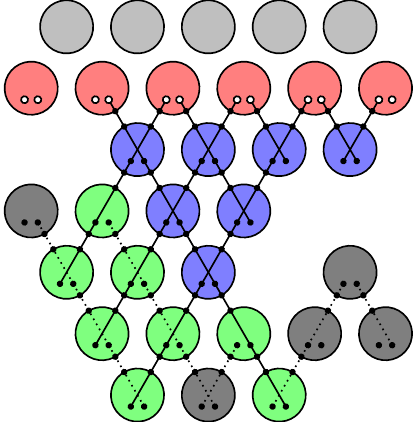}
    \caption{
        Circuits utilized by the propagation algorithm.
        The red amoebots indicate $P$.
        These beep on the white partition sets.
        There is a beep on the- solid circuits while there is no beep on the dotted circuits.
        Each blue amoebot receives two beeps.
        Each green amoebot receives a single beep.
    }
    \label{fig:spf:propagation:circuits:1}
\end{figure}

\begin{figure}[tbp]
    \centering
    \includegraphics{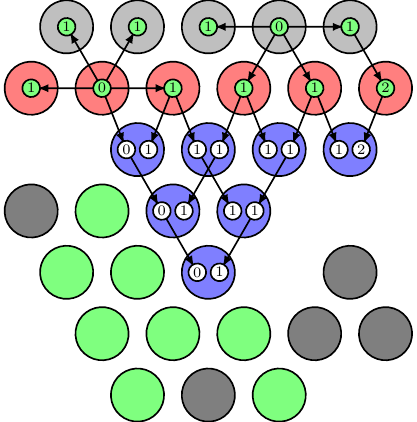}
    \caption{
        Distance computation by the propagation algorithm.
        The amoebots in $A \cup P$ apply the PASC algorithm (green instances), and the amoebots in $P$ forward the results to the amoebots in $B'$ that have received two beeps (white instances).
    }
    \label{fig:spf:propagation:circuits:2}
\end{figure}


Consider the \textbf{second phase}, i.e., the propagation into $B'' = B \setminus \vis(P)$.
$B''$ might consist of several connected components.
We will propagate the shortest path forest into each of those independently of each other.
Let $Z$ denote one of the connected components (see \Cref{fig:spf:propagation:second_phase}).
Let $Z_{B'}$ ($B'_Z$) be the set of all amoebots in $Z$ ($B'$) that are adjacent to an amoebot in $B'$ ($Z$).
Let $s_Z$ be the northernmost amoebot in $Z_{B'}$.

\begin{lemma}
\label{lem:spf:propagation:second_phase:component}
    For each $s \in S$ and $u \in Z$, there is a shortest path from $s$ to $u$ through $s_Z$.
\end{lemma}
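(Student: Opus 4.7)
The plan is to construct, for any $s \in S$ and any $u \in Z$, an explicit shortest path from $s$ to $u$ that passes through $s_Z$. The key reduction is to prove, for every $v \in Z_{B'}$ and every $s \in S$, the distance identity
\[\dist(s,v) \;=\; \dist(s,s_Z) + \dist_Z(s_Z,v),\]
where $\dist_Z$ denotes distance in the subgraph induced by $Z$; once this identity is established, a shortest path $s \leadsto s_Z \leadsto u$ inside $Z$ follows from the in-$Z$ triangle inequality.

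First, I would argue that any shortest path from $s$ to $u$ must enter $Z$ through some amoebot $v \in Z_{B'}$. By Lemma~\ref{obs:portal}, such a path must cross $P$, since $P$ separates $s \in A \cup P$ from $u \in Z \subseteq B$; once in $B$, reaching the connected component $Z \subseteq B''$ from $B'$ requires crossing an edge of the cut $Z_{B'}$. Hence it suffices to show that the initial segment from $s$ to any $v \in Z_{B'}$ can be rerouted through $s_Z$ without increasing its length, which is exactly the identity above.

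Second, I would apply Lemma~\ref{lem:portal_graph:triangular} to decompose $2\dist(s,v) = \dist_x(s,v) + \dist_y(s,v) + \dist_z(s,v)$ and handle each axis separately. Since $P$ is an $x$-portal and $s_Z$ is the northernmost amoebot of $Z_{B'}$, the $x$-portal $\portal_x(s_Z)$ is the root-side entry portal for $Z$ in the $x$-portal tree (rooted near $P$, using Lemma~\ref{lem:portal_graph:tree}), so the unique tree path from $\portal_x(s)$ to $\portal_x(v)$ passes through $\portal_x(s_Z)$, yielding $\dist_x(s,v) = \dist_x(s,s_Z) + \dist_x(s_Z,v)$. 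Analogous monotonicity in the $y$- and $z$-portal trees would follow from hole-freeness together with $Z \cap \vis(P) = \emptyset$: the $y$- and $z$-portal subtrees that cover $Z$ cannot be reached from the $P$-side except via $\portal_y(s_Z)$ and $\portal_z(s_Z)$, respectively. Summing the three identities and using Lemma~\ref{lem:portal_graph:triangular} again in the reverse direction (now applied both to $(s,s_Z)$ and $(s_Z,v)$) produces the desired identity. The same argument applied with source $s_Z$ shows that the path realizing $\dist(s_Z,v)$ stays inside $Z$, so the equation above may be read with $\dist_Z(s_Z,v)$ on the right.

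The main obstacle is the second step, specifically the $y$- and $z$-axis parts: the northernmost property of $s_Z$ directly gives the $x$-axis monotonicity (because the $x$-axis is parallel to $P$), but establishing that all three portal trees simultaneously ``funnel'' into $Z$ through $s_Z$ requires a careful geometric analysis. I would handle it by showing that, within a hole-free structure, the boundary $Z_{B'}$ is itself contained in a single $x$-portal with $s_Z$ at its top; the arguments in the proof of \Cref{lem:spf:triangle} (about rerouting paths around an obstruction using the absence of holes) should then translate into the monotonicity statement for the $y$- and $z$-portal trees, analogously to the rerouting used in \Cref{lem:spf:propagation:case_not_subset}.
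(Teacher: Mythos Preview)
Your approach diverges substantially from the paper's, and as written it contains a concrete gap.

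The paper does not attempt an axis-by-axis portal-distance decomposition. Instead it first analyses the \emph{outer} boundary $B'_Z$ (not $Z_{B'}$): it observes that $B'_Z$ is contained in a single $y$-portal, a single $z$-portal, or the union of one of each, and defines an apex point $b_Z$ accordingly. Since every $w\in B'_Z$ satisfies $\Delta_w\not\subseteq X$, \Cref{lem:spf:propagation:case_not_subset} forces each $w\in B'_Z$ to take its northern neighbour in $B'_Z\cup\{b_Z\}$ as parent; hence any shortest path from $s\in S$ to any $w\in B'_Z$ may be routed through $b_Z$. Because every $s\!\to\!u$ path must cross $B'_Z$ and then $Z_{B'}$, one obtains a shortest path through $b_Z$; a final local ``hourglass'' exchange (see \Cref{fig:spf:propagation:second_phase:hourglass}) reroutes the segment from $b_Z$ to the first point of $Z_{B'}$ so that it goes through $s_Z$ without increasing the length.

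Your plan instead tries to prove $\dist_d(s,v)=\dist_d(s,s_Z)+\dist_d(s_Z,v)$ for every axis $d$ and every $v\in Z_{B'}$, then sum. The step you label as the ``easy'' one is already problematic: you claim that $Z_{B'}$ is contained in a single $x$-portal with $s_Z$ at its top, and that $\portal_x(s_Z)$ is the unique root-side entry portal for $Z$ in the $x$-portal tree. This is false in general. Since $B'_Z$ lies on a diagonal ($y$- or $z$-) portal, its amoebots occupy \emph{different} $x$-rows, and so do their neighbours in $Z_{B'}$; thus $Z_{B'}$ typically spans many $x$-portals. Consequently ``northernmost in $Z_{B'}$'' does not by itself force the $x$-portal-tree path from $\portal_x(s)$ to $\portal_x(v)$ to visit $\portal_x(s_Z)$, and your additivity identity for $d=x$ is unjustified. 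The $y$- and $z$-axis cases, which you already flag as the hard part, would need a separate argument as well.

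What the paper's route buys is that it never needs simultaneous monotonicity in all three portal trees: the structural fact about $B'_Z$ together with \Cref{lem:spf:propagation:case_not_subset} already yields a common funnel point $b_Z$, and the passage from $b_Z$ to $s_Z$ is a short local rerouting. If you want to salvage your approach, the first thing to fix is the description of $Z_{B'}$; the correct structural statement concerns $B'_Z$, not $Z_{B'}$, and it lives on diagonal portals rather than on an $x$-portal.
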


\begin{lemma}
\label{lem:spf:propagation:second_phase:source}
    The northermost amoebots in $B'_Z$ are feasible parents of $s_Z$.
\end{lemma}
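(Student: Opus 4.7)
My plan is to break the claim into three parts: first identify where any feasible parent of $s_Z$ must lie, then show that the northernmost amoebots of $B'_Z$ are forced to be neighbors of $s_Z$, and finally invoke a rerouting argument to establish feasibility.

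First, I would observe that every neighbor of $s_Z$ outside $Z$ lies in $B'_Z$. Indeed, $s_Z \in Z \subseteq B'' = B \setminus \vis(P)$ cannot be adjacent to any amoebot in $P$: adjacency along an $x$-edge would put $s_Z$ into $P$, while adjacency along a $y$- or $z$-edge would place $s_Z$ in $\vis(P)$. By \Cref{obs:portal}, $s_Z$ cannot be adjacent to any amoebot in $A$, and since $Z$ is a connected component of $B''$, $s_Z$ has no neighbor in another component of $B''$. Therefore every neighbor of $s_Z$ outside $Z$ is in $B'$ and, by definition, in $B'_Z$. Combined with \Cref{lem:spf:propagation:second_phase:component}, this implies that the predecessor of $s_Z$ on any shortest path from $S$ lies in $B'_Z \cap \neighborhood(s_Z)$: if the predecessor were in $Z$, then the first $Z$-amoebot on the path would be some $w \neq s_Z$, and the subpath from $s$ to $w$ would be a shortest path to $w \in Z$ avoiding $s_Z$, contradicting \Cref{lem:spf:propagation:second_phase:component}.

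Second, I would show that every northernmost $u^* \in B'_Z$ is a neighbor of $s_Z$. Let $v \in Z$ witness $u^* \in B'_Z$, so $v \in Z_{B'}$ and hence $v$ is not strictly north of $s_Z$. I would then perform a case analysis on which of the two northward diagonals (the $y$- or the $z$-axis) of $u^*$ is adjacent to $v$; in each case the maximality of $u^*$'s latitude within $B'_Z$ forces $v$ to lie in the same row as $s_Z$, and the assumption that $s_Z$ is \emph{the} northernmost amoebot in $Z_{B'}$, together with the no-holes property of $X$, rules out $v \neq s_Z$ (otherwise the row between $v$ and $s_Z$ would contain an amoebot of $Z_{B'}$ strictly north of $s_Z$ via $u^*$).

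Third, I would establish that any northernmost $u^* \in B'_Z$ is a feasible parent, by rerouting, in the spirit of the proof of \Cref{lem:spf:propagation:case_not_subset} (compare \Cref{fig:spf:triangle:shortest_path}). Take any shortest path from some $s \in S$ to $s_Z$, and let $p \in B'_Z \cap \neighborhood(s_Z)$ be its predecessor. If $p = u^*$, we are done. Otherwise, $u^*$ lies strictly north of $p$ in $B'_Z$ and is still adjacent to $s_Z$; I splice the tail of the path so that it ends at $u^* \to s_Z$, using \Cref{lem:portal_graph:triangular} to verify that shifting the entry point into $s_Z$ from $p$ to $u^*$ does not change the sum $\dist_x + \dist_y + \dist_z$, so the total length is preserved. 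The main obstacle will be this rerouting step: while the topological picture is clear, the geometric bookkeeping to confirm length preservation requires tracking how moving the entry into $s_Z$ north by one step in $B'_Z$ trades one $y$-portal step for one $z$-portal step (or vice versa) along the spliced segment, which is precisely the situation \Cref{lem:portal_graph:triangular} is designed to control.
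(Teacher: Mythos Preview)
Your approach diverges substantially from the paper's. There, \Cref{lem:spf:propagation:second_phase:component} and \Cref{lem:spf:propagation:second_phase:source} are proved together in a single argument whose first move is a structural observation you never make: $B'_Z$ is contained in a single $y$-portal, a single $z$-portal, or the union of one of each, meeting at an apex $b_Z$. Since $\Delta_u\not\subseteq X$ for every $u\in B'_Z$ (otherwise all neighbors of $u$ would lie in $P\cup B'$, contradicting adjacency to $Z$), \Cref{lem:spf:propagation:case_not_subset} makes the northern neighbor along that portal a feasible parent for each $u\in B'_Z$; hence every shortest path from $S\subseteq A\cup P$ to any point of $B'_Z$ funnels through $b_Z$. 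The hourglass rerouting (\Cref{fig:spf:propagation:second_phase:hourglass}) then yields both lemmas at once, and the final sentence of that proof --- that the shortest path from $b_Z$ to $v$ through $s_Z$ may pass through any northernmost amoebot of $B'_Z$ --- is exactly \Cref{lem:spf:propagation:second_phase:source}.

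You instead take \Cref{lem:spf:propagation:second_phase:component} as given and argue directly. Step~1 is correct. Step~2 is plausible but the assertion that ``maximality of $u^*$'s latitude forces $v$ to lie in the same row as $s_Z$'' already relies on the portal-shape of $B'_Z$ that you have not established; without it nothing prevents $B'_Z$ from having northernmost amoebots that are not neighbors of $s_Z$.

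The genuine gap is step~3. \Cref{lem:portal_graph:triangular} fixes $2\,\dist(s,s_Z)=\dist_x(s,s_Z)+\dist_y(s,s_Z)+\dist_z(s,s_Z)$ independently of any particular path, so ``the sum is preserved under splicing'' is vacuous; what you must show is $\dist(S,u^*)\le\dist(S,p)$. Your heuristic that moving one step north in $B'_Z$ ``trades one $y$-portal step for one $z$-portal step'' is also off: a step along a $y$-portal keeps $\dist_y$ fixed and changes $\dist_x$ and $\dist_z$ by $\pm 1$ each, and whether the net effect on $\dist(s,\cdot)$ is $-1$, $0$, or $+1$ depends on where $s$ sits in the $x$- and $z$-portal trees --- information you do not control. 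The monotone decrease of $\dist(S,\cdot)$ toward $b_Z$ along $B'_Z$ is precisely what the paper obtains from \Cref{lem:spf:propagation:case_not_subset} once the portal structure of $B'_Z$ is in hand. To close your argument you would have to recover that structure anyway, at which point the paper's route is shorter.
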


\iftoggle{full}{
\begin{proof}[Proof of \Cref{lem:spf:propagation:second_phase:component,lem:spf:propagation:second_phase:source}]
    Observe that $B'_Z$ is either a subset of a $y$-portal in $\vis(P)$, a subset of a $z$-portal in $\vis(P)$, or a subset of the union of a $y$- and $z$-portal in $\vis(P)$ (see \Cref{fig:spf:propagation:second_phase}).
    If $B'_Z$ is a subset of a single portal, then let $b_Z$ denote the northernmost amoebot of $B'_Z$.
    If $B'_Z$ is a subset of the union of two portals, then let $b_Z$ denote the intersection between those portals.
    Note that in the latter case, $b_Z \not\in B'_Z$.
    
    For each $u \in B'_Z$, $\Delta_u \not\subseteq X$ holds since otherwise, all its neighbors would be in $P \cup B'$.
    Hence, by \Cref{lem:spf:propagation:case_not_subset}, each amoebot $u \in B'_Z$ chooses its northern neighbor in $B'_Z \cup \{ b_Z \}$ as its parent.
    As a result, the subtree of $b_Z$ in the shortest path forest contains all amoebots in $B'_Z$ (see \Cref{fig:spf:propagation:second_phase}).
    Since \Cref{lem:spf:propagation:case_not_subset} does not depend on $S$, there must be a shortest path from any $v \in A \cup P$ (and with that from any $s \in S \subseteq A \cup P$) to any $w \in B'_Z$ through $b_Z$.
    
    Since each path from $s$ to $u$ must first traverse $B'_Z$ and then $Z_{B'}$, there is a shortest path from $s$ to $u$ through $b_Z$.
    Let $v$ be the first amoebot in $Z_{B'}$ that the shortest path traverses.
    Observe that we can exchange the path from $b_Z$ to $v$ by the shortest path from $b_Z$ to $v$ through $s_Z$ without increasing the length of the shortest path (see \Cref{fig:spf:propagation:second_phase:hourglass}).
    Finally, note that the shortest path from $b_Z$ to $v$ through $s_Z$ can traverse any of the northernmost amoebots in $B'_Z$.
\end{proof}
}{}

By \Cref{lem:spf:propagation:second_phase:source}, $s_Z$ can choose one of the northernmost amoebots in $B'_Z$ as its parent.
\Cref{lem:spf:propagation:second_phase:component} allows us to apply our shortest path tree algorithm on $Z$ with $s_Z$ as the source to compute a parent for each amoebot $u \in Z \setminus \{ s_Z \}$.


\begin{figure}[tbp]
        \centering
        \includegraphics{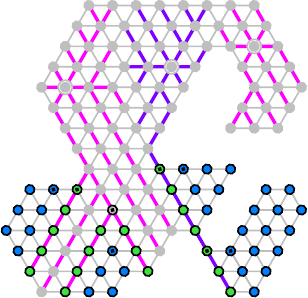}
    \caption{
        Second phase of the propagation algorithm.
        The pink and purple edges indicate the shortest path trees (after the first phase).
        The blue amoebots indicate $B''$.
        For each connected component $Z$ of blue amoebots, the adjacent green amoebots indicate $B'_Z$, the blue amoebot with a dot indicates $s_Z$, and the green or gray amoebot with a dot indicates $b_Z$.
    }
    \label{fig:spf:propagation:second_phase}
\end{figure}

\begin{figure}[tbp]
    \begin{minipage}[t]{.3\linewidth}
        \centering
        \includegraphics{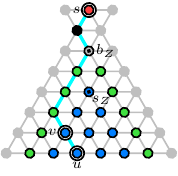}
    \end{minipage}
    \hfill
    \begin{minipage}[t]{.3\linewidth}
        \centering
        \includegraphics{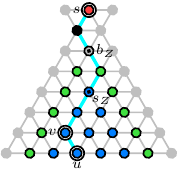}
    \end{minipage}
    \hfill
    \begin{minipage}[t]{.3\linewidth}
        \centering
        \includegraphics{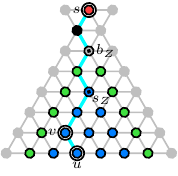}
    \end{minipage}
    \caption{
        Shortest path through $s_Z$.
        The labels refer to the marked amoebots.
    }
    \label{fig:spf:propagation:second_phase:hourglass}
\end{figure}

\begin{lemma}
\label{lem:spf:propagation:algorithm}
    Let $P$ be a portal that divides the amoebot structure into two sides $A$ and $B$. 
    Let $S \subseteq A \cup P$.
    Let a $S$-shortest path forest for $A \cup P$ be given.
    The propagation algorithm computes an $S$-shortest path forest for the whole amoebot structure within $O(\log n)$ rounds.
\end{lemma}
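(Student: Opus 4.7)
The plan is to verify correctness of each of the two phases separately and then bound the running time. For the first phase, I would argue that after the circuit construction on the $y$- and $z$-portals of $P$, every amoebot $u \in B$ can locally decide whether $u \in B'$ by checking on which of its two portal-circuits it received a beep, and moreover can distinguish the two sub-cases: exactly one beep corresponds to $\Delta_u \not\subseteq X$, and two beeps correspond to $\Delta_u \subseteq X$. \Cref{lem:spf:propagation:case_not_subset} then immediately certifies that the parent chosen in the one-beep case is feasible, while \Cref{lem:spf:propagation:case_subset} certifies the choice in the two-beep case provided the amoebot knows $\dist(S,\projy{u})$ and $\dist(S,\projz{u})$. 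These distances are produced by running PASC on the given $S$-shortest path forest in $A \cup P$ (\Cref{cor:pasc:tree}) and then forwarding the bits of each $\dist(S,w)$ for $w \in P$ along the $y$- and $z$-portal circuits of $P$; each amoebot in $B'$ with two beeps can compare the two streams bit-wise and pick the smaller.

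For the second phase, I would handle each connected component $Z$ of $B''$ in parallel. \Cref{lem:spf:propagation:second_phase:source} tells us that some northernmost amoebot $b \in B'_Z$ is a feasible parent of $s_Z$; it can be identified by letting $B'_Z$ establish a local circuit and having the northernmost amoebot beep. Once $s_Z$ has fixed its parent, \Cref{lem:spf:propagation:second_phase:component} states that every shortest path from $S$ into $Z$ passes through $s_Z$, so restricted to $Z$ the problem degenerates to single-source shortest paths with source $s_Z$. Since each $Z$ is connected and hole-free (as a subgraph of the hole-free $X$ separated off by $P$ and the visibility region), we can apply the shortest path tree algorithm of \Cref{th:spt} inside $Z$ with source $s_Z$ and destinations $D \cap Z$, producing valid parents for the remaining amoebots. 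Combining the parents from the two phases with those already present in $A \cup P$ yields an $S$-shortest path forest on $A \cup P \cup B$.

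For the round complexity, the circuit constructions, the two beep rounds and the forwarding in the first phase all cost $O(1)$; the PASC execution on the input shortest path forest takes $O(\log n)$ rounds, and the bit-wise distance comparison happens in parallel and thus also fits in $O(\log n)$. In the second phase, identifying $s_Z$ and $B'_Z$ costs $O(1)$ rounds, and the shortest path tree algorithm on each component runs in $O(\log |Z|) = O(\log n)$ rounds by \Cref{th:spt}, all components in parallel. The total is $O(\log n)$.

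The main obstacle I expect is not any single step but rather the bookkeeping that justifies running \Cref{th:spt} inside $Z$: one has to check that restricting to $Z$ with source $s_Z$ does give shortest paths in the full structure, which is exactly where \Cref{lem:spf:propagation:second_phase:component} is invoked, and that the components of $B''$ can indeed be processed independently without interfering circuits. Both points follow from the fact that different components of $B''$ are separated by $P \cup B'$, so their internal circuits and PASC invocations never collide.
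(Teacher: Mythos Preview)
Your proposal is essentially the paper's own proof, expanded with the implementation details the paper leaves implicit; the paper's version simply cites \Cref{lem:spf:propagation:case_subset,lem:spf:propagation:case_not_subset} for the first phase and \Cref{lem:spf:propagation:second_phase:source,lem:spf:propagation:second_phase:component,th:spt} for the second, together with the obvious $O(\log n)$ bookkeeping.

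One small slip: in the second phase you apply \Cref{th:spt} inside $Z$ with destination set $D \cap Z$, but the lemma asks for an $S$-shortest path forest (i.e., an $(S,X)$-forest), so every amoebot in $Z$ must receive a parent; the destination set should be all of $Z$, not $D \cap Z$. This does not affect the approach or the $O(\log n)$ bound, only the parameter you pass to the subroutine.
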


\iftoggle{full}{
\begin{proof}
    The correctness of the first phase follows from \Cref{lem:spf:propagation:case_subset,lem:spf:propagation:case_not_subset}.
    The phase consists of a single round to compute $B'$ and $O(\log n)$ rounds for the PASC algorithm.
    The correctness of the second phase follows from \Cref{lem:spf:propagation:second_phase:source,lem:spf:propagation:second_phase:component,th:spt}.
    By \Cref{th:spt}, the phase requires $O(\log n)$ rounds.
    Overall, the algorithm requires $O(\log n)$ rounds.
\end{proof}
}{}


\subsection{Divide and Conquer Approach}



In this section, we describe divide and conquer approach.
We start with an overview of our shortest path forest algorithm.
First, we split the amoebot structure at portals into smaller regions until each region intersects at most two portals with a source.
Then, we compute a shortest path forest for each of these regions.
Finally, we iteratively merge the regions.
In order to determine disjoint pairs of adjacent regions to merge, we make use of a centroid decomposition tree.
This also minimizes the number of necessary iterations.

In the subsequent subsections, we will elaborate on how to split the amoebot structure into smaller regions, on how to compute a shortest path forest for these regions, and on how to merge the shortest path forests of adjacent regions.


\subsubsection{Dividing the Amoebot Structure}

In this subsection, we consider the splitting of the amoebot structure into smaller regions.
Let $\mathcal Q = \{ P \in \mathcal P \mid P \cap S \neq \emptyset \}$ be the set of all portals that contain at least one source.
In order to compute $\mathcal Q$, each portal establishes a circuit that connects all amoebots of the portal.
Then, each amoebot $u \in S$ beeps on the circuit of $\portal(u)$.
Each amoebot that receives a beep belongs to a portal with at least one source.
Next, we apply the \iftoggle{full}{root and prune primitive}{augmentation primitive} to compute $A_{\mathcal Q}$ (see \Cref{lem:portal:augmentation}).
Let $\mathcal Q' = \mathcal Q \cup A_{\mathcal Q}$.

\begin{lemma}
\label{lem:spf:dividing_step:1}
    Our shortest path forest algorithm computes set $\mathcal Q'$ within $O(\log k)$ rounds.
\end{lemma}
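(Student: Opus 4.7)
The plan is to break the computation of $\mathcal Q'$ into its two constituent phases and bound each independently. First I would argue correctness of the $\mathcal Q$-detection phase: for a fixed axis, the per-portal circuits are well-defined because each portal is itself a connected chain in $G_X$, so every amoebot in a given portal shares the same circuit in this sub-configuration. A single round of beeping by amoebots in $S$ on this circuit therefore reaches exactly those amoebots whose portal intersects $S$, i.e.\ the amoebots belonging to portals in $\mathcal Q$. Each amoebot can locally record the single bit ``is my portal in $\mathcal Q$''. This phase uses $O(1)$ rounds and only the primitives already established (portal-local circuits, synchronous beeps).

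Next I would plug in the augmentation primitive: since each amoebot now knows whether $\portal(u) \in \mathcal Q$, the hypothesis of \Cref{lem:portal:augmentation} is satisfied, and invoking it produces for every amoebot $u$ a flag for $\portal(u) \in A_{\mathcal Q}$ in $O(\log |\mathcal Q|)$ rounds. Taking the union $\mathcal Q' = \mathcal Q \cup A_{\mathcal Q}$ is then a purely local operation at each amoebot (an ``or'' of the two stored flags) and costs no additional rounds.

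Finally I would bound $\log |\mathcal Q|$ by $\log k$. The key observation is that the portals in $\mathcal Q$ are pairwise disjoint subsets of $X$, each containing at least one element of $S$. Hence the map sending each $P \in \mathcal Q$ to an arbitrary chosen source in $P \cap S$ is injective, which yields $|\mathcal Q| \le |S| = k$ and therefore $O(\log |\mathcal Q|) = O(\log k)$. Combining the $O(1)$ rounds of the detection phase with the $O(\log k)$ rounds of the augmentation phase gives the claimed bound.

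I do not expect a serious obstacle here: the lemma is essentially an assembly of \Cref{lem:portal:augmentation} with a one-round broadcast. The only subtlety worth checking carefully is that the per-portal circuits of the detection phase are compatible with the pin-configuration requirements of the subsequent call to \Cref{lem:portal:augmentation}, but since the two phases run sequentially each may freely reconfigure its pin configuration.
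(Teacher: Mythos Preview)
Your proposal is correct and follows essentially the same approach as the paper: the paper's proof is a terse three-sentence version of exactly your argument (one round for detecting $\mathcal Q$ via per-portal circuits, the observation $|\mathcal Q| = O(k)$, and an appeal to \Cref{lem:portal:augmentation}). Your added justification for $|\mathcal Q|\le k$ via the injectivity of $P\mapsto s_P\in P\cap S$ is a nice explicit spelling-out of what the paper leaves implicit.
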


\iftoggle{full}{
\begin{proof}
    The computation of $\mathcal Q$ only requires a single round.
    Observe that $|\mathcal Q| = O(k)$.
    The correctness and runtime follows from \Cref{lem:portal:augmentation}.
\end{proof}
}{}

\iftoggle{full}{


Each of these portals splits the amoebot structure into two regions, with the portal being part of both regions (see \Cref{fig:spf:splitting:initial,fig:spf:splitting:portals}).
In general, the resulting regions may intersect an arbitrary number of portals in $\mathcal Q'$.
We split the regions further until each region intersects one or two portals in $\mathcal Q'$ as follows.

\begin{figure}[tbp]
    \begin{minipage}[t]{\linewidth}
        \centering
        \includegraphics{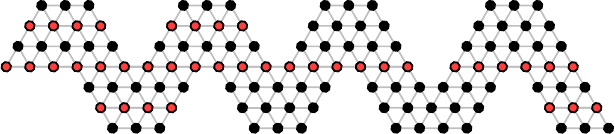}
        \subcaption{Initial amoebot structure.}
        \label{fig:spf:splitting:initial}
    \end{minipage}
    
    \bigskip

    \begin{minipage}[t]{\linewidth}
        \centering
        \includegraphics{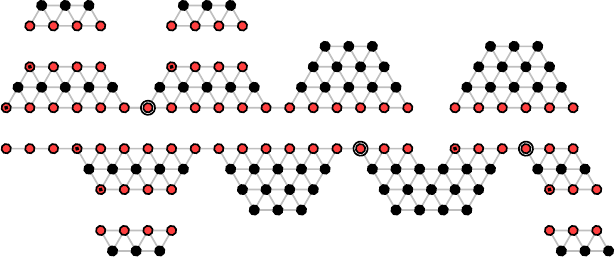}
        \subcaption{Regions after splitting the amoebot structure at each portal in $\mathcal Q$.}
        \label{fig:spf:splitting:portals}
    \end{minipage}
    
    \bigskip

    \begin{minipage}[t]{\linewidth}
        \centering
        \includegraphics{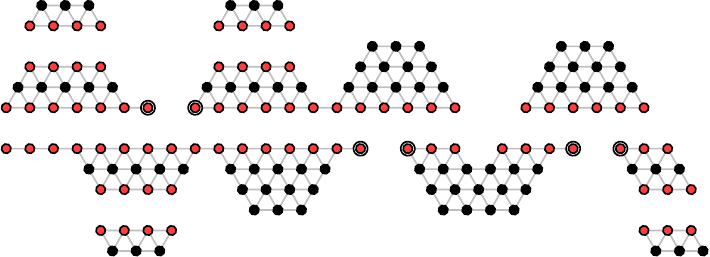}
        \subcaption{Regions after splitting the regions at each marked amoebot.}
        \label{fig:spf:splitting:faces}
    \end{minipage}
    \caption{
        Regions.
        The red amoebots indicate the portals in $\mathcal Q'$.
        The encircled amoebots indicate the marked amoebots.
        The amoebots with a dot were marked but then unmarked.
    }
    \label{fig:spf:splitting}
\end{figure}

In each region, each portal $P_1 \in \mathcal Q'$ marks amoebot $\connector_{P_1}(P_2)$ for each $P_2 \in V_{\mathcal Q}$.
Then, each portal unmarks the westernmost marked amoebot.
For that, each portal establishes a circuit between its endpoints that it cuts at each marked amoebot.
The westernmost amoebot sends a beep which is received by the westernmost marked amoebot.
Finally, we split each region at each still marked amoebot into two regions, with the marked amoebot being part of both regions (see \Cref{fig:spf:splitting:faces}).

}{


Each of these portals splits the amoebot structure into two regions, with the portal being part of both regions (see \Cref{fig:spf:splitting:initial,fig:spf:splitting:portals}).
The resulting regions may intersect an arbitrary number of portals in $\mathcal Q'$.
We split the regions further until each region intersects one or two portals in $\mathcal Q'$.
For that, each portal in $\mathcal Q'$ marks a minimal number of amoebot in its ``bottlenecks'' (see \Cref{fig:spf:splitting:portals}).
Then, we split each region at each marked amoebot into two regions, with the marked amoebot being part of both regions (see \Cref{fig:spf:splitting:faces}).

\begin{figure}[tbp]
    \begin{minipage}[t]{\linewidth}
        \centering
        \includegraphics{fig/fig_regions_01.pdf}
        \subcaption{Initial amoebot structure.}
        \label{fig:spf:splitting:initial}
    \end{minipage}
    
    \bigskip

    \begin{minipage}[t]{\linewidth}
        \centering
        \includegraphics{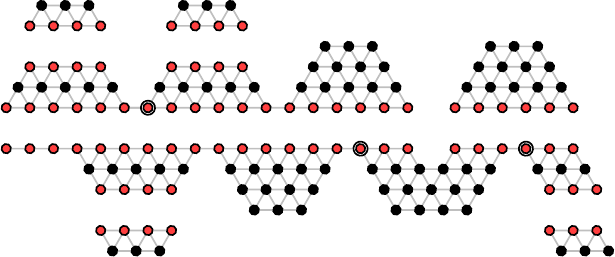}
        \subcaption{Regions after splitting the amoebot structure at each portal in $\mathcal Q$.}
        \label{fig:spf:splitting:portals}
    \end{minipage}
    
    \bigskip

    \begin{minipage}[t]{\linewidth}
        \centering
        \includegraphics{fig/fig_regions_03.pdf}
        \subcaption{Regions after splitting the regions at each marked amoebot.}
        \label{fig:spf:splitting:faces}
    \end{minipage}
    \caption{
        Regions.
        The red amoebots indicate the portals in $\mathcal Q'$.
        The encircled amoebots indicate the marked amoebots in the ``bottlenecks''.
    }
    \label{fig:spf:splitting}
\end{figure}

}

\begin{lemma}
\label{lem:spf:dividing_step:2}
    Our shortest path forest algorithm decomposes the amoebot structure into regions that intersect one or two portals in $\mathcal Q'$ within $O(1)$ rounds.
\end{lemma}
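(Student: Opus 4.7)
The proof splits into two parts: bounding the runtime, and verifying the structural property that every final region intersects at most two portals of $\mathcal{Q}'$.

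For the runtime, I would argue that every sub-step takes $O(1)$ rounds. Marking the connector amoebots is purely local: by \Cref{lem:portal:rap} each portal $P_1 \in \mathcal{Q}'$ already knows which of its neighbors lie in $V_{\mathcal{Q}}$, and each amoebot in $P_1$ can decide in a constant number of rounds (using the circuits from the proof of \Cref{lem:portal:rap}) whether it is a connector $\connector_{P_1}(P_2)$ for some such neighbor. Unmarking the westernmost marked amoebot per portal uses a single beep on a circuit established along the portal and cut at every marked amoebot: the westernmost amoebot of the portal beeps, and only the westernmost marked amoebot (sitting in the first cut segment) receives it, so it can unmark itself. The ``splitting'' at remaining marked amoebots is then purely conceptual: the regions are by definition the connected components of $X$ after removing all $\mathcal{Q}'$-portals and all still-marked amoebots as interior points (keeping them on the boundaries of both adjacent regions). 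All of this fits in $O(1)$ rounds.

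For the structural claim I plan a proof by contradiction. Suppose some final region $Y$ intersects three distinct portals $P_1, P_2, P_3 \in \mathcal{Q}'$. Because the initial splitting at $\mathcal{Q}'$-portals places each such portal on region boundaries only, any two of $P_1, P_2, P_3$ are linked by a path inside $Y$ that avoids all other portals of $\mathcal{Q}'$. By \Cref{obs:portal}, this forces every pair among $\{P_1,P_2,P_3\}$ to be adjacent in the portal tree restricted to $\mathcal{Q}'$. Since a tree has no triangle, some portal, say $P_2$, must be a common neighbor of both $P_1$ and $P_3$ in $T_{\mathcal{Q}'}$. But then $P_2$ initially marked both $\connector_{P_2}(P_1)$ and $\connector_{P_2}(P_3)$, and the unmarking step removes at most one of these marks; the surviving marked connector is a split point, which separates either $P_1$ or $P_3$ from $P_2$ inside $Y$, contradicting the assumption that all three meet $Y$.

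The main obstacle I anticipate is the clean implementation of the step ``paths in $Y$ avoiding other $\mathcal{Q}'$-portals imply adjacency in $T_{\mathcal{Q}'}$.'' This rests jointly on the hole-freeness of $X$, on portal graphs being trees (\Cref{lem:portal_graph:tree}), and on \Cref{obs:portal}. A subsidiary subtlety is that marking and unmarking happen independently from each side of a connector: if $P_2$'s westernmost marked amoebot happens to be $\connector_{P_2}(P_1)$ while $P_1$'s westernmost marked amoebot is not the matching $\connector_{P_1}(P_2)$, the connector is still a split point from $P_1$'s side, so $P_1$ and $P_2$ end up in different regions. The contradiction argument should therefore work with the invariant ``a connector $\connector_{P_1}(P_2){=}\connector_{P_2}(P_1)$ is un-split only if both sides unmarked it,'' which suffices to rule out three $\mathcal{Q}'$-portals co-existing in a region and completes the lemma.
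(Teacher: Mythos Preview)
Your runtime argument is fine and matches the paper's. The structural argument, however, has a genuine gap and several technical errors.

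The central gap is that you never use the defining property of the augmentation set $A_{\mathcal Q}$, namely that every portal with $\degree_{\mathcal Q}\geq 3$ lies in $\mathcal Q'$. This is precisely why the lemma is true, and your contradiction cannot close without it. Concretely: your claim that ``every pair among $\{P_1,P_2,P_3\}$ is adjacent in the portal tree restricted to $\mathcal Q'$'' is unjustified. From paths in $Y$ avoiding $\mathcal Q'\setminus\{P_1,P_2,P_3\}$ you only get that no \emph{fourth} $\mathcal Q'$-portal lies on the $\mathcal P$-path between $P_i$ and $P_j$; the third portal $P_k$ may well sit on that path, so the pair need not be adjacent. If the three were genuinely pairwise adjacent in a tree you would already have your contradiction (a triangle), so the subsequent sentence ``since a tree has no triangle, some $P_2$ is a common neighbor'' is a non-sequitur. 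What you actually need is that the median of $\{P_1,P_2,P_3\}$ in $\mathcal P$ is one of the three; ruling out a fourth median requires exactly the $A_{\mathcal Q}$ property. The paper's proof makes this the heart of the argument: after all splits every $\mathcal Q'$-subportal has $\degree_{\mathcal Q}=1$, and since every non-$\mathcal Q'$ node has $\degree_{\mathcal Q}\leq 2$ by construction of $A_{\mathcal Q}$, each component of the pruned portal graph is a path with at most two $\mathcal Q'$-leaves.

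There are also concrete technical errors. First, $\connector_{P_2}(P_1)$ is only defined when $P_1$ is a neighbour of $P_2$ in $\mathcal P$; the algorithm marks connectors to neighbours in $V_{\mathcal Q}$, not to arbitrary $\mathcal Q'$-portals, so ``$P_2$ marked $\connector_{P_2}(P_1)$'' is not what happens. Second, $\connector_{P_1}(P_2)\in P_1$ and $\connector_{P_2}(P_1)\in P_2$ are distinct amoebots in distinct portals, so the invariant in your last paragraph (``$\connector_{P_1}(P_2)=\connector_{P_2}(P_1)$ is un-split only if both sides unmarked it'') is based on a misconception; marking on $P_1$ and marking on $P_2$ are independent and concern different amoebots. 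Finally, even granting a ``middle'' portal $P_2$, showing that a surviving mark on $P_2$ actually separates $P_1$ from $P_3$ in the amoebot structure needs the geometric fact that the implicit-portal-graph edge out of a non-westernmost connector points north-east (hence into the eastern subregion); you assert the separation without this.
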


\iftoggle{full}{
\begin{proof}
    Note that each region intersects at least one portal in $\mathcal Q'$ since whenever we split a region at a portal in $\mathcal Q'$ or at an amoebot of a portal in $\mathcal Q'$, it becomes part of both regions.

    Consider the portal graph with root $R$ and without any subtrees with portals in $\mathcal Q$.
    We first split the amoebot structure at each portal in $\mathcal Q'$.
    This is equivalent to replacing each portal $P \in \mathcal Q'$ with two portals $P_N$ and $P_S$ where we assign all incident edges of $P$ to the north to $P_N$, and all incident edges of $P$ to the south to $P_S$.
    We will consider both $P_N$ and $P_S$ to be in $\mathcal Q'$.
    
    Then, we split each region at the marked amoebots.
    This is equivalent to replacing each portal $P \in \mathcal Q'$ with $\degree_{\mathcal Q}(P)$ subportals where we assign each incident edge of $P$ to one of the subportals.
    Again, we will consider each of the $\degree_{\mathcal Q}(P)$ subportals to be in $\mathcal Q'$.
    In the resulting portal graph, each (sub)portal $P \in \mathcal Q'$ has $\degree_{\mathcal Q}(P) = 1$.

    Further, we claim that each connected component has at most two (sub)portals in $\mathcal Q'$.
    Suppose the contrary, i.e., there is a connected component with more than two (sub)portals in $\mathcal Q'$.
    Since each (sub)portal $P \in \mathcal Q'$ has $\degree_{\mathcal Q}(P) = 1$, it must be a leaf of the connected component.
    Since by the assumption, there are at least $3$ leaves, there must be a portal $P$ with $\degree_{\mathcal Q}(P) \geq 3$.
    This is a contradiction since $P \in \mathcal Q'$ and $P$ is not a leaf.
    The correctness of the lemma follows from the claim since the connected components are the portal graphs of the regions.

    Finally, note that we have already performed the root and prune primitive.
    Hence, we have all necessary information to decompose the amoebot structure.
    We only need a single round to unmark the westernmost marked amoebot of each portal.
\end{proof}
}{}

\subsubsection{Base Case}

In this subsection, we explain how to compute a shortest path forest for our base case, i.e., regions whose boundary intersects one or two portals in $\mathcal Q'$.
Our goal is to compute an $(S \cap \region)$-shortest path forest for each region $\region$.

In a preprocessing step, we determine for each region whether it intersects one or two portals in $\mathcal Q'$.
First, we apply the election primitive to elect a portal $R' \in \mathcal Q'$ (see \Cref{lem:portal:election}).
Then, we apply the root and prune primitive to root the portal tree at $R'$ (and prune any subtree without a portal in $\mathcal Q'$) (see \Cref{lem:portal:rap}).
Let $\portal(\region) = \{ P \in V_{\mathcal P} \mid P \cap \region \neq \emptyset \}$ denote the set of all portals that intersect region $\region$.
Let $\mathcal Q'_\region = \mathcal Q' \cap \portal(\region)$ denote the set of all portals in $\mathcal Q'$ that intersect $\region$.

\begin{lemma}
\label{lem:spf:ancestor}
    For each region $\region$, the lowest common ancestor of $\portal(\region)$ with respect to $R'$ is in $\mathcal Q'_\region$.
\end{lemma}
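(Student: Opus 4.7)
The plan is to show that the LCA $P^*$ of $\portal(\region)$ lies both in $\portal(\region)$ and in $\mathcal Q'$. For the first part, I would observe that because $\region$ is a connected subset of $X$, $\portal(\region)$ forms a connected subgraph of the portal graph: any amoebot path inside $\region$ projects onto a walk in $\mathcal P$ whose vertices all belong to $\portal(\region)$. Since $\mathcal P$ is a tree by \Cref{lem:portal_graph:tree}, this subgraph is in fact a subtree, and its LCA in the tree rooted at $R'$ is simply its topmost portal $P^*$, which is itself an element of $\portal(\region)$.

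For the second part, I would argue by contradiction. Assume $P^* \notin \mathcal Q'$. Then $P^* \neq R'$ (since $R' \in \mathcal Q'$ by the election primitive), so $P^*$ has a parent $P_{par}$ in the rooted portal tree. I claim $P_{par} \in \portal(\region)$, contradicting the choice of $P^*$ as the topmost portal of $\portal(\region)$. To see this, pick any amoebot $u \in P^* \cap \region$ together with an adjacent amoebot $v \in P_{par}$ in the amoebot structure (both exist by the portal-tree edge $P^*$--$P_{par}$ and by $P^* \in \portal(\region)$). Since the decomposition of \Cref{lem:spf:dividing_step:2} cuts the amoebot structure only at portals in $\mathcal Q'$, and since $P^* \notin \mathcal Q'$, no cut is placed at $P^*$; hence the edge $(u,v)$ is not severed and $v$ belongs to the same region as $u$, namely $\region$. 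Thus $P_{par} \in \portal(\region)$, completing the contradiction.

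The main obstacle is making precise that the edge $(u,v)$ is not severed when $P_{par}$ itself lies in $\mathcal Q'$. The splitting procedure first separates each $P \in \mathcal Q'$ into a northern and a southern half along the portal axis, and then further splits at the marked amoebots to peel off the individual $\mathcal Q$-subtrees. Both operations leave each portal-tree edge incident to $P_{par}$ intact within exactly one resulting sub-portal, and by construction that sub-portal is placed in the same region as its non-$\mathcal Q'$ neighbor $P^*$. Consequently $v$ ends up in the sub-portal of $P_{par}$ that faces $P^*$, so $v \in \region$ regardless of whether $P_{par}$ is in $\mathcal Q'$. Once this is established, the contradiction follows and the lemma is proved.
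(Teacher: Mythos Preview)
Your argument is correct and rests on the same core observation as the paper's proof, namely that a region is bounded only by portals of $\mathcal Q'$. The paper phrases this more tersely as a two-case separator argument (if $R'\in\mathcal Q'_\region$ the claim is immediate, otherwise $R'\notin\portal(\region)$ and every path from $R'$ into $\portal(\region)$ must first hit $\mathcal Q'_\region$), whereas you make the connectedness of $\portal(\region)$ explicit and reach the same conclusion by contradiction; the one cosmetic fix is to deduce $P^*\notin\mathcal Q'\Rightarrow P^*\subseteq\region$ \emph{before} selecting the adjacent pair, so that taking $u=\connector_{P^*}(P_{par})$ is guaranteed to yield $u\in\region$.
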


\iftoggle{full}{
\begin{proof}
    The statement holds trivially if $R' \in \mathcal Q'_\region$.
    Otherwise, $R' \not\in \portal(\region)$ holds.
    By definition of the regions, the portals in $\portal(\region)$ can only be reached from $R'$ through one of the portals in $\mathcal Q'_F$ which must be the lowest common ancestor of $\portal(\region)$.
\end{proof}
}{}

A portal in $\mathcal Q'_\region$ identifies as the lowest common ancestor if it is $R'$ or its parent is not in $\portal(\region)$.
Let $P^\mathit{LCA}_\region \in Q'_\region$ denote that lowest common ancestor portal of $\portal(\region)$.
Let $P^\mathit{DSC}_\region \in Q'_\region$ denote the other portal (descendant) if it exists.
Each region $\region$ establishes a circuit that connects all amoebots of the region.
If $P^\mathit{DSC}_\region$ exists, it beeps on the circuit.
Clearly, the region intersects two portals in $\mathcal Q'$ if there is a beep, and only one portal in $\mathcal Q'$ otherwise.

Now, we compute the shortest path forests for each region $Y$ as follows.
If $\region$ intersects only one portal $P^\mathit{LCA}_\region \in \mathcal Q'$, we proceed as follows.
First, we apply the line algorithm on $P^\mathit{LCA}_\region \cap \region$ to compute an $(S \cap P^\mathit{LCA}_\region)$-shortest path forest for $P^\mathit{LCA}_\region$.
Then, we apply the propagation algorithm to compute an $(S \cap P^\mathit{LCA}_\region)$-shortest path forest for $\region$.
Note that $S \cap P^\mathit{LCA}_\region = S \cap \region$.

If $\region$ intersects two portals $P^\mathit{LCA}_\region, P^\mathit{DSC}_\region \in \mathcal Q'$, we proceed as follows.
First, we apply the previous procedure on $P^\mathit{LCA}_\region$ to obtain an $(S \cap P^\mathit{LCA}_\region)$-shortest path forest for $\region$.
Then, we repeat the previous procedure on $P^\mathit{DSC}_\region$ to obtain an $S \cap P^\mathit{DSC}_\region$-shortest path forest for $\region$.
Finally, we apply the merging algorithm to compute an $(S \cap (P^\mathit{LCA}_\region \cup P^\mathit{DSC}_\region))$-shortest path forest for $\region$.
Note that $S \cap (P^\mathit{LCA}_\region \cup P^\mathit{DSC}_\region) = S \cap \region$.

\begin{lemma}
\label{lem:spf:base_case}
    Our shortest path forest algorithm computes an $(S \cap \region)$-shortest path forest for each $\region$ within $O(\log n)$ rounds.
\end{lemma}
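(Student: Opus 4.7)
The plan is to combine three previously-established primitives — the line algorithm (\Cref{lem:spf:line}), the propagation algorithm (\Cref{lem:spf:propagation:algorithm}), and the merging algorithm (\Cref{lem:spf:merging:algorithm}) — in a constant number of phases, executed in parallel across all regions.

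First, I would establish the key containment $S \cap \region \subseteq \bigcup_{P \in \mathcal Q'_\region} P$. This holds because every source lies in some portal of $\mathcal Q \subseteq \mathcal Q'$, and by construction the only portals of $\mathcal Q'$ touching $\region$ are those in $\mathcal Q'_\region$. Consequently $S \cap \region = S \cap P^\mathit{LCA}_\region$ in the one-portal case and $S \cap \region = S \cap (P^\mathit{LCA}_\region \cup P^\mathit{DSC}_\region)$ in the two-portal case, which is what justifies treating these portals as the sole carriers of sources.

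For the one-portal case, $P^\mathit{LCA}_\region \cap \region$ is a chain inside $\region$ containing every source of $\region$. I would apply \Cref{lem:spf:line} to this chain to obtain a shortest path forest on the chain itself, then invoke \Cref{lem:spf:propagation:algorithm} with the chain as the dividing portal, the empty set as the side $A$, and the remainder of $\region$ as the side $B$; if the remainder splits into several connected components, the propagation is applied to each independently. For the two-portal case, the one-portal procedure is run once for each of $P^\mathit{LCA}_\region$ and $P^\mathit{DSC}_\region$, yielding a $(S \cap P^\mathit{LCA}_\region)$-shortest path forest and a $(S \cap P^\mathit{DSC}_\region)$-shortest path forest for all of $\region$, and then \Cref{lem:spf:merging:algorithm} combines them into the desired $(S \cap \region)$-shortest path forest. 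Correctness in both cases is then just a chain of invocations of the three cited lemmas together with the source-containment observation.

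The runtime bound is immediate, since each invoked primitive runs in $O(\log n)$ rounds and only a constant number are invoked per region. The main obstacle I expect is the parallelism argument across regions: adjacent regions share only their boundary portal amoebots or the marked split amoebots, and I have to check that each such shared amoebot has enough partition sets to simultaneously host the circuits of both neighboring computations without collision, and that all regions' runs can be globally synchronized so that the merging step is fired only after both sub-forests are complete. Since each primitive is deterministic with a known round count and each shared amoebot sits on at most a constant number of regions, this should reduce to a bookkeeping argument about pin partitions and a single global synchronization barrier between phases.
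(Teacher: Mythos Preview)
Your proposal is correct and follows essentially the same approach as the paper: the paper's proof is a terse citation of \Cref{lem:spf:line,lem:spf:propagation:algorithm,lem:spf:merging:algorithm} for correctness and runtime of the main procedure, together with \Cref{lem:portal:election,lem:portal:rap,lem:spf:ancestor} for the preprocessing step that identifies $P^\mathit{LCA}_\region$ and distinguishes the one-portal from the two-portal case. The only thing you omit is explicitly accounting for that preprocessing step and its $O(\log k)$ cost, which is absorbed into the $O(\log n)$ bound; otherwise your argument (including the source-containment observation $S\cap\region = S\cap(P^\mathit{LCA}_\region\cup P^\mathit{DSC}_\region)$, which the paper states without justification) is more detailed than the paper's own proof.
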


\iftoggle{full}{
\begin{proof}
    The correctness of the preprocessing step follows from \Cref{lem:portal:election,lem:portal:rap,lem:spf:ancestor}.
    The correctness of the remaining procedure follows from \Cref{lem:spf:line,lem:spf:propagation:algorithm,lem:spf:merging:algorithm}.
    By \Cref{lem:portal:election,lem:portal:rap}, the preprocessing step requires $O(\log k)$ rounds.
    By \Cref{lem:spf:line,lem:spf:propagation:algorithm,lem:spf:merging:algorithm}, the remaining procedure requires $O(\log n)$ rounds.
    Altogether, $O(\log n)$ rounds are necessary.
\end{proof}
}{}


\subsubsection{Merging Step}

In this subsection, we explain the merging step of our shortest path forest algorithm.
Let $P$ be a portal.
Let $\mathcal\region_P$ denote the set of all regions that intersect $P$.
Let $\region' = \bigcup_{\region \in \mathcal\region_P} \region$.
We merge the $(S \cup \region)$-shortest path forest of each $\region \in \mathcal\region_P$ into an $(S \cup \region')$-shortest path forest for $\region'$ in two phases as follows.

In the first phase, we iteratively merge all regions on both sides of $P$, respectively, as follows.
Consider the regions on one side.
Recall that by construction, these regions are separated by marked amoebots in $P$.
Initially, let $M$ denote the set of all marked amoebots.
We will remove amoebots from $M$ after each iteration.
Each iteration consists of three steps.

In the first step, we check the termination condition, i.e., whether we have already merged all regions, as follows.
For that, we have to check whether there are any marked amoebots left.
First, the amoebots establish a circuit that connects all amoebots in $P$.
Then, each marked amoebot beeps.
Clearly, we terminate the first phase if there is no beep.
Otherwise, we proceed to the next step.

\iftoggle{full}{

In the second step, we divide the regions into pairs as follows.
We apply a single iteration of the PASC algorithm on $P$ with $M$ (see \Cref{fig:spf:pairing}).
Each amoebot in $M$ obtains the parity of its prefix sum.
Let $M'$ denote the amoebots that have an odd parity.
For each amoebot in $M'$, we pair the regions that are separated by it.

\begin{figure}[tbp]
    \centering
    \includegraphics{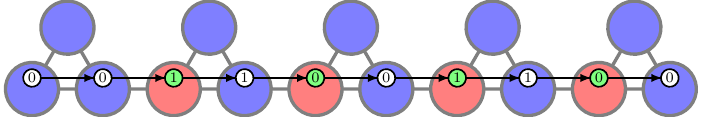}
    \caption{
        Pairing of regions.
        The red amoebots indicate $M$.
    }
    \label{fig:spf:pairing}
\end{figure}

In the third step, we merge each pair as follows.
Consider an amoebot $s \in M'$.
Let $\region_W$ and $\region_E$ denote the region to the west and east of $s$, respectively.
Observe that each shortest path from $S \cap \region_W$ to $\region_E$ and each shortest path from $S \cap \region_E$ to $\region_W$ has to traverse $s$.
Similarly to the second phase of the propagation algorithm, this allows us to apply our shortest path tree algorithm with $s$ as the source to propagate the shortest path forest of both regions into the other region, respectively.
Then, we apply the merging algorithm to compute an $(S \cap \region'')$-shortest path forest for $\region'' = \region_W \cup \region_E$.
Finally, we remove $M'$ from $M$ and proceed with the next iteration.

}{

In the second step, we divide the regions into pairs as follows.
We apply the PASC algorithm to compute a set $M'$ that contains every second amoebot in $M$.
For each amoebot in $M'$, we pair the regions that are separated by it.

In the third step, we merge each pair as follows.
Consider an amoebot $s \in M'$.
Let $\region_W$ and $\region_E$ denote the region to the west and east of $s$, respectively.
Observe that each shortest path from $S \cap \region_W$ to $\region_E$ and each shortest path from $S \cap \region_E$ to $\region_W$ has to traverse $s$.
This allows us to apply our shortest path tree algorithm with $s$ as the source to propagate the shortest path forest of both regions into the other region, respectively.
Then, we apply the merging algorithm to compute an $(S \cap \region'')$-shortest path forest for $\region'' = \region_W \cup \region_E$.
Finally, we remove $M'$ from $M$ and proceed with the next iteration.

}

After the first phase has terminated, we are left with one region to the north and south of $P$, respectively.
Let $\region_N$ and $\region_S$ denote these regions.
In the second phase, we merge both regions as follows.
First, we apply the propagation algorithm to compute an $(S \cap \region_N)$-shortest path forest for $\region''' = \region_N \cup \region_S$.
Then, we apply the propagation algorithm again to compute an $(S \cap \region_S)$-shortest path forest for $\region'''$.
Finally, we apply the merging algorithm to compute an $(S \cap \region''')$-shortest path forest for $\region'''$.

\begin{lemma}
\label{lem:spf:merging_step}
    Let $P$ be a portal.
    Let $\mathcal\region_P$ denote the set of all regions that intersect $P$.
    Let $\region' = \bigcup_{\region \in \mathcal\region_P} \region$.
    Our shortest path forest algorithm merges the $(S \cup \region)$-shortest path forest of each $\region \in \mathcal\region_P$ into an $(S \cup \region')$-shortest path forest within $O(\log n \log k)$ rounds.
\end{lemma}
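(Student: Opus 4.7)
The plan is to verify correctness of Phase~1 and Phase~2 separately and then to sum the round complexities. I expect the main obstacle to be controlling the separator structure across Phase~1 iterations, so that each parallel batch of pairwise merges remains valid.

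For Phase~1 correctness I would argue that whenever a marked amoebot $s \in M$ currently separates two surviving regions $\region_W$ and $\region_E$ on the same side of $P$, every shortest path between a source in $S \cap \region_W$ and a vertex in $\region_E$ (and vice versa) must traverse $s$, because by construction $s$ is a cut vertex of $\region_W \cup \region_E$. This mirrors the second phase of the propagation algorithm: running the shortest path tree algorithm (\Cref{th:spt}) with $s$ as the source extends the $(S \cap \region_W)$-shortest path forest from $\region_W$ to the union $\region_W \cup \region_E$, using the offset $\dist(S \cap \region_W, s)$, and symmetrically for $\region_E$; applying the merging algorithm (\Cref{lem:spf:merging:algorithm}) then yields an $(S \cap (\region_W \cup \region_E))$-shortest path forest. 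For the pairing step I would note that running PASC along the chain of marks in $P$ equips each $m \in M$ with the parity of its prefix-sum index, so selecting the odd-indexed marks partitions $M$ into disjoint separators, and the corresponding region pairs are pairwise disjoint and can be merged in parallel.

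For Phase~2, after Phase~1 we are left with a single region $\region_N$ north of $P$ and a single region $\region_S$ south of $P$, each containing $P$. By \Cref{obs:portal}, every shortest path between a source on one side of $P$ and a vertex on the other traverses $P$, which is precisely the precondition of the propagation algorithm. Two applications of \Cref{lem:spf:propagation:algorithm} (one per side) followed by the merging algorithm therefore produce an $(S \cap \region''')$-shortest path forest for $\region''' = \region_N \cup \region_S$; since the marks have all been consumed by Phase~1, this already covers $\region'$.

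For the runtime, each iteration of Phase~1 performs a constant number of invocations of PASC, the shortest path tree algorithm, and the merging algorithm, each costing $O(\log n)$ rounds. Since each iteration halves $|M|$ and initially $|M| = O(k)$, Phase~1 terminates within $O(\log k)$ iterations, for a total of $O(\log n \log k)$ rounds; Phase~2 contributes another $O(\log n)$. The delicate point in the correctness argument, which I expect to be the main subtle step, is checking that after partially merging some pairs the surviving marks still cleanly separate their two adjacent surviving regions; this follows by induction on the iteration number, because the chain of regions on each side of $P$ only gets coarsened and the remaining marks inherit their separator role from the original splitting construction.
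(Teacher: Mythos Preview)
Your proposal is correct and follows essentially the same approach as the paper: correctness is deduced from \Cref{th:spt}, \Cref{lem:spf:propagation:algorithm}, and \Cref{lem:spf:merging:algorithm}, and the runtime comes from $O(\log k)$ halving iterations of cost $O(\log n)$ each, plus $O(\log n)$ for Phase~2. One small refinement: the paper treats the termination check and the pairing step (a \emph{single} PASC iteration for parity) as $O(1)$-round steps rather than $O(\log n)$, and it justifies $|M| = O(k)$ via $|\mathcal Q'| = O(k)$, but this does not affect your final bound.
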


\iftoggle{full}{
\begin{proof}
    The correctness follows from \Cref{th:spt}, \Cref{lem:spf:propagation:algorithm,lem:spf:merging:algorithm}.
    Note that $|\mathcal\region_P| = O(k)$ since $\mathcal Q' = O(k)$.
    Since each iteration of the first phase halves the number of regions, we need $O(\log k)$ iterations.
    The first and second step of each iteration only require a constant number of rounds.
    By \Cref{th:spt} and \Cref{lem:spf:merging:algorithm}, the third step of each iteration requires $O(\log n)$ rounds.
    Hence, the first phase requires $O(\log n \log k)$.
    By \Cref{lem:spf:propagation:algorithm,lem:spf:merging:algorithm}, the second phase requires $O(\log n)$ rounds.
\end{proof}
}{}


\subsubsection{Putting everything together}

\iftoggle{full}{


We are now (almost) able to put our shortest path forest algorithm together.
First, we compute a set $\mathcal Q'$ of portals (see \Cref{lem:spf:dividing_step:1}).
Then, we decompose the amoebot structure along the portals in $\mathcal Q'$ into smaller regions (see \Cref{lem:spf:dividing_step:2}), and compute a shortest path forest for each of these regions (see \Cref{lem:spf:base_case}).
Finally, we iteratively merge the regions along subsets of $\mathcal Q'$ (see \Cref{lem:spf:merging_step}).

So, it only remains to define and compute the subsets for each iteration.
The only requirement for the subsets is that each region intersects at most one portal of the subset.
Otherwise, we would have to merge a region along two portals at once.

In order to minimize the number of iterations, we utilize a $\mathcal Q'$-centroid decomposition tree.
Note that by definition, the $\mathcal Q'$-centroids of the same depth are separated by the $\mathcal Q'$-centroids of the previous depth.
This satisfies our requirement.

Unfortunately, since we have to merge the regions from the leaves of the $\mathcal Q'$-centroid decomposition tree to its root, and since we are not able to store the depths of the $\mathcal Q'$-centroids, we have to recompute the $\mathcal Q'$-centroid decomposition tree for each iteration (see \Cref{lem:portal:decomposition}).
Note that we always compute the same $\mathcal Q'$-centroid decomposition tree since our decomposition primitive is deterministic.
In order to identify the correct $\mathcal Q'$-centroids for the current iteration, we utilize the binary counter technique by Padalkin et al.\ \cite{DBLP:conf/dna/PadalkinSW22}.
We refer to \cite{DBLP:conf/dna/PadalkinSW22} for more details.

\begin{theorem}
    The shortest path forest algorithm computes an $S$-shortest path forest within $O(\log n \log^2 k)$ rounds.
\end{theorem}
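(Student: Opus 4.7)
The plan is to combine the preceding ingredients modularly and then bound the total cost by amortizing across the levels of the $\mathcal Q'$-centroid decomposition tree. First, I would argue correctness by induction on the centroid decomposition. After the preprocessing (computing $\mathcal Q'$ via \Cref{lem:spf:dividing_step:1}, carving the structure into regions via \Cref{lem:spf:dividing_step:2}, and computing a shortest path forest for each region via \Cref{lem:spf:base_case}), we have an $(S \cap \region)$-shortest path forest for every region $\region$. An induction on the depth of the $\mathcal Q'$-centroid decomposition tree, read bottom-up, then shows that after processing the $\mathcal Q'$-centroids of depth $i$ we have an $(S \cap \region^\star)$-shortest path forest for every subtree $\region^\star$ corresponding to a node of depth $i$ in the decomposition: the inductive step is exactly one application of \Cref{lem:spf:merging_step} at the portal corresponding to that centroid, whose correctness was already established. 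At the top level (depth $0$) the subtree is the whole amoebot structure, giving the desired $S$-shortest path forest.

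For the runtime, I would bound each of the $O(\log k)$ levels of the decomposition separately. Within a single level, different centroids correspond to vertex-disjoint subtrees that share only the centroid portals of the previous level; since the subsets of $\mathcal Q'$ selected at this level pairwise separate the regions that must be merged (each region intersects at most one such centroid portal by construction of $\mathcal Q'$ and the base-case property), the merges at all centroids of the level can be carried out in parallel using disjoint circuits. By \Cref{lem:spf:merging_step}, this costs $O(\log n \log k)$ rounds per level. Identifying which portals are centroids of the current depth requires recomputing the decomposition via \Cref{lem:portal:decomposition} at cost $O(\log^2 k)$, combined with the binary-counter technique of Padalkin et al.\ to select the correct depth deterministically. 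Summed over $O(\log k)$ levels this gives $O(\log k) \cdot \bigl( O(\log^2 k) + O(\log n \log k) \bigr) = O(\log n \log^2 k)$ rounds, which dominates the $O(\log k) + O(\log n)$ cost of preprocessing.

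The main obstacle is justifying the parallelism across the centroids at a single level: I must argue that the merges do not interfere, both conceptually (each region touches only one active centroid portal) and in terms of circuits (the partition sets and beeps used by different invocations of \Cref{lem:spf:merging_step} act on disjoint sets of amoebots, except possibly at the previous-level centroid portals which no longer carry unmerged boundaries). The key structural fact is that after the initial division, each region intersects at most two portals of $\mathcal Q'$, and once merged along the deepest centroid in its ancestry it effectively becomes a region whose remaining boundary lies only on higher-level centroids; this invariant propagates up and ensures that the disjointness required for parallel merging is preserved. With this invariant in hand, the correctness-plus-runtime argument sketched above directly yields the claimed $O(\log n \log^2 k)$ bound.
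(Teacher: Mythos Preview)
Your proposal is correct and follows essentially the same approach as the paper: correctness is chained through \Cref{lem:spf:dividing_step:1,lem:spf:dividing_step:2,lem:spf:base_case,lem:spf:merging_step,lem:portal:decomposition}, and the runtime is obtained by multiplying the $O(\log k)$ levels of the decomposition by the per-level cost $O(\log^2 k) + O(\log n \log k)$. Your explicit bottom-up induction and your discussion of why merges at one level can run in parallel are spelled out in more detail than in the paper (which relegates the ``each region intersects at most one active portal'' observation to the algorithm description preceding the theorem), but the underlying argument is the same.
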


\iftoggle{full}{
\begin{proof}
    The correctness follows from \Cref{lem:spf:dividing_step:1,lem:spf:dividing_step:2,lem:spf:base_case,lem:spf:merging_step,lem:portal:decomposition}.
    By \Cref{lem:spf:dividing_step:1}, computing $\mathcal Q'$ requires $O(\log k)$ rounds.
    By \Cref{lem:spf:dividing_step:2}, decomposing the amoebot structure into smaller regions requires $O(1)$ rounds.
    By \Cref{lem:spf:base_case}, computing a shortest path forest for each of these regions requires $O(\log n)$ rounds.
    Each iteration of the merging phase consists of computing a set of portals and of merging regions along these portals.
    By \Cref{lem:portal:decomposition}, the former requires $O(\log^2 k)$ rounds, and by \Cref{lem:spf:merging_step}, the latter $O(\log n \log k)$ rounds.
    By \Cref{lem:decomposition:height}, we need $O(\log k)$ iteration.
    Overall, the shortest path algorithm requires $O(\log n \log^2 k)$ rounds.
\end{proof}
}{}

So far, the shortest path forest algorithm has ignored the set $D$ of destinations.
In order to obtain an $(S,D)$-shortest path forest, it applies the root and prune primitive on each shortest path tree $T_s$ with $s$ and $D$ as parameters (compare to \Cref{sec:spt}).

\begin{corollary}
    The shortest path forest algorithm computes an $(S,D)$-shortest path forest within $O(\log n \log^2 k)$ rounds.
\end{corollary}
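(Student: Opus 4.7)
The plan is to reduce this corollary to the preceding theorem by an additional pruning pass. By the theorem, the shortest path forest algorithm produces an $S$-shortest path forest (equivalently, an $(S,X)$-shortest path forest) in $O(\log n \log^2 k)$ rounds, so $X = \bigcup_{s \in S} V_s$ is partitioned into vertex-disjoint trees $T_s = (V_s,E_s)$ satisfying properties 1, 3, 4, and 5 of the definition. The only property that can fail for an $(S,D)$-shortest path forest is property 2: a leaf of $T_s$ might lie in $X \setminus (S \cup D)$. So the task reduces to stripping from each $T_s$ every subtree that contains no destination.

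For this I would invoke the root and prune primitive (\Cref{lem:rap}) on each tree $T_s$ independently, using $s$ as the designated root and $Q_s = D \cap V_s$ as the target set; each amoebot already knows whether it belongs to $D$ by assumption and knows its parent in $T_s$ from the forest computation. Because the $V_s$ are pairwise disjoint, the primitives run fully in parallel on disjoint vertex sets and edge sets, with no interference between instances. A standard synchronization step at the end aligns the varying termination times across the trees.

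After the pruning, each surviving amoebot in $T_s$ lies on a shortest path from $s$ to some destination in $Q_s$, and the new leaves lie in $S \cup D$ as required; properties 3, 4, and 5 are preserved since the primitive only removes vertices whose entire subtree is destination-free, and already-rooted children pointers are retained (\Cref{cor:rap:properties}). For the runtime, each parallel invocation costs $O(\log |Q_s|) \le O(\log \ell) \le O(\log n)$ rounds by \Cref{lem:rap}. Adding this to the $O(\log n \log^2 k)$ bound of the theorem leaves the total at $O(\log n \log^2 k)$.

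The only mildly delicate point, and what I would check carefully, is that the trees $T_s$ output by the algorithm really are vertex-disjoint and that each amoebot in $V_s \setminus \{s\}$ already knows its unique parent toward $s$; both are guaranteed by properties 3 and 5 together with the definition of ``$X$ computes a shortest path forest.'' Everything else is a bookkeeping invocation of already-established primitives, so no new difficulty arises.
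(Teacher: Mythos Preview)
Your proposal is correct and follows exactly the approach the paper sketches: after obtaining the $S$-shortest path forest from the preceding theorem, apply the root and prune primitive in parallel on each tree $T_s$ with root $s$ and target set $D$ (restricted to $V_s$), at an additional cost of $O(\log \ell) \le O(\log n)$ rounds, which is absorbed into $O(\log n \log^2 k)$. The paper states this in one sentence before the corollary and gives no separate proof, so your write-up is simply a more detailed rendering of the same argument.
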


}{


We are now able to put our shortest path forest algorithm together.
First, we compute a set $\mathcal Q'$ of portals (see \Cref{lem:spf:dividing_step:1}).
Then, we decompose the amoebot structure along the portals in $\mathcal Q'$ into smaller regions (see \Cref{lem:spf:dividing_step:2}), and compute a shortest path forest for each of these regions (see \Cref{lem:spf:base_case}).
Finally, we iteratively merge the regions along subsets of $\mathcal Q'$ (see \Cref{lem:spf:merging_step}).
In order to minimize the number of iterations, we utilize a $\mathcal Q'$-centroid decomposition tree (see \Cref{lem:portal:decomposition}).

So far, the shortest path forest algorithm has ignored the set $D$ of destinations.
In order to obtain an $(S,D)$-shortest path forest, it applies the root and prune primitive on each shortest path tree $T_s$ with $s$ and $D$ as parameters (compare to \Cref{sec:spt}).

\begin{theorem}
    The shortest path forest algorithm computes an $(S,D)$-shortest path forest within $O(\log n \log^2 k)$ rounds.
\end{theorem}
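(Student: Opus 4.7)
The plan is to assemble the algorithm from the building blocks established above and argue correctness and runtime level by level. I would first run the dividing step to compute $\mathcal Q'$ (\Cref{lem:spf:dividing_step:1}) and decompose the structure into regions (\Cref{lem:spf:dividing_step:2}), then apply \Cref{lem:spf:base_case} inside each region $\region$ to obtain an $(S \cap \region)$-shortest path forest. The global $S$-shortest path forest arises from iteratively merging along the portals of $\mathcal Q'$, and destinations are handled by one final root and prune pass on each resulting tree $T_s$ with parameter $D$, which strips every subtree containing no destination and yields the required $(S,D)$-shortest path forest.

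The heart of the analysis is the order in which portals are merged. I would use the $\mathcal Q'$-centroid decomposition tree from \Cref{lem:portal:decomposition}, processing portals from the deepest depth upwards. The decisive property is that every region meets at most two portals of $\mathcal Q'$ by construction, and two centroids at equal depth are separated in the decomposition by a centroid of strictly smaller depth, so the families of still-unmerged regions that they act upon are disjoint. Hence one may invoke \Cref{lem:spf:merging_step} in parallel along all centroids at the current depth without interference. Since the decomposition primitive is deterministic, re-executing it yields the same tree in every iteration, and a binary counter technique in the style of \cite{DBLP:conf/dna/PadalkinSW22} suffices to pick off the centroids at the current depth, even though the amoebots lack the memory to store the depths computed in the first execution.

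For the runtime I sum the contributions. The preprocessing costs $O(\log k) + O(1) + O(\log n) = O(\log n)$ by the dividing and base case lemmas. Since $|\mathcal Q'| = O(k)$, the centroid decomposition tree has height $O(\log k)$, so the merging phase performs $O(\log k)$ iterations. In each iteration, re-running the decomposition primitive costs $O(\log^2 k)$ rounds (\Cref{lem:portal:decomposition}) and the parallel merging step costs $O(\log n \log k)$ rounds (\Cref{lem:spf:merging_step}), the latter dominating; the total merging cost is therefore $O(\log n \log^2 k)$. The closing root and prune pass adds $O(\log \ell) = O(\log n)$, which is absorbed. The main obstacle I anticipate is verifying rigorously that the parallel merges within a single iteration are truly non-interfering: the argument reduces to showing that, after removing all centroids of strictly smaller depth, the centroids at the current depth lie in different components of the residual region graph, so each can safely orchestrate the merge of its own two flanks. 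Once this geometric separation is in place, \Cref{lem:spf:merging_step} applies simultaneously at every portal of the current depth and the bound telescopes across the $O(\log k)$ levels.
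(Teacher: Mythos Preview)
Your proposal is correct and follows essentially the same approach as the paper: divide via $\mathcal Q'$, solve the base case per region, merge along the $\mathcal Q'$-centroid decomposition tree from the leaves up (re-running the deterministic decomposition each iteration with the binary counter technique of \cite{DBLP:conf/dna/PadalkinSW22}), and finish with a root and prune pass for $D$. Your separation argument for the parallel merges and your runtime accounting match the paper's reasoning step for step.
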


}



\section{Conclusion and Future Work}

In this paper, we have shown how to construct shortest path forests in polylogarithmic time.
However, we are not aware of any non-trivial lower bounds and leave their investigation for future work.
Furthermore, the presented algorithms do not work on amoebot structures with holes since \Cref{lem:portal_graph:tree,lem:portal_graph:triangular} do not hold anymore.
Hence, it would be interesting to consider shortest path problems in general amoebot structures.


\bibliography{literature}

\end{document}